\documentclass[11pt]{article}
\usepackage{graphicx}
\usepackage{pgfplots}

\usepackage[colorlinks=true,citecolor=blue]{hyperref}
\usepackage{natbib}
\usepackage{graphicx}
\usepackage{amsfonts}
\usepackage{amsmath}
\usepackage{amssymb}
\usepackage{url}
\usepackage{fancyhdr}
\usepackage{indentfirst}
\usepackage{enumerate}
\usepackage{titlesec}
\usepackage{amsthm}
\usepackage{subcaption}
\usepackage{dsfont}
\usepackage[misc]{ifsym}

%\usepackage{refcheck}

%\setcounter{MaxMatrixCols}{10}

%\pgfplotsset{compat=1.8}
\theoremstyle{definition}

\newtheorem{example}{Example}

\newtheorem{definition}{Definition}

\theoremstyle{plain}
\newtheorem{theorem}{Theorem}

\newtheorem{proposition}{Proposition}

\theoremstyle{remark}
\newtheorem{remark}{Remark}

\usepackage{cases}

\theoremstyle{definition}

\def\N{\mathbb{N}}
\def\P{\mathbb{P}}
\def\p{\mathbb{P}}
\def\E{\mathbb{E}}

\def\R{\mathbb{R}}

\def\X{\mathcal{X}}

\def\X{\mathcal{X}}

\def\d{\mathrm{d}}
\DeclareMathOperator*{\esssup}{ess\text{-}sup}
\DeclareMathOperator*{\essinf}{ess\text{-}inf}

\newcommand{\VaR}{\mathrm{VaR}}
\newcommand{\ES}{\mathrm{ES}}

\newcommand{\RVaR}{\mathrm{RVaR}}

\renewcommand{\(}{\left(}
\renewcommand{\)}{\right)}

\renewcommand{\]}{\right]}
\DeclareMathOperator*{\argmin}{arg\,min}
\DeclareMathOperator*{\argmax}{arg\,max}
\def\laweq{\buildrel \d \over =}

\usepackage[onehalfspacing]{setspace}

\usepackage{bm}
\usepackage{tikz-qtree}
\usepackage{tikz}

\def\id{\mathds{1}}

\setlength{\bibsep}{1pt}

 % change back when removing all comments

\topmargin -1.30cm \oddsidemargin -0.00cm \evensidemargin 0.0cm
\textwidth 16.56cm \textheight 23.20cm

\parindent 5ex

\title{Risk exchange under infinite-mean Pareto models}
 %\title{Stochastic dominance for infinite-mean Pareto distributions}

\author{Yuyu Chen\thanks{Department of Economics, University of Melbourne,  Australia. \Letter~{\scriptsize\url{yuyu.chen@unimelb.edu.au}}} \and Paul Embrechts\thanks{RiskLab, Department of Mathematics \& ETH Risk Center, ETH Zurich,  Switzerland. \Letter~{\scriptsize\url{embrechts@math.ethz.ch}}}
\and Ruodu Wang\thanks{Department of Statistics and Actuarial Science, University of Waterloo,  Canada. \Letter~{\scriptsize\url{wang@uwaterloo.ca}}}
%\and Yuming Wang\thanks{School of Mathematical Sciences, Peking University, China. \Letter~{\scriptsize\url{wangyuming@pku.edu.cn}}} 
%\and Wenhao Zhu\thanks{School of Mathematical Sciences, Peking University, China. \Letter~{\scriptsize\url{zhuwenhao@pku.edu.cn}}}
}
\date{\today}

\begin{document}
	\maketitle
	\begin{abstract} 
%We find the perhaps surprising inequality
%that the weighted average of independent and identically distributed Pareto random variables with infinite mean is larger than one such random variable in the sense of first-order stochastic dominance.
%This result holds for more general models including super-Pareto distributions, negative dependence, and triggering events.
% that the weighted average of negatively dependent   super-Pareto random variables, possibly triggered by catastrophic events,  is larger than one such random variable in the sense of first-order stochastic dominance. The class of super-Pareto distributions is  extremely heavy-tailed and it includes the class of  infinite-mean Pareto distributions. 
%The result continues to hold if the super-Pareto random variables are  negatively dependent and  caused by triggering events.
We study the optimal decisions and equilibria of agents who aim to minimize their risks by allocating their positions over extremely heavy-tailed (i.e., infinite-mean) and possibly dependent losses. The loss distributions of our focus are super-Pareto distributions, which include the class of extremely heavy-tailed Pareto distributions. Using a recent result on stochastic dominance, we show that for a portfolio of super-Pareto losses, non-diversification is preferred by decision makers equipped with well-defined and monotone risk measures. The phenomenon that diversification is not beneficial in the presence of super-Pareto losses is further illustrated by an equilibrium analysis in a risk exchange market. First, agents with super-Pareto losses will not share risks in a market equilibrium. Second, transferring losses from agents bearing super-Pareto losses to external parties without any losses may arrive at an equilibrium which benefits every party involved.  
%The empirical studies show that extremely heavy tails exist in real datasets. 

% \textbf{JEL classification}: C02; C62; G22. 

\textbf{Keywords}: super-Pareto distributions; diversification; risk exchange; equilibrium; risk measures. 
	\end{abstract}

 %\noindent\rule{\textwidth}{0.5pt}
\section{Introduction}\label{sec:1}

Over the last decades, the insurance industry has observed a rising trend in both the frequency and magnitude of huge losses caused by natural disasters and man-made catastrophes (e.g., \cite{embrechts1999extreme}). In quantitative risk management, Pareto distributions (or generalized Pareto distributions) have been widely used in modeling catastrophic losses (see \cite{MFE15}), mainly due to their unique role in Extreme Value Theory (EVT): By the 
 Pickands-Balkema-de Haan Theorem (\cite{P75} and \cite{BD74}),
  generalized Pareto distributions are the only possible non-degenerate limiting distributions of excess-of-loss random variables.  
  Although statistical and actuarial methods often focus on finite-mean EVT models, 
 data analyses in the literature show that the best-fitted models for many  catastrophic losses in various contexts do not have finite means. 
 %see  \cite{IJW09} for earthquake losses, \cite{rizzo2009new} for  wind catastrophic losses, and \cite{hofert2012statistical} for losses in nuclear power accidents. Besides catastrophic losses, losses caused by operational risks also lead to infinite-mean Pareto models; see \cite{moscadelli2004modelling} for an analysis of the loss data
%collected by the Basel committee.
At the end of this section, we collect many examples and related literature on infinite-mean Pareto-type models.

It is well known that infinite-mean models lead to intriguing phenomenon in risk management; see e.g., \cite{IJW11}. 
This paper focuses on the following question. 
\begin{enumerate}
\item[\textbf{Q.}]
Suppose that there is a pool of identically distributed \emph{extremely heavy-tailed} losses (i.e., infinite mean), possibly statistically dependent. 
Each agent (e.g., a reinsurance provider) needs to decide whether and how to diversify in this pool.
\emph{Without knowing the  preferences} of the agents,  
what can we say about the   optimal decisions and equilibria in a  multiple-agent setting? 
\end{enumerate} 
%This question will be used to motivate our model and main result. 
% The multiple-agent version of this question is addressed in the equilibrium model in Section \ref{app:r1}.

Assume for simplicity that the losses in the pool, denoted by $X_{1},\dots,X_{n}$, are independent and identically distributed (iid) Pareto losses with infinite mean. Let $\theta_{1},\dots,\theta_n\ge 0$ with $\sum_{i=1}^{n}\theta_i=1$ be the exposures of an agent over $X_{1},\dots,X_{n}$.
Our analysis is built on a recent result of 
\cite{CEW24} that gives
\begin{equation}\label{eq:*}
X_{1}\le_{\rm st}\theta_{1}X_{1}+\dots+\theta_{n}X_{n},
\end{equation}
where $\le_{\rm st}$ stands for \emph{first-order stochastic dominance}; that is, for two random variables $X$ and $Y$, $X\le_{\rm st}Y$ if $\p(X\le x)\ge \p(Y\le x)$ for all $x\in\R$. This inequality is shown to be strict when the agent holds at least two losses in their portfolio. Intuitively, the left-hand side of \eqref{eq:*}  is a portfolio concentrated on $X_1$, and the right-hand side is a diversified portfolio of $X_1,\dots,X_n$. Inequality \eqref{eq:*} implies that if the agent aims to minimize their default probability, the optimal decision is non-diversification; this observation is generalized in Section \ref{sec:4}.

 Besides iid Pareto losses with infinite mean, \eqref{eq:*} and its implications also hold for weakly negatively associated and identically distributed super-Pareto losses by Theorem 1 of \cite{CEW24}, {as well as many other loss models considered in \cite{I05}, \cite{ALO24}, \cite{CS24}, and \cite{M24}. We will focus on super-Pareto losses in this paper, while keeping in mind that the results work for any models satisfying \eqref{eq:*}.} 
  We briefly summarize in Section \ref{sec:2} the definitions of super-Pareto distributions and weak negative association. 
% Given the risk management implications of \eqref{eq:*},
 Section \ref{sec:extension} presents several generalizations of \eqref{eq:*} beyond weakly negatively associated super-Pareto models considered by \cite{CEW24}. Propositions \ref{prop:convolution} and \ref{prop:mixture} provide some high-level conditions for generalizing the marginal distributions and the dependence structure.  Proposition \ref{prop:tail} provides a model for losses that are super-Pareto only in the tail region and Proposition \ref{cor:random} considers a classic insurance model.

 We discuss in Section \ref{sec:4} useful decision models in the presence of extremely heavy-tailed losses and the implications of \eqref{eq:*} and related inequalities on the risk management decision of a single agent. First, although  \eqref{eq:*}  never holds for non-degenerate random variables with finite mean (see Proposition 2 in \cite{CEW24}), we show that a similar preference for non-diversification exists for truncated super-Pareto losses, as long as the thresholds are high enough (Theorem \ref{prop:bounded}). Second,
 for super-Pareto losses, the action of diversification increases the risk \emph{uniformly for all risk preferences}, such as VaR, expected utilities, and distortion risk measures, as long as the risk preferences are monotone and well-defined.  The increase of the portfolio risk is strict, and it provides an important implication in decision making: For an agent who faces   super-Pareto losses   and aims to minimize their risk by choosing a position across these losses, the optimal decision is to take only one of the super-Pareto losses (i.e., no diversification).   

We proceed to study the equilibria of a risk exchange market for super-Pareto losses  in Section \ref{app:r1}.  
 As individual agents do not benefit from diversification over super-Pareto losses,
 % diversification  in a risk exchange market where  super-Pareto losses are present, 
 we may expect that agents will not share their losses. Indeed, if each agent in the market is associated with an initial position in one super-Pareto loss, the agents will merely exchange the entire loss position instead of risk sharing in an equilibrium model (Theorem \ref{th:equil} (i)).  The situation becomes quite different if the agents with initial losses can transfer their losses to external parties.
 If the external agents have a stronger risk tolerance than the internal agents, both parties may benefit by transferring losses from the internal to the external agents (Theorem \ref{th:external} (ii)). In Proposition \ref{prop:equil-ES}, we show that agents prefer to share losses with finite mean among themselves;  this is in  sharp contrast to the case of super-Pareto losses. 
% The above results are  consistent with the observations made by \cite{IJW11} based on a different model.

   In Section \ref{sec:6}, some  examples are presented to illustrate the presence  of extremely heavy tails in two real datasets in which the phenomenon of inequality \eqref{eq:*} or its generalizations can be empirically observed. %We proceed to study the diversification effects of extremely heavy-tailed Pareto losses with different tail indices. 
   Section \ref{sec:7} concludes the paper.  Some background on risk measures is put in Appendix \ref{app:A}.
Proofs of all results are put in Appendix \ref{app:proof}.

% \subsection{Review of infinite-mean Pareto-type models}
% \label{sec:rem1}

 \textbf{Review of infinite-mean Pareto-type models.}
The key assumption of our paper is that losses follow super-Pareto distributions, which have infinite mean. Whereas statistical models with some divergent higher moments are ubiquitous throughout the risk management literature, the infinite mean case needs more specific motivation. For power-tail data, a standard approach for the estimation of the underlying tail parameters is the Peaks Over Threshold (POT) methodology from EVT; see \cite{EKM97}. Other estimation methods  include the classic Hill estimators (see \cite{EKM97}) and the log-log rank-size estimation (see \cite{ibragimov2015heavy}). It is known that the Hill estimator may be sensitive to the dependence in the data and small sample sizes, and the log-log rank-size estimation can be biased in small samples; see  \cite{gabaix2011rank} for an improved version of log-log rank-size estimation. Below we discuss some examples from the literature leading to extremely heavy-tailed Pareto models.\footnote{For these examples, it turns out that infinite-mean models yield a better overall fit than finite-mean ones, although one can never say for sure that any real world dataset is generated by an infinite-mean model.}

In the parameterization used in Section \ref{sec:2}, a tail parameter $\alpha\le1$ corresponds to an infinite-mean Pareto model. 
% We discuss some examples of extremely heavy-tailed Pareto distributions  in the literature; extra data examples are provided in Section \ref{sec:62}.\footnote{These examples show that, at least, we cannot exclude the possibility that  infinite-mean models fit   these datasets better than finite-mean models.}  
%  As a standard practice of extreme value theory, a generalized Pareto distribution is usually fitted to the tail region of the data, thus losses exceeding a high threshold (our results can still apply in this case; see Proposition \ref{prop:tail} and Section \ref{sec:62}).
 \cite{IJW09} used  standard seismic theory to show that the tail indices $\alpha$ of earthquake losses lie in the range $[0.6,1.5]$. Estimated by \cite{rizzo2009new}, the tail indices $\alpha$ for some wind catastrophic losses are around $0.7$. \cite{hofert2012statistical} showed that the tail indices $\alpha$ of losses caused by nuclear power accidents are around  $[0.6,0.7]$; similar observations can be found in \cite{sornette2013exploring}. Based on data collected by the Basel Committee on Banking Supervision, \cite{moscadelli2004modelling} reported the tail indices $\alpha$ of (over $40000$) operational losses in $8$ different business lines to lie  in the range $[0.7,1.2]$, with $6$ out of the $8$ tail indices being less than $1$, with $2$ out of these $6$ significantly less than $1$ at  a $95\%$ confidence level. For a detailed discussion on the risk management consequences in this case, see \cite{NEC06}. Losses from cyber risk have tail indices $\alpha\in[0.6,0.7]$; see  \cite{EW19}, \cite{ES20} and  the references therein. In a standard Swiss Solvency Test document (\citet[p.~110]{FINMA2021}), 
most  major damage insurance losses are modelled by a  Pareto distribution with default parameter $\alpha $ in the range $[1,2]$, with  $\alpha=1$ attained by some aircraft insurance.  As discussed by \cite{beirlant1999tail}, some fire losses collected by the reinsurance broker AON Re Belgium have tail indices $\alpha$ around $1$. \cite{BC14} showed that several large commercial property losses collected from two Lloyd’s syndicates have tail indices $\alpha$ considerably less than $1$.  \cite{silverberg2007size} concluded that the tail indices $\alpha$ are less than $1$ for financial returns from some
technological innovations. The tail part of cost overruns in information technology projects can have tail indices $\alpha\le 1$; see \cite{flyvbjerg2022empirical}. In the model of \cite{cheynel2022fraud}, which considers managers’ fraudulent behavior, detected fraud size behaves like a power law with tail index 1. Besides large financial losses and returns, the number  of deaths in major earthquakes and pandemics modelled by Pareto distributions also has infinite mean; see \cite{clark2013note} and \cite{cirillo2020tail}.  City sizes and firm sizes follow Zipf's law ($\alpha\approx 1$); see \cite{gabaix1999zipf} and \cite{axtell2001zipf}. 
Heavy-tailed to extremely heavy-tailed models also occur in the realm of climate change and environmental economics. Weitzman's Dismal Theorem (see \cite{W09}) discusses the breakdown of standard economic thinking like cost-benefit analysis in this context. This led to an interesting discussion with William Nordhaus, a recipient of the 2018 Nobel Memorial Prize in Economic Sciences; see \cite{nordhaus2009analysis}. 

The above references exemplify the occurrence of infinite mean models. Our perspective on these examples and discussions is that if these models are the result of some careful statistical analyses, then the practicing modeler has to take a step back and carefully reconsider the risk management consequences. Of course, in practice, there are several methods available to avoid such extremely heavy-tailed models, like cutting off the loss distribution model at some specific level or tapering (concatenating a light-tailed distribution far in the tail of the loss distribution). In examples like those referred to above, such corrections often come at the cost of a great variability depending on the methodology used. It is in this context that our results add to the existing literature and modeling practice in cases where power-tail data play an important role.

%\end{remark}

 %\subsection{Notation}

   %We fix some notation. 

\section{Preliminaries}\label{sec:2}

 Throughout, random variables are defined on an atomless probability space $(\Omega,\mathcal F,\p)$. Denote by $\mathbb N$ the set of all positive integers and $\R_+$ the set of non-negative real numbers. For $n\in \N$, let $[n]=\{1,\dots,n\}$.  Denote by $\Delta_n$   the standard simplex, that is, $\Delta_n=\{(\theta_1,\dots,\theta_n)\in [0,1]^n:  \sum_{i=1}^n \theta_i=1\}$. 
For $x,y\in \R$, write $x\wedge y= \min\{ x,y\}$ 
%$x\vee y= \max\{ x,y\}$,
and $x_+=\max\{x,0\}$.  We write $\mathbf X\laweq \mathbf Y$ if $\mathbf X$ and $\mathbf Y$ have the same distribution. We always assume $n\ge 2$. Equalities and inequalities are
interpreted component-wise when applied to vectors. For any random variable $X$, its essential infimum is given by $z_X=\inf\{z\in \R: \p(X>z)>0\}$, and its distribution function is denoted by $F_X$. In this paper, all terms like ``increasing" and ``decreasing" are in the non-strict sense.

 We first introduce the super-Pareto distribution and weak negative association,
 and present the main result of \cite{CEW24}, on which most of our study is based. 
The distribution function of a Pareto random variable with parameters $\theta,\alpha>0$ is 
 \begin{align*}%\label{eq:PD}
  P_{\alpha,\theta}(x) = 1 -\left(\frac{\theta}{x}\right)^{\alpha},~~x\ge \theta.
  \end{align*}
 % For $X\sim P_{\alpha,\theta }$,   $q_X(p)=\theta (1-p)^{-1/\alpha}$ for $p\in [0,1)$. 
 The mean of $  P_{\alpha,\theta}$ is infinite (i.e., $  P_{\alpha,\theta}$ is extremely heavy-tailed) if and only if the tail parameter $\alpha \in (0,1]$. 
 % (i.e., $  P_{\alpha,\theta}$ is extremely heavy-tailed if $\alpha \in (0,1]$). 
 As it suffices to consider $  P_{\alpha,1}$ in this paper, we write   $ P_{\alpha,1} $ as $ \mathrm {Pareto}(\alpha)$.
 % We say that the $  P_{\alpha,\theta}$ distribution is \emph{extremely heavy-tailed} if $\alpha\le 1$. 
 %and it is \emph{moderately heavy-tailed} if $\alpha >1$.  

\begin{definition} 
A random variable   $X$ is \emph{super-Pareto} (or has a super-Pareto distribution)  if $X\laweq f(Y)$ for some increasing, 
convex, and non-constant function $f$ and $Y\sim \mathrm{Pareto}(1)$. 
% Moreover, $X$ is \emph{regular} if 
% $f(0)=0$ and $f(1)>0$.  
\end{definition}
As super-Pareto losses can be obtained by increasing and convex transforms of Pareto($1$) losses, their tails are heavier than (or equivalent to) those of Pareto($1$) losses and thus have infinite mean. All extremely heavy-tailed Pareto distributions are super-Pareto. {Other examples of super-Pareto distributions include generalized Pareto distributions and  Burr distributions, of which paralogistic distributions and log-logistic distributions are special cases; see \cite{CEW24}.}

\begin{definition}
A set $S\subseteq \R^{k}$, $k\in \N$, is  \emph{decreasing} 
if $\mathbf x\in S$ implies $\mathbf y\in S$ for all $\mathbf y\le \mathbf x$. Random variables $X_1,\dots,X_n$ are \emph{weakly negatively associated} if  
 for any $i\in[n]$,  decreasing set $S  \subseteq  \R^{n-1}$, and $x\in \R$ with $\p(X_i\le x)>0$,   it holds that 
 $
 \p(\mathbf X_{-i} \in S  \mid  X_i\le  x) \le \p(\mathbf X_{-i}\in S), $
 where $\mathbf X_{-i}=(X_1,\dots,X_{i-1}, X_{i+1},\dots,X_n)$.
\end{definition}
% Intuitively, \eqref{eq:WNA} means that if $X_i$ is small, $\mathbf X_{-i}$ is less likely to be small.
Weak negative association includes independence as a special case. Weak negative association  is implied by negative association (\cite{AS81}  and \cite{JP83}) and negative regression dependence (\cite{L66} and \cite{block1985concept}), and it implies negative orthant dependence (\cite{block1982some}). Multivaraite normal distributions with non-positive correlations are negatively associated and thus are weakly negatively associated.

This paper mainly considers weakly negatively associated and identically distributed (WNAID) super-Pareto random variables. This setting includes iid Pareto losses with infinite mean.
For random variables $X$ and $Y$, we write $X<_{\rm st}Y$ if $\p(X>x)<\p(Y>x)$ for all $x\in \R$ satisfying $\p(X>x)\in (0,1)$, and this will be referred to as strict stochastic dominance.\footnote{This condition is stronger than a different notion of strict stochastic dominance defined by $X\le _{\rm st} Y$ and $X\not \ge_{\rm st} Y$.}  
The following result serves as the basis for our study. 
\begin{theorem}[\cite{CEW24}]\label{thm:1}
Let
 $X_1,\dots,X_n$ be WNAID super-Pareto and $X\laweq X_1$. 
For $(\theta_1,\dots,\theta_n)\in\Delta_n$, we have
%  \begin{enumerate}[(i)]
% \item  
%Stochastic dominance holds:
\begin{equation} \label{eq:maineq1} 
X \le_{\rm st}\sum_{i=1}^n\theta_{i}X_{i} .
\end{equation}
Moreover, strict stochastic dominance $X <_{\rm st}\sum_{i=1}^n\theta_{i}X_{i}$ holds if $\theta_i>0$ for at least two $i\in [n]$.
% \item  If $X$ is regular, then for any events  $A_1,\dots,A_n$  independent of $(X_1,\dots,X_n)$ and event $A$ independent of $X$  satisfying $\p(A)=\sum_{i=1}^n \theta_i\p(A_i)$,  we have
% \begin{equation} \label{eq:maineq2} 
% X \id_A\le_{\rm st}\sum_{i=1}^n\theta_{i}X_{i}\id_{A_i}.
% \end{equation}
% \end{enumerate}
\end{theorem}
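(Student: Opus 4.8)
\noindent\emph{Proof proposal.} The plan is to reduce \eqref{eq:maineq1}, in two steps, to an explicit one–variable computation for two independent $\mathrm{Pareto}(1)$ losses.

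\textbf{Step 1: reduction to $\mathrm{Pareto}(1)$ marginals.} Write $X_i\laweq f(Y_i)$ with $f$ increasing, convex and non-constant and $Y_i\sim\mathrm{Pareto}(1)$. Since the conclusion depends only on the joint law of $(X_1,\dots,X_n)$, I would first argue---using atomlessness of $(\Omega,\mathcal F,\p)$, the fact that coordinate-wise increasing maps preserve weak negative association, and the continuity of the convex function $f$---that one may take $(X_1,\dots,X_n)=(f(Y_1),\dots,f(Y_n))$ with $(Y_1,\dots,Y_n)$ WNAID $\mathrm{Pareto}(1)$. By continuity and monotonicity of $f$ there is, for each $x$, a number $t_x$ with $\p(f(W)>x)=\p(W>t_x)$ for every absolutely continuous $W\ge1$; this applies to $W=Y_1$ and to $W=\sum_i\theta_iY_i$. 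Jensen's inequality gives $\sum_i\theta_if(Y_i)\ge f(\sum_i\theta_iY_i)$ pointwise, hence
\begin{equation*}
\p\Big(\sum_i\theta_iX_i>x\Big)\ \ge\ \p\Big(f\Big(\sum_i\theta_iY_i\Big)>x\Big)\ =\ \p\Big(\sum_i\theta_iY_i>t_x\Big)\ \ge\ \p(Y_1>t_x)\ =\ \p(X>x),
\end{equation*}
the last inequality being \eqref{eq:maineq1} for WNAID $\mathrm{Pareto}(1)$ variables; and $\p(X>x)\in(0,1)$ forces $1<t_x<\infty$, so strictness transfers as well. The one delicate point is the coupling when $f$ has an initial flat piece (equivalently, $X$ has an atom at its essential infimum): that atom must be ``spread out'' while keeping the WNA structure, which is where atomlessness enters.

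\textbf{Step 2: induction on $n$ for $\mathrm{Pareto}(1)$.} Let $(Y_1,\dots,Y_n)$ be WNAID $\mathrm{Pareto}(1)$; I would induct on $n$ with $n=2$ as the base case (Step 3). For the step, relabel so that $\theta_1\in(0,1)$ and put $V=\sum_{i\ge2}\tfrac{\theta_i}{1-\theta_1}Y_i$. Applying the inductive hypothesis to $(Y_2,\dots,Y_n)$ (a WNAID $\mathrm{Pareto}(1)$ subvector) gives $Y_2\le_{\rm st}V$, so $V$ stochastically dominates a $\mathrm{Pareto}(1)$ variable; choose an increasing $\psi$ with $\psi(V)\sim\mathrm{Pareto}(1)$ and $\psi(V)\le V$. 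Then $\psi(V)$ is an increasing function of $(Y_2,\dots,Y_n)$, so $(Y_1,\psi(V))$ is WNAID $\mathrm{Pareto}(1)$, and
\begin{equation*}
\sum_{i=1}^n\theta_iY_i\ =\ \theta_1Y_1+(1-\theta_1)V\ \ge\ \theta_1Y_1+(1-\theta_1)\psi(V)\ \ge_{\rm st}\ Y_1
\end{equation*}
by the $n=2$ case. Strictness is inherited as long as at least two of the $\theta_i$ are positive.

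\textbf{Step 3 (the crux): the two--variable case.} Let $(A,B)$ be WNAID $\mathrm{Pareto}(1)$ and $\theta\in(0,1)$; the goal is $\p(\theta A+(1-\theta)B>x)\ge1/x$ for $x>1$ (other $x$ are immediate). As $A,B\ge1$, nonnegativity of the summands gives $\{A>x/\theta\}\cup\{B>x/(1-\theta)\}\subseteq\{\theta A+(1-\theta)B>x\}$, while weak negative association yields $\p(A>x/\theta,\,B>x/(1-\theta))\le\theta(1-\theta)/x^2$; so the union has probability at least $\tfrac1x-\tfrac{\theta(1-\theta)}{x^2}$, and it remains to show $\p((A,B)\in U_0)\ge\tfrac{\theta(1-\theta)}{x^2}$ for the ``middle'' piece $U_0=\{1\le A\le x/\theta,\ 1\le B\le x/(1-\theta),\ \theta A+(1-\theta)B>x\}$. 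For \emph{independent} $A,B$ this follows from a direct integration: $\p(\theta A+(1-\theta)B\le x)=1-\tfrac1x-\tfrac{\theta(1-\theta)}{x^2}\log\tfrac{(x-\theta)(x-(1-\theta))}{\theta(1-\theta)}$, which is $\le1-1/x$ (strictly for $x>1$) because $(x-\theta)(x-(1-\theta))\ge\theta(1-\theta)$ for $x\ge1$; unwinding this gives exactly $\p((A,B)\in U_0)\ge\tfrac{\theta(1-\theta)}{x^2}$. I expect the \textbf{main obstacle} to be upgrading this bound from independence to general weak negative association: each coarse estimate I see (bounding the triangle $\{\theta a+(1-\theta)b\le x\}$ by the surrounding rectangle, or the union bound above) is off by a term of order $x^{-2}$ in the wrong direction, so one must use the exact linear geometry of the region together with the precise $\mathrm{Pareto}(1)$ tail $1/x$---for instance by slicing the triangle into thin strips and invoking along them the consequence $\p(B\le s\mid A\le u)\le\p(B\le s)$ of weak negative association, or by a rearrangement/coupling argument comparing $U_0$ directly with the corner $\{A>x/\theta,\ B>x/(1-\theta)\}$.
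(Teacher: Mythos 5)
First, a point of reference: this paper does not actually prove Theorem \ref{thm:1}. It is imported verbatim from \cite{CEW24} (note the attribution in the theorem header), and Appendix \ref{app:proof} begins its proofs with Proposition \ref{prop:convolution}. So there is no internal proof to compare your attempt against; I can only assess your argument on its own merits.

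On those merits, the proposal is incomplete at exactly the point you identify as the crux. Your Step 3 establishes the two-variable inequality only under \emph{independence} (the closed-form $\p(\theta A+(1-\theta)B\le x)=1-\tfrac1x-\tfrac{\theta(1-\theta)}{x^2}\log\tfrac{(x-\theta)(x-(1-\theta))}{\theta(1-\theta)}$ is correct, and the sign of the logarithm does close the iid case), but the theorem is stated for weak negative association, and your entire reduction funnels into the two-variable WNA case. The sufficient condition you isolate, $\p((A,B)\in U_0)\ge\p(A>x/\theta,\,B>x/(1-\theta))$, is not established, and it is not obviously a consequence of weak negative association: WNA gives you one-sided comparisons of lower-orthant (and hence upper-orthant) probabilities against the product copula, but $U_0$ is a triangular region whose probability under a general WNA copula is not controlled by orthant inequalities alone, as your own accounting of the $x^{-2}$ error terms shows. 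Since this is precisely the content of the theorem beyond the independent case, the proof as written does not prove the stated result. Two secondary points also need repair: in Step 2, the map $\psi$ with $\psi(V)\sim\mathrm{Pareto}(1)$ and $\psi(V)\le V$ exists as a deterministic increasing function of $V$ only if $F_V$ is continuous, which is immediate for independent summands but requires an argument under general WNA dependence (otherwise $\psi$ needs external randomization, which then threatens the claim that $(Y_1,\psi(V))$ is WNA); and in Step 1, the lifting of the WNA structure through a non-injective $f$ (the flat-piece case you flag) must actually be carried out, not just acknowledged. The overall architecture (reduce to Pareto(1), induct to $n=2$, compute) is reasonable, and it does yield a complete proof for the iid sub-case of the theorem, but the WNA generality is the hard part and remains open in your write-up.
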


Inequality \eqref{eq:maineq1} implies that for an agent who wants to minimize the default probability of their loss portfolio, the optimal strategy is non-diversification.  
A similar inequality to \eqref{eq:maineq1} also holds in a model for catastrophic losses (i.e., a loss can be written as $Y\id_{A}$ where $Y$ is a loss given the occurrence of a catastrophic event $A$), obtained in Theorem 1 (ii) of \cite{CEW24}, which we omit.
Other generalizations of \eqref{eq:maineq1} will be obtained in Section \ref{sec:extension}.
%This also applies to almost all useful decision models under risk; see Section \ref{sec:4}. 
 
\begin{remark}
In most part of the paper, $X_1,\dots,X_n$  are assumed to be super-Pareto risks in \eqref{eq:maineq1} as well as in other results built upon \eqref{eq:maineq1}, such as generalizations of Theorem \ref{thm:1} in Section \ref{sec:extension} and risk management implications of \eqref{eq:maineq1} in Sections \ref{sec:4} and \ref{app:r1}. We note that
\eqref{eq:maineq1} and the associated results hold for  some  other models, not covered in this paper. 
After the first version of this paper, there have been some generalizations on the distributions for \eqref{eq:maineq1} to hold, such as iid super-Cauchy random variables in \cite{M24}, iid InvSub random variables in \cite{ALO24}, and negatively lower orthant dependent (\cite{block1982some}) and identically distributed super-Fr{\'e}chet random variables in  \cite{CS24}.
    Moreover, the earlier work of \cite{I05} also contains the class of iid positive one-sided stable random variables for \eqref{eq:maineq1}.  All of these models have infinite or undefined mean.  Our results can be easily adapted to these models.
    For a review on recent results on infinite-mean models in risk management, see \cite{CW25}.
\end{remark} 
\begin{remark}\label{rem:EVT}
A distribution $F$ is said to be  \emph{regularly varying} with tail parameter $\alpha\ge0$, if
$\overline F(x)= L(x)x^{-\alpha}$
where $L$ is a \emph{slowly varying} function, that is, $L(tx)/L(x)\rightarrow 1$ as $x\rightarrow \infty$ for all $t>0$; see \cite{EKM97}. Pareto distributions are regularly varying. If $X_1,\dots,X_n$ are iid and follow a regularly varying distribution with tail parameter less than 1 (i.e., they have infinite mean), by Lemma 1.3.1 of \cite{EKM97}, 
\begin{equation}\label{eq:EVT}
\lim_{t\rightarrow \infty}\frac{\p\(\sum_{i=1}^nX_i/n>t\)}{\p(X_1>t)}\ge 1,
\end{equation}
which can be regarded as an asymptotic version of Theorem \ref{thm:1}. Our main relation \eqref{eq:maineq1}  does not hold for regularly varying distributions in general, 
because it is a property of the entire distribution, not only its asymptotic behaviour.     For instance, we can modify the middle part of the distribution of $X$ to break \eqref{eq:maineq1}, yet preserving regular variation. 
\end{remark}

 \section{Generalizations of the super-Pareto stochastic dominance}\label{sec:extension}

We discuss whether and how \eqref{eq:maineq1} can be generalized beyond the WNAID super-Pareto model.
% Given the risk management implications of  \eqref{eq:maineq1}, 
% one naturally wonders whether and how it can be generalized beyond the WNAID super-Pareto model.
That is, whether  
\begin{align}
\label{eq:R2-new}
X\le_{\rm st} \sum_{i=1}^n \theta_i X_i ~\mbox{for all $(\theta_1,\dots,\theta_n)\in \Delta_n$, where $X,X_1,\dots,X_n$ are identically distributed}\tag{DP}
\end{align} 
holds under   models other than those covered in Theorem \ref{thm:1} (``DP" stands for ``diversification penalty").
We consider two directions of generalizations, one on the marginal distributions and one on the dependence structure.  Below, $n\ge 2$ is fixed. 
First, let
$$
\mathcal F_{\mathrm{IN}}=\{\mbox{distribution of $X$}:\mbox{\eqref{eq:R2-new} holds for all independent  $X_1,\dots,X_n$}\}.
$$
The set $\mathcal F_{\mathrm{WNA}}$ is defined similarly, where the subscript WNA means weak negative association instead of independence, among $X_1,\dots,X_n$. Clearly, $\mathcal F_{\mathrm{WNA}}\subseteq \mathcal F_{\mathrm{IN}}$, and   each of them contains all super-Pareto distributions by Theorem \ref{thm:1}.
Below, by considering transforms on $\mathcal F_{\mathrm{WNA}}$ and $ \mathcal F_{\mathrm{IN}}$, we mean that the transform is applied to the random variable $X$. 
% The sets $\mathcal F_{\mathrm{WNA}}$ and $\mathcal F_{\mathrm{NA}}$ are defined similarly, where the subscript (W)NA means  (weak) negative association, instead of independence, among $X_1,\dots,X_n$. Clearly, $\mathcal F_{\mathrm{WNA}}\subseteq \mathcal F_{\mathrm{NA}}\subseteq \mathcal F_{\mathrm{IN}}$, and   each of them contains all super-Pareto distributions by Theorem \ref{thm:1}.
% Below, by considering transforms on $\mathcal F_{\mathrm{WNA}}$, $\mathcal F_{\mathrm{NA}}$, and $ \mathcal F_{\mathrm{IN}}$, we mean that the transform is applied to the random variable $X$. 

\begin{proposition}\label{prop:convolution}
The set  $ \mathcal F_{\mathrm{IN}}$ is closed under convolution.
Both $\mathcal F_{\mathrm{WNA}}$ and $ \mathcal F_{\mathrm{IN}}$ are  closed under strictly increasing convex transforms. 
\end{proposition}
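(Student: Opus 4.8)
The plan is to treat the two assertions separately, reducing each to a coupling/monotonicity argument combined with a classical stability property of $\le_{\rm st}$.

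\textbf{Closure of $\mathcal F_{\mathrm{IN}}$ under convolution.} Suppose the laws $F$ and $G$ both lie in $\mathcal F_{\mathrm{IN}}$ and fix independent $Z_1,\dots,Z_n$ with common law $F*G$. Since the distribution of $\sum_{i=1}^n\theta_i Z_i$ is determined by the (independent) marginals of the $Z_i$, I would realize $Z_i=X_i+X_i'$ where $X_1,\dots,X_n\sim F$ are iid, $X_1',\dots,X_n'\sim G$ are iid, and all $2n$ variables are mutually independent; then $\sum_i\theta_i Z_i=\sum_i\theta_i X_i+\sum_i\theta_i X_i'$ with the two sums on the right independent. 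By $F\in\mathcal F_{\mathrm{IN}}$ and $G\in\mathcal F_{\mathrm{IN}}$ we have $X_1\le_{\rm st}\sum_i\theta_i X_i$ and $X_1'\le_{\rm st}\sum_i\theta_i X_i'$, and invoking the standard fact that $\le_{\rm st}$ is preserved under convolution of independent summands (concatenate the two monotone couplings) gives $X_1+X_1'\le_{\rm st}\sum_i\theta_i X_i+\sum_i\theta_i X_i'=\sum_i\theta_i Z_i$. Since $X_1+X_1'\laweq Z_1$, this is exactly \eqref{eq:R2-new} for $F*G$.

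\textbf{Closure under strictly increasing convex transforms.} Let $\phi$ be strictly increasing and convex (hence continuous), and suppose the law of $X$ lies in $\mathcal F_{\mathrm{WNA}}$; the argument for $\mathcal F_{\mathrm{IN}}$ is identical with ``independent'' in place of ``weakly negatively associated''. Fix WNAID random variables $Y_1,\dots,Y_n$ with $Y_i\laweq\phi(X)$ and set $X_i=\phi^{-1}(Y_i)$, using that $\phi$ is a continuous strictly increasing bijection onto its range. Then $X_i\laweq X$, and I would verify that $X_1,\dots,X_n$ are again weakly negatively associated: for a coordinatewise increasing map the preimage of a decreasing set is decreasing, and $\{X_i\le x\}=\{Y_i\le\phi(x)\}$, so the defining inequality for $(X_i)$ follows verbatim from that for $(Y_i)$. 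Since the law of $X$ is in $\mathcal F_{\mathrm{WNA}}$, $X\le_{\rm st}\sum_i\theta_i X_i$. Applying $\phi$: monotonicity of $\phi$ preserves $\le_{\rm st}$, so $\phi(X)\le_{\rm st}\phi\big(\sum_i\theta_i X_i\big)$; and convexity of $\phi$ gives the pointwise Jensen bound $\phi\big(\sum_i\theta_i X_i\big)\le\sum_i\theta_i\phi(X_i)=\sum_i\theta_i Y_i$ a.s., which is in particular a stochastic-order inequality. Chaining these two steps by transitivity of $\le_{\rm st}$ yields $\phi(X)\le_{\rm st}\sum_i\theta_i Y_i$, and since $Y_1\laweq\phi(X)$ this is \eqref{eq:R2-new} for the law of $\phi(X)$.

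The genuinely routine ingredients are the two preservation properties of $\le_{\rm st}$ (under increasing maps and under convolution of independent summands, both classical via quantile couplings) and the Jensen step. The one point requiring real care is the stability of weak negative association under the marginal map $\phi^{-1}$, specifically matching the conditioning events $\{X_i\le x\}=\{Y_i\le\phi(x)\}$; this is precisely where strict monotonicity of $\phi$ enters, since for a non-injective increasing $\phi$ one would have to reconstruct $X_i$ from $Y_i$ by additional randomization, which need not preserve weak negative association. The same asymmetry explains why only $\mathcal F_{\mathrm{IN}}$, and not $\mathcal F_{\mathrm{WNA}}$, is claimed closed under convolution: an arbitrary weakly negatively associated family with law $F*G$ need not decompose into two such families.
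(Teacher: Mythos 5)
Your proof is correct and follows essentially the same route as the paper's: the convolution part is the identical two-layer decomposition plus closure of $\le_{\rm st}$ under independent convolution (Theorem 1.A.3 of Shaked--Shanthikumar), and the transform part is the same chain $f(X)\le_{\rm st} f\big(\sum_i\theta_iX_i\big)\le\sum_i\theta_i f(X_i)$. Your explicit pullback via $\phi^{-1}$ and the check that preimages of decreasing sets under coordinatewise increasing maps remain decreasing is a careful spelling-out of the paper's one-line remark that strictly increasing transforms do not affect the dependence structure, and it correctly identifies where strict monotonicity is needed.
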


The second statement in Proposition \ref{prop:convolution} means \eqref{eq:R2-new} also holds for strictly increasing convex transforms of $X,X_1,\dots,X_n$ given some dependence assumptions (i.e., $f(X)\le_{\rm st} \sum_{i=1}^n \theta_i f(X_i)$ where $f$ is any strictly increasing convex function).

Second, we consider possible dependence structures for \eqref{eq:R2-new}. Copulas are useful tools for modeling dependence structures; see  \cite{N06} for an overview. 
A \emph{copula} is a distribution function with standard uniform (i.e., on $[0,1]$) marginal distributions. 
For a random vector $\mathbf X$   with distribution function $F$ and marginal distributions $F_1,\dots,F_n$, by Sklar's Theorem (e.g., Theorem 7.3 of \cite{MFE15}), there exists a copula  $C$ satisfying  $F(x_1,\dots,x_n)=C(F_1(x_1),\dots,F_n(x_n))$ for $(x_1,\dots,x_n)\in \R^n$, and such $C$ is called a copula of $\mathbf X$, which is unique when $\mathbf X$ has continuous marginal distributions.
The \emph{copula for independence} and the \emph{copula for comonotonicity} 
are given by $ \mathbf u\mapsto  \prod_{i=1}^nu_i$  and  $ \mathbf u \mapsto \min(\mathbf u)$ for $\mathbf u=(u_1,\dots, u_n)\in[0,1]^n$, respectively.
A \emph{copula for weak negative association} is any copula of weakly negatively associated standard uniform random variables. 
As their names suggest, these  copulas represent the corresponding dependence structures.
The set of dependence structures satisfying \eqref{eq:R2-new} is then represented by
$$
\mathcal C_{\mathrm{DP}}=\{\mbox{copula } C:\mbox{\eqref{eq:R2-new} holds for all super-Pareto  $X_1,\dots,X_n$ with copula $C$}\}.
$$
%For $(u_1,\dots u_n)\in[0,1]^n$, the independence copula and the comonotonicity copula are $(u_1,\dots u_2)\mapsto \prod_{i=1}^nu_i$  and  $(u_1,\dots u_2)\mapsto \min_{i\in[n]}u_i$. 
% Let $X, X_1,\dots,X_n$ be identically distributed super-Pareto random variables, $\mathcal C_n$ be the set of all $n$-dimensional  copulas, and $C\in \mathcal C_n$. We say  $X \le_{\rm st}\sum_{i=1}^n\theta_{i}X_{i}$ holds for $C$ if it holds when $C$ is the copula of $X_1,\dots,X_n$. 
\begin{proposition}\label{prop:mixture}
The set $\mathcal C_{\mathrm{DP}}$ is closed under mixture, and it contains all copulas for comonotonicity, independence, and weak negative association.
% Let $\mathcal C\subseteq \mathcal C_n$ and $X, X_1,\dots,X_n$ be identically distributed super-Pareto random variables. For $(\theta_1,\dots,\theta_n)\in\Delta_n$, if 
% $$X \le_{\rm st}\sum_{i=1}^n\theta_{i}X_{i},$$
% holds for each copula in $\mathcal C$, then it holds for any copula in the convex hull of $\mathcal C$. 
\end{proposition}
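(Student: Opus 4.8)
The plan is to prove the membership claims and the closure statement by separate short arguments, invoking Theorem~\ref{thm:1} as a black box for the weak-negative-association case. I will use throughout that \eqref{eq:R2-new} concerns only the law of $(X_1,\dots,X_n)$ and that a copula $C$ together with a common marginal $F$ pins down the joint law of $(X_1,\dots,X_n)$, so for each super-Pareto $F$ it suffices to check \eqref{eq:R2-new} for one convenient realization of that law. The comonotonic case is then immediate: comonotonic and identically distributed random variables are almost surely equal, so $\sum_{i=1}^n\theta_iX_i=X_1\laweq X$ for every $(\theta_1,\dots,\theta_n)\in\Delta_n$ and \eqref{eq:R2-new} holds with equality in law (here no property of the marginals is used). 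Independence is a special case of weak negative association, treated next.

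For a copula $C$ of weakly negatively associated standard uniforms, fix a super-Pareto $F$, take $(U_1,\dots,U_n)$ with joint distribution function $C$ --- which is then itself a vector of weakly negatively associated uniforms --- and set $X_i=F^{-1}(U_i)$, where $F^{-1}$ is the left-continuous quantile function. Using the equivalence $F^{-1}(u)\le x\iff u\le F(x)$, the vector $(X_1,\dots,X_n)$ has marginals $F$ and copula $C$, hence is the realization we need. The key step is that weak negative association is preserved under coordinatewise non-decreasing maps: the preimage of a decreasing set under such a map is again decreasing, and since the conditioning variable $U_i$ is atomless, the event $\{X_i\le x\}$ agrees up to a null set with some set $\{U_i\le a\}$ of the form appearing in the definition of weak negative association. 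Thus $X_1,\dots,X_n$ are WNAID super-Pareto, and Theorem~\ref{thm:1} yields \eqref{eq:maineq1}, which is \eqref{eq:R2-new}.

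For closure under mixture, let $C=\sum_k\lambda_kC_k$ with $\lambda_k\ge0$, $\sum_k\lambda_k=1$, and each $C_k\in\mathcal C_{\mathrm{DP}}$; the argument extends to a general mixture $C=\int C_\gamma\,\mathrm d\mu(\gamma)$ under the usual measurability of $\gamma\mapsto C_\gamma(\mathbf u)$. Fix a super-Pareto $F$ and, on the atomless space, construct a latent variable $K$ with $\p(K=k)=\lambda_k$ such that conditionally on $\{K=k\}$ the vector $(X_1,\dots,X_n)$ has marginals $F$ and copula $C_k$; unconditionally it then has marginals $F$ and copula $C$. For each $k$, applying the defining property of $\mathcal C_{\mathrm{DP}}$ to the conditional law --- whose marginals are the super-Pareto $F$ --- gives $F(x)\ge\p(\sum_{i=1}^n\theta_iX_i\le x\mid K=k)$ for all $x\in\R$ and all $(\theta_1,\dots,\theta_n)\in\Delta_n$. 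Averaging over $K$ gives $F(x)\ge\p(\sum_{i=1}^n\theta_iX_i\le x)$, that is, $X\le_{\rm st}\sum_{i=1}^n\theta_iX_i$, so $C\in\mathcal C_{\mathrm{DP}}$.

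The conceptual heart of the proof is this conditioning step, which works because first-order stochastic dominance is stable under mixtures on the dominated side. I expect the only mildly delicate points to be (i) verifying that $(F^{-1}(U_1),\dots,F^{-1}(U_n))$ inherits weak negative association even when $F$ has an atom at its left endpoint, which is exactly the down-set/no-atom remark above, and (ii) the routine measurability bookkeeping needed to pass from countable to general mixing measures; I do not anticipate a genuine obstacle in either.
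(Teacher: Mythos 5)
Your proof is correct and follows essentially the same route as the paper's: comonotonicity is handled by almost-sure equality of the summands, independence and weak negative association by Theorem~\ref{thm:1}, and closure under mixture by the fact that $\p(\sum_{i=1}^n\theta_i F^{-1}(U_i)\le x)$ is affine in the law of $(U_1,\dots,U_n)$ (your latent-variable conditioning is just an explicit way of expressing that linearity). If anything, you are more careful than the paper in verifying that the quantile transform preserves weak negative association, a step the paper's proof takes for granted.
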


%\begin{proof}[Proof of Proposition \ref{prop:mixture}]
%Denote by $C$ the copula of  $(X_1,\dots,X_n)$. Then,  there exists a random vector $(U_1,\dots,U_n)\sim C$ such that $(q_{X_1}(U_1),\dots,q_{X_n}(U_n))$ has the same distribution as $(X_1,\dots,X_n)$. Note that for $p\in(0,1)$, $\p(\sum_{i=1}^n\theta_{i}X_{i}\le p)=\p(\sum_{i=1}^n\theta_{i}q_{X_i}(U_i)\le p) $  is linear in the distribution of $(U_1,\dots,U_n)$.  Therefore, if   $\p(\sum_{i=1}^n\theta_{i}X_{i}\le p)\le \p(X_1\le p) $ for all $p\in(0,1)$ holds for every element in $\mathcal C$, it also holds for  every element from the convex hull of $\mathcal C$.
%\end{proof}

% By Lemma \ref{thm:1}, if identically distributed super-Pareto random variables $X_1,\dots,X_n$ are independent or weakly negatively dependent, \eqref{eq:maineq1} holds. It is  clear that \eqref{eq:maineq1} also holds if $X_1,\dots,X_n$ are perfectly positively dependent. Hence,  
By Proposition $\ref{prop:mixture}$, \eqref{eq:R2-new} holds under some particular forms of positive or mixed dependence, in addition to the weak negative association studied by \cite{CEW24}. Nevertheless, we did not find a natural model of positive dependence, other than a mixture of independence and comonotonicity, that yields \eqref{eq:R2-new}.

%The inequality \eqref{eq:maineq} also holds for some {positively} correlated extremely heavy-tailed Pareto losses.  First, the inequality \eqref{eq:maineq} simply holds for perfectly positively dependent extremely heavy-tailed Pareto losses $Y_1,\dots,Y_n$ (i.e., $Y_1=\dots=Y_n$ almost surely).
%Note that there exists a function $f$ such that \eqref{eq:maineq} can be rewritten as $\p(f(U_1,\dots,U_n) \le p)\ge p$ for all $p\in[0,1]$ where $(U_1,\dots,U_n)$ is a vector of standard uniform random variables (see the proof of Lemma \ref{thm:1} in Section \ref{app:proof}). Since $\p(f(U_1,\dots,U_n\le p) $ is linear in the distribution of $(U_1,\dots,U_n)$ (i.e., copula),  \eqref{eq:maineq} remains true if the dependence structure  of risks $Y_1,\dots,Y_n$ is a mixture of independence and perfectly positive dependence, that is, $(U_1,\dots,U_n)$ follows the following copula  

We present below two additional models for which similar results to Theorem \ref{thm:1} hold: the tail super-Pareto distribution model,
and the collective risk model in insurance.
% Generalizations of \eqref{eq:maineq1} to a tail super-Pareto risk model and a classic insurance risk model are presented below.
 
 %\subsection*{Tail Pareto distributions}\label{app:gen}

 %\subsection{Tail super-Pareto distributions}

As reflected by the Pickands-Balkema-de Haan Theorem (see Theorem 3.4.13 (b) of \cite{EKM97}), 
many losses have a power-like tail, but their distributions may not be power-like over the full support.
Therefore, it is practically useful to assume that a random loss has a Pareto distribution only in the tail region; see the examples in the Introduction.
%For $\alpha>0$, 
%we say that $Y$ has a {Pareto}($\alpha$) distribution beyond $x\ge 1$  if  $\p(Y>t)=t^{-\alpha}$ for $t\ge x$. 

Let $X$ be a super-Pareto random variable. We say that $Y$ is distributed as $X$  beyond $x $  if  $\p(Y>t)=\p(X>t)$ for $t\ge x$. 
Our next result suggests that, under an extra condition,  a stochastic dominance also holds in the tail region for such distributions.

\begin{proposition}
\label{prop:tail}
Let $X$ be a super-Pareto random variable,    $Y_{1},\dots,Y_{n}$ be iid random variables distributed as  $X$ beyond $x\ge z_X$, and $Y\laweq Y_1$. Assume that $Y\ge_{\rm st} X$.
  For $(\theta_1,\dots,\theta_n)\in\Delta_n$ and $t\ge x$, we have  $\p\(\sum_{i=1}^n\theta_{i}Y_{i}>  t\)\ge  \p\(Y> t\)$, and the inequality is strict if $t>z_X$ and $\theta_i>0$ for at least two $i\in [n]$.
\end{proposition}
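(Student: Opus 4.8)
The plan is to reduce to Theorem~\ref{thm:1} by a stochastic-domination coupling. The hypothesis $Y\ge_{\rm st} X$ says exactly that $F_Y(s)\le F_X(s)$ for all $s$, equivalently $F_Y^{-1}\ge F_X^{-1}$ pointwise on $(0,1)$ (generalized inverses). So I would take iid uniform random variables $U_1,\dots,U_n$ on $(0,1)$ and set $X_i=F_X^{-1}(U_i)$ and $Y_i=F_Y^{-1}(U_i)$. Then $X_1,\dots,X_n$ are iid with law $F_X$, the $Y_1,\dots,Y_n$ are iid with law $F_Y$, and $Y_i\ge X_i$ almost surely for every $i$. Since the $Y_i$ in the statement are iid, the law of $\sum_{i=1}^n\theta_iY_i$ depends only on $F_Y$ and $(\theta_1,\dots,\theta_n)$, so there is no loss of generality in working with this particular joint construction.

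Because $X$ is super-Pareto, $(X_1,\dots,X_n)$ is a WNAID super-Pareto model (indeed iid), so Theorem~\ref{thm:1} applies with $X\laweq X_1$: $\p\bigl(\sum_{i=1}^n\theta_iX_i>s\bigr)\ge \p(X>s)$ for every $s\in\R$, with strict inequality at every $s$ satisfying $\p(X>s)\in(0,1)$ provided $\theta_i>0$ for at least two $i\in[n]$. The pathwise domination gives $\sum_{i=1}^n\theta_iY_i\ge \sum_{i=1}^n\theta_iX_i$, hence $\p\bigl(\sum_{i=1}^n\theta_iY_i>t\bigr)\ge \p\bigl(\sum_{i=1}^n\theta_iX_i>t\bigr)$ for all $t$. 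For $t\ge x$, the definition of ``$Y$ distributed as $X$ beyond $x$'' yields $\p(Y>t)=\p(X>t)$, so chaining these relations gives $\p\bigl(\sum_{i=1}^n\theta_iY_i>t\bigr)\ge \p(X>t)=\p(Y>t)$, which is the asserted bound.

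For the strict part, assume $t>z_X$ and $\theta_i>0$ for at least two $i$. A super-Pareto variable is an increasing, convex, non-constant transform of a $\mathrm{Pareto}(1)$ variable, hence is unbounded above, so $\p(X>t)>0$; and $t>z_X=\essinf X$ forces $\p(X>t)<1$ by the elementary fact that $\p(X>t)\le 1-\p(X<t)<1$ when $t$ exceeds the essential infimum. Thus $\p(X>t)\in(0,1)$, and the strict conclusion of Theorem~\ref{thm:1} gives $\p\bigl(\sum_{i=1}^n\theta_iX_i>t\bigr)>\p(X>t)$. Combining with $\p\bigl(\sum_{i=1}^n\theta_iY_i>t\bigr)\ge \p\bigl(\sum_{i=1}^n\theta_iX_i>t\bigr)$ and $\p(X>t)=\p(Y>t)$ (valid since $t\ge x$) yields the strict inequality.

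I do not anticipate a real obstacle here; the only point needing care is the strictness step, namely verifying that for $t\ge x$ with $t>z_X$ we genuinely lie in the range $\p(X>t)\in(0,1)$ where the strict stochastic dominance in Theorem~\ref{thm:1} is available, which is where the unbounded support of super-Pareto laws and the essential-infimum bound are used.
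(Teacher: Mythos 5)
Your proof is correct and follows essentially the same route as the paper's: introduce iid copies $X_1,\dots,X_n$ of the super-Pareto $X$, chain $\p(\sum_i\theta_iY_i>t)\ge\p(\sum_i\theta_iX_i>t)\ge\p(X>t)=\p(Y>t)$ for $t\ge x$ using $Y\ge_{\rm st}X$ and Theorem~\ref{thm:1}, and invoke the strict version of Theorem~\ref{thm:1} for the strictness claim. The paper leaves the first inequality and the verification that $\p(X>t)\in(0,1)$ implicit, whereas you supply the quantile coupling and the check explicitly; these are correct fillings-in of the same argument rather than a different approach.
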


In Proposition \ref{prop:tail}, the assumption $Y\ge_{\rm st} X$, that is,
$\p(Y>t) \ge \p(X>t)$ for $t\in [z_X,x]$,  is not dispensable.
Here we cannot allow the distribution of $Y$ on $[z_X,x]$ to be arbitrary; the entire distribution is relevant to establish the inequality $\p\(\sum_{i=1}^n\theta_{i}Y_{i}>  t\)\ge  \p\(Y> t\)$, even for $t$ in the tail region.

%Let $ X,X_{1},\dots,X_{n}$ be iid $\mathrm {Pareto}(\alpha)$ random variables with $\alpha\in(0,1]$. 
%As a particular application of Proposition \ref{prop:tail}, it holds that, for any $m\ge 1$,
% \begin{align}
% \label{eq:truncate-max}
%X\vee m \le_{\rm st}\sum_{i=1}^n\theta_{i}(X_{i}\vee m).
%\end{align}
%This inequality follows by noting that $X\vee m$ has a Pareto distribution beyond $m$ and applying Proposition \ref{prop:tail} to $t\ge m$.  A location shift  of \eqref{eq:truncate-max} also gives 
% \begin{align}
% \label{eq:truncate-excess} 
%(X - m)_+ \le_{\rm st}\sum_{i=1}^n\theta_{i}(X_{i}-m)_+.
%\end{align}
%For \eqref{eq:truncate-max} and \eqref{eq:truncate-excess} to hold, it suffices to assume that $X_{1},\dots,X_{n}$ are $\mathrm {Pareto}(\alpha)$ beyond $m$, as their distribution on $(-\infty,m]$ does not matter. 

% By applying Proposition \ref{prop:tail}, we get 
%$\p(\sum_{i=1}^n\theta_{i}(X_{i}\vee m)>t)\ge 
%\p( X\vee m>t)$ for $t\ge m$.
%For $t\le m$, we have $\p(\sum_{i=1}^n\theta_{i}(X_{i}\vee m)>t)=1= 
%\p( X\vee m>t)$.

 % \subsection*{A classic model in insurance}\label{app:insurance}
 
Random weights and a random number of risks are, for instance, common in modeling portfolios of insurance losses; see \cite{KPW12}.
 Let $N$ be a counting random variable (i.e., it takes values in $\{0,1,2,\dots\}$), and $W_i$ and $X_i$ be positive random variables for $i\in \N$.
We consider an insurance portfolio where each policy incurs a loss $W_iX_i$ if there is a claim, and $N$ is the total number of claims in a given period.  If $W_1=W_2=\dots=1$ and $X_1,X_2,\dots$ are iid, then this model recovers the classic collective risk model.
The portfolio loss of insurance policies  is given by
${\sum_{i=1}^NW_iX_i} ,$
and its average loss across claims is 
$({\sum_{i=1}^NW_iX_i})/({\sum_{i=1}^N W_i})$
where both terms are $0$ if $N=0$.
\begin{proposition}\label{cor:random}
Let $X_1,X_2,\dots$ be WNAID super-Pareto random variables, $X\laweq X_1$, $W_1,W_2,\dots$ be positive random variables, and $N$ be a counting random variable, such that $X,\{X_{i}\}_{i \in \N}$, $\{W_i\}_{i\in\N}$, and $N$ are independent. We have 
\begin{equation}
\label{eq:collective} X \id_{\{N\ge1\}}\le_{\rm st}\frac{\sum_{i=1}^NW_iX_i}{\sum_{i=1}^N W_i} \mbox{~~~and~~~}\sum_{i=1}^N W_i X  \le_{\rm st} {\sum_{i=1}^NW_iX_i}.\end{equation}
  If $\p(N\ge2)\neq0$, then for $t>z_X$,   $\p (\sum_{i=1}^NW_iX_i/\sum_{i=1}^NW_i\le t )<\p (X\id_{\{N\ge 1\}}\le t )$. 
\end{proposition}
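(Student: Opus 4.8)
The plan is to reduce the statement to the WNAID super-Pareto stochastic dominance in Theorem \ref{thm:1} by conditioning on $N$ and on the weights $W_1,W_2,\dots$. First I would condition on the event $\{N=m\}$ for a fixed $m\ge 1$ and on $\{W_1=w_1,\dots,W_m=w_m\}$ for fixed positive weights. Since $(X_i)_{i\in\N}$ is independent of $N$ and of the weights, the conditional law of $(X_1,\dots,X_m)$ is still WNAID super-Pareto. The key observation is that $\sum_{i=1}^m w_i X_i / \sum_{i=1}^m w_i = \sum_{i=1}^m \theta_i X_i$ where $\theta_i = w_i/\sum_{j=1}^m w_j$ forms a vector in $\Delta_m$, so Theorem \ref{thm:1} gives $X \le_{\rm st} \sum_{i=1}^m \theta_i X_i$ conditionally, and analogously $\sum_{i=1}^m w_i X \le_{\rm st} \sum_{i=1}^m w_i X_i$ by scaling the same inequality (applying Theorem \ref{thm:1} with exposures $\theta_i$ and multiplying both sides by $\sum w_j$, using that stochastic dominance is preserved under multiplication by a positive constant). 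This handles each conditional slice with $m\ge 1$; for $m=0$ both sides are $0$, and $X\id_{\{N\ge1\}}=0$ there as well, so the inequality is trivial.

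Next I would assemble the conditional inequalities into the unconditional one. For any $t\in\R$, write $\p(\sum_{i=1}^N W_iX_i/\sum_{i=1}^N W_i \le t) = \sum_{m\ge 0}\p(N=m)\,\E[\p(\sum_{i=1}^m W_iX_i/\sum_{i=1}^m W_i\le t \mid W_1,\dots,W_m)]$, using independence of $N$ from everything else, and similarly for $X\id_{\{N\ge1\}}$; for $m=0$ the two summands coincide. For each $m\ge1$ and each realization of the weights, the conditional stochastic dominance just established gives $\p(\sum_{i=1}^m w_iX_i/\sum w_i\le t)\le \p(X\le t)$, and since this bound does not depend on the weights, taking expectation over $(W_1,\dots,W_m)$ and then summing the weighted $\p(N=m)$ over $m\ge1$ (plus the common $m=0$ term) yields $\p(\sum_{i=1}^N W_iX_i/\sum_{i=1}^N W_i \le t)\le \p(X\id_{\{N\ge1\}}\le t)$, which is the first inequality in \eqref{eq:collective}. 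The second inequality in \eqref{eq:collective} follows the same way from the scaled conditional inequality $\sum_{i=1}^m w_i X \le_{\rm st}\sum_{i=1}^m w_i X_i$.

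For the strictness claim, suppose $\p(N\ge2)\neq 0$ and fix $t>z_X$. On the event $\{N=m\}$ with $m\ge 2$, at least two of the weights are positive (weights are positive random variables), so the corresponding $\theta$-vector has at least two strictly positive coordinates and Theorem \ref{thm:1} gives the strict inequality $\p(\sum_{i=1}^m w_iX_i/\sum w_i\le t) < \p(X\le t)$ for every realization of the weights. On the events $\{N=m\}$ with $m\in\{0,1\}$ we still have the non-strict inequality $\le$. Decomposing $\p(\sum_{i=1}^N W_iX_i/\sum_{i=1}^N W_i\le t)$ as the weighted sum over $m$ as above, the contributions from $m\ge 2$ are strictly below the corresponding contributions to $\p(X\id_{\{N\ge1\}}\le t)$, and since these $m$ carry positive total probability $\p(N\ge2)>0$, the overall inequality is strict. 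The main obstacle I anticipate is purely bookkeeping: making sure the $m=0$ term is correctly identified as common to both sides (so it does not spuriously break the inequality), and justifying the interchange of the outer expectation over the weights with the conditional probability bounds — this is immediate because the bound $\p(X\le t)$ is a constant independent of the weights, so no delicate measurability or integrability issue arises beyond Fubini/Tonelli for nonnegative integrands.
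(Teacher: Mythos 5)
Your proposal is correct and follows essentially the same route as the paper's proof: condition on $N=m$ and on the weights, apply Theorem \ref{thm:1} with $\theta_i = w_i/\sum_j w_j$ (using independence of the weights and $N$ from the $X_i$), reassemble via the law of total probability with the $m=0$ term common to both sides, and obtain strictness from the $m\ge 2$ slices, which carry probability $\p(N\ge 2)>0$. Your write-up is simply more explicit than the paper's about the conditioning on the weights and the Tonelli step.
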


 If $W_1=W_2 =\dots=1$ as in the classic collective risk model, then, under the assumptions of Proposition \ref{cor:random}, we have 
$$X_1\id_{\{N\ge1\}}\le_{\rm st} \frac 1N  {\sum_{i=1}^N X_i}  \mbox{~~~and~~~}N X_1 \le_{\rm st} {\sum_{i=1}^NX_i}.$$

The above inequalities suggest that the sum of a randomly counted sequence of  WNAID super-Pareto losses is stochastically larger than the sum of a randomly counted sequence of identical super-Pareto losses. Therefore, building an insurance portfolio for WNAID super-Pareto claims does not reduce the total risk.
In this setting, it is less risky to insure identical policies than to insure weakly negatively associated policies of the same type of super-Pareto loss and thus the basic principle of insurance does not apply to super-Pareto losses.

\section{Risk management implications}\label{sec:4}
%\subsection{No diversification for a monotone agent}
\subsection{Decision models for infinite-mean risks}

The interpretation of Theorem \ref{thm:1} and its several generalizations in Section \ref{sec:extension} is that non-diversification is better than diversification in a pool of extremely heavy-tailed losses. We first discuss useful decision models for which this result can or cannot be applied.  

Denote by $\mathcal X$ the set of all random variables and  $L^1\subseteq \mathcal X$  the set of random variables with finite mean.
Assume that $X_1,\dots,X_n$ are WNAID super-Pareto and $(\theta_1,\dots,\theta_n)\in\Delta_n$ with at least two of $\theta_1,\dots,\theta_n$ being positive. 
 Let $\mathcal L\subseteq \mathcal X$ be a
set of random variables representing possible losses to an agent, such that $\sum_{i=1}^n\theta_{i}X_{i}\in \mathcal L$ and $X_1\in \mathcal L$.
Moreover, assume that $\mathcal L$ is law invariant; that is, $X\in\mathcal L$ and $X\laweq Y$ imply $Y\in \mathcal L$.

The preferences of the agent are represented by a transitive binary relation
$
\succeq
$  (with strict relation $\succ$ and symmetric relation $\simeq$) on $\mathcal L$,
which we assume to satisfy two natural properties:
\begin{enumerate}[(i)]
\item Choice under risk: $X\simeq Y$ for $X,Y\in\mathcal L$ if $X\laweq Y$; 
\item Less loss is better: $X\succeq Y$ for $X,Y\in\mathcal L$ if $X\le Y$ (in the almost sure sense, omitted below).
\end{enumerate}
  By Theorem \ref{thm:1} and the representation of first-order stochastic dominance (see e.g., \citet[Theorem 1.A.1]{SS07}), 
$$
X_1  \le_{\rm st}  \sum_{i=1}^n\theta_{i}X_{i} \implies \mbox{there exists $Y\laweq X_1$ such that } Y  \le   \sum_{i=1}^n\theta_{i}X_{i}  \implies X_1   \simeq Y \succeq   \sum_{i=1}^n\theta_{i}X_{i}.
$$
A standard construction of $Y$ is to let $Y=h(U)$, where $U$ is uniformly distributed on $[0,1]$ such that $\sum_{i=1}^n\theta_{i}X_{i} =g(U)$, and $g$ and $h$ are the left quantile functions of $\sum_{i=1}^n\theta_{i}X_{i} $ and $X_1$, respectively. 
Therefore, any agent  satisfying (i) and (ii) would (weakly) prefer non-diversification modelled by  $X_1$ to diversification modelled by $ \sum_{i=1}^n\theta_{i}X_{i}$.  
Moreover, we can further consider 
\begin{enumerate}[(i)] 
\item[(iii)]  Less loss is strictly better: $X\succ Y$ for $X,Y\in\mathcal L$ if $\p(X<Y)=1$.
\end{enumerate}
If (iii) holds, then we have a strict preference 
$ 
X_1  \succ  \sum_{i=1}^n\theta_{i}X_{i},
$ following the argument above.

%Next, we use a functional $\rho$ to represent the binary relation. Measuring the risk of a financial portfolio is a crucial task in both the banking and insurance sectors, and it is typically done by calculating the value of a risk measure.
Risk measures are an important tool to evaluate portfolio risks for financial institutions and many of them induce the binary relation discussed above.
 A \emph{risk measure} is a functional $\rho: \mathcal X_{\rho}\rightarrow \overline \R:= [-\infty,\infty]$, where the domain $\mathcal X_\rho\subseteq \mathcal X$ is a set of random variables representing financial losses. We will assume that an agent uses a risk measure $\rho$ for their preference, in the sense that 
 $\rho(X)\le \rho(Y) \Leftrightarrow X\succeq Y$. 
Our notion of a risk measure is quite broad, and it includes not only risk measures in the sense of \cite{ADEH99} and \cite{FS16} but also 
decision models such as the expected utility by flipping the sign. 
The   assumptions on $\rho$ below correspond to (i), (ii) and (iii) respectively.
 \begin{enumerate}[(a)]
 \item Law invariance: $\rho(X) = \rho(Y)$ for $X,Y\in\mathcal X_\rho$ if $X\laweq Y$.
 \item Weak monotonicity: $\rho(X) \le \rho(Y)$ for $X,Y\in\mathcal X_\rho$ if $X\le_{\rm st} Y$. 
 \item Mild monotonicity: $\rho$ is weakly monotone and 
$\rho(X)< \rho(Y)$ for $X,Y\in\mathcal X_\rho$ if $\p(X<Y)=1$.
\end{enumerate} 
 
% A binary relation with properties (i) and (ii) can be induced by 
% a risk measure satisfying (a) and (b). If the binary relation further satisfies (iii), then it can be induced by 
% a risk measure satisfying (a) and (c).
% It is well-known that for  any law-invariant functional $\rho:\X_\rho\to \R$, weak monotonicity is  equivalent to the   standard monotonicity (e.g., Theorem 6.28 in \cite{shapiro2021lectures}): $\rho(X)\le \rho(Y)$ for $X,Y\in\mathcal X_\rho$ if $X\le Y$.
% %\item $\rho$ is weakly monotone. 
% %\item $\rho$ is increasing (in the sense that $\rho(X)\le \rho(Y)$ if $X\le Y$) and law-invariant (in the sense that $\rho(X)=\rho(Y)$ if $X\laweq Y$);
% %\item $\rho$ is weakly monotone (in the sense that $\rho(X) \le \rho(Y)$  if $X\le_{\rm st} Y$).  
% This property is satisfied by all commonly used risk measures, as well as expected utilities and hence they are weakly monotone.
% As such, stochastic dominance is a natural notion to be used for decision making.

Many commonly used decision models are not just weakly monotone but mildly monotone; we highlight some examples.
First, for an increasing utility function $u$, the expected utility agent can  be represented by  a risk measure $E_v$, namely
$$
E_v(X) = \E[v(X)],~~~~~~X\in \X_{E_v}:=\{Y\in \X: \E[|v(Y)|]<\infty\},
$$
where $v(x)=-u(-x)$ is also increasing.
It is clear that $E_v$ 
is   mildly monotone if $v$  or $u$ is strictly increasing. Note that
for risk-averse expected utility decision makers ($u$ is concave), the domain of $E_v$ is typically smaller than $\mathcal L$. 
However, there are still a few useful contexts where expected utility models can be used
to compare infinite-mean losses. 
Below we give a few examples. 
\begin{enumerate}[(a)]
\item Suppose that for some $\ell \in \R$, $u(x)=u(\ell)$ for all $x\le \ell$, and $u$ is concave on $(\ell,\infty)$. This utility function describes a risk-averse agent with limited liability.
Limited liability is a practical assumption in banking and insurance decisions for both individuals and financial institutions. 
\item The Markowitz utility function (\cite{M52}) has a convex-concave structure on the loss side, which is based on  the empirical observation  that people often prefer   a loss of $10m$ dollars with $0.1$ probability over a sure loss of $m$ dollars when $m$ is very large.
\item In the cumulative prospect theory (which generalizes the expected utility model) of \cite{TK92}, the utility function 
is convex below a reference point. 
\end{enumerate} 
In all cases above, the expected utility can take finite values on $\mathcal L$ and is mildly monotone, implying a strict preference for non-diversification.

The next examples of mildly monotone risk measures are the two widely used regulatory risk measures in insurance and finance, Value-at-Risk (VaR) and Expected Shortfall (ES).
  For $X\in \mathcal{X}$ and $p\in (0,1)$, VaR is defined as
  \begin{equation*}%\label{VaR}
  \VaR_{p}(X)=F^{-1}_X(p)=\inf\{t\in\R:F_X(t)\geq p\},
  \end{equation*}
and ES is defined as
$$\ES_{p}(X)=\frac{1}{1-p}\int_{p}^{1}\VaR_{u}(X)\mathrm{d}u.$$
%If $F_X$ is continuous, then $\ES_{p}(X)=\E[X|X\ge \VaR_p(X)]$. 
 For $X\notin L^1$, such as super-Pareto losses, $\ES_p(X)$ can be $\infty$, whereas $\VaR_p(X)$ is always finite. Therefore,
VaR is mildly monotone on $\X$, whereas ES is mildly monotone only on $L^1$. Note that convex risk measures will take infinite values  when evaluating infinite-mean losses and hence are not suitable for losses in $\mathcal L$; standard properties of risk measures are collected in Appendix \ref{app:A}.

\subsection{Diversification penalty for losses with bounded support}
Although  \eqref{eq:maineq1}  never holds for non-degenerate random variables with finite mean (Proposition 2 in \cite{CEW24}), Theorem \ref{thm:1} can be applied to some contexts of finite-mean losses. 
 Since (strict) first-order stochastic dominance is preserved under (strictly) increasing transformations, 
Theorem \ref{thm:1} implies 
 $
 h(X_1) \le_{\rm st} h(\sum_{i=1}^n \theta_iX_i)
 $
 for all increasing real functions $h$.
 For instance, in the context of reinsurance, $h(x)=x\wedge c$ for some threshold $c\in \R$ corresponds to an excess-of-loss  reinsurance coverage; see e.g., \cite{O18}. 
 On the other hand, for an increasing transform $h$ performed on  individual losses, the first-order stochastic dominance 
  $
 h(X_1) \le_{\rm st}\sum_{i=1}^n  \theta_i h( X_i)
 $
 may not hold, unless $h$ is convex (see Lemma 2  in \cite{CEW24}).
 Especially, if $\E[h(X_1)]<\infty$ (this cannot happen if $h$ is convex and non-constant) and $X_1,\dots,X_n$ are iid, then 
  $h(X_{1})\ge_{\rm cx}\sum_{i=1}^n  \theta_i h( X_i)$ holds,  where $Y\ge_{\rm cx} Z$ means $\E[\phi(Y)] \ge \E[\phi(Z)]$ for all convex $\phi$ such that the expectations exist. 
With finite mean, a risk-averse expected utility agent would favour   diversification, a well-known phenomenon; see, e.g., \cite{S67}. 
Although   $
 g(X_1) \le_{\rm st}\sum_{i=1}^n  \theta_i g( X_i)
 $ fails to hold in case $g$ is bounded,  non-diversification is still preferred for $\VaR_p$ with specific $p$ when $g(x)=x\wedge c$. We present a generalization, allowing a different threshold for each loss, in the following result.

\begin{theorem}\label{prop:bounded}
   Let $X,X_{1},\dots,X_{n}$ be WNAID super-Pareto random variables and $Y_i= X_i\wedge c_i$ where $c_i\ge z_{X}$ for each $i\in[n]$. Suppose that   $(\theta_1,\dots,\theta_n)\in\Delta_n$ such that $\theta_i>0$ for at least two $i\in [n]$, and denote by  $c=\min\{c_1\theta_1,\dots,c_n\theta_n\}$. We have 
\begin{equation*}
\p\(\sum_{i=1}^n\theta_{i}Y_{i}>  t\)=\p\(\sum_{i=1}^n\theta_{i}X_{i}>  t\)>\p\(X> t\)= \p\(Y_i> t\), ~~~i\in[n]\end{equation*}
for $t\in(z_{X},c]$,  and  
$$   \VaR_{p}\left( \sum_{i=1}^n \theta_{i}Y_{i}\right) >  \sum_{i=1}^n\theta_i\VaR_{p}(Y_i)$$  for $p\in(0,  \P(X\le c) ) $.
\end{theorem}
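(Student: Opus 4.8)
The plan is to derive the two displayed claims from Theorem~\ref{thm:1} by exploiting the fact that on the range $t \in (z_X, c]$ the truncated variables $Y_i = X_i \wedge c_i$ agree in distribution with $X_i$ up to the relevant level, and that the weighted sum $\sum_i \theta_i Y_i$ cannot drop below $\sum_i \theta_i X_i$ on that range because each $X_i$ would have to exceed $c_i$, forcing the sum past $\sum_i \theta_i c_i \ge c \ge t$. First I would establish the key pointwise identity $\{\sum_i \theta_i Y_i > t\} = \{\sum_i \theta_i X_i > t\}$ for $t \le c$: the inclusion ``$\supseteq$'' needs an argument, since truncation can only decrease the sum, but whenever $\sum_i\theta_i X_i>t$ while $\sum_i\theta_i Y_i\le t\le c$, at least one coordinate must have been truncated, i.e. $X_j\ge c_j$ for some $j$; but then $\sum_i\theta_iY_i\ge \theta_j c_j\ge c\ge t$ with the inequality to the level being met, and a short case analysis (treating the boundary $t=c$ and strictness of $P(X>z_X)$) upgrades this to the strict inequality $\sum_i\theta_iY_i>t$, contradicting $\sum_i\theta_iY_i\le t$. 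Combining this identity with Theorem~\ref{thm:1}'s strict inequality $X <_{\rm st} \sum_i \theta_i X_i$ (valid since at least two $\theta_i>0$), evaluated at any $t$ with $P(X>t)\in(0,1)$, and noting $P(Y_i>t)=P(X>t)$ for $t<c_i$, yields the first displayed chain of (in)equalities for $t\in(z_X,c]$.

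For the VaR statement, I would argue as follows. Since $c_i \ge z_X$, we have $\VaR_p(Y_i) = \VaR_p(X_i \wedge c_i) = \VaR_p(X) \wedge c_i$, and for $p < P(X \le c) \le P(X \le c_i)$ this equals $\VaR_p(X)$ outright (the quantile has not yet reached the truncation level $c_i$), so $\sum_i \theta_i \VaR_p(Y_i) = \VaR_p(X)$. On the other side, $\VaR_p(X) \in (z_X, c]$ for $p$ in the stated range — here one uses that $z_X$ is the essential infimum and $\VaR_{p}(X)\le \VaR_{P(X\le c)^-}(X) \le c$, with care about the exact endpoint — so by the first part, applied at $t = \VaR_p(X)$ (or at a value slightly above, to handle the case where $\VaR_p(X)$ itself sits at a jump), $P(\sum_i \theta_i Y_i > \VaR_p(X)) > P(X > \VaR_p(X)) \ge 1-p$, which says precisely $\VaR_p(\sum_i \theta_i Y_i) > \VaR_p(X) = \sum_i\theta_i\VaR_p(Y_i)$.

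The main obstacle I anticipate is handling the boundary and atom cases cleanly: the chain of equalities is asserted on the half-open interval $(z_X, c]$, and one must verify that the strict inequality $P(\sum_i\theta_iX_i>t)>P(X>t)$ from Theorem~\ref{thm:1} is actually available at every such $t$ — in particular one needs $P(X>t)<1$, which holds because $t>z_X$, and $P(X>t)>0$, which can fail only if $t$ is past the essential supremum, impossible here since $X$ is super-Pareto hence unbounded. A second delicate point is the translation between ``$\VaR_p(\sum\theta_iY_i)>t$'' and the tail-probability inequality at $t=\VaR_p(X)$ when the distribution of $\sum\theta_iY_i$ has a jump at that level; I would handle this by choosing the test point $t$ to be $\VaR_p(X)$ when $\VaR_p(X)>z_X$ and invoking right-continuity of the distribution function together with the already-established strict inequality on all of $(z_X,c]$ to push the conclusion through. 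Everything else is routine manipulation of quantile functions and the definition of first-order stochastic dominance.
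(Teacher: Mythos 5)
Your proposal follows essentially the same route as the paper's proof: the pointwise event identity between $\{\sum_i\theta_iY_i\le t\}$ and $\{\sum_i\theta_iX_i\le t\}$ for $t\le c$ (via ``if any truncation is active, the truncated sum already reaches $c$''), combined with the strictness in Theorem \ref{thm:1}, gives the first display; and the VaR claim is reduced to the tail inequality evaluated at $t=\VaR_p(X)$. The paper's second step instead runs through the chain $\sum_i\theta_i\VaR_p(Y_i)\le\VaR_p(X)<\VaR_p(\sum_i\theta_iX_i)\wedge c\le\VaR_p((\sum_i\theta_iX_i)\wedge c)\le\VaR_p(\sum_i\theta_iY_i)$, but that is the same idea in different clothing. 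One step of yours is stated backwards: in general $\p(X>\VaR_p(X))\le 1-p$, not $\ge 1-p$, since $F_X(\VaR_p(X))\ge p$ by right-continuity. The argument survives because a super-Pareto distribution is continuous on $(z_X,\infty)$ --- the convex transform $f$ can be flat only on an initial interval --- so at $t=\VaR_p(X)>z_X$ one has $F_X(t)=p$ exactly, giving $\p(\sum_i\theta_iY_i>t)>\p(X>t)=1-p$, which is what you need. You should make that continuity explicit: it is also precisely what settles the two endpoint issues you flag, namely $\VaR_p(X)<c$ for $p<\p(X\le c)$ (no jump of $F_X$ at $c$) and $\VaR_p(X)>z_X$ for the relevant $p$; the paper's own proof implicitly relies on the same facts.
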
 
 %
% 
% As $Y_{1},\dots,Y_{n}$ are bounded and integrable, for $p\in (0,1)$ and $(\theta_1,\dots,\theta_n)\in\Delta_n$,  $\ES_{p}\(\theta_{1}Y_{1}+\dots+\theta_{n}Y_{n}\)\le\theta_{1}\ES_{p}(Y_{1})+\dots+\theta_{n}\ES_{p}(X_{n})$. However, $\ES_{p}\(\theta_{1}Y_{1}+\dots+\theta_{n}Y_{n}\)$ can be too large to be held by financial institutions. Thus, VaR and RVaR still play an important role even if Pareto losses are bounded.   For $\rho=\VaR$ or $\RVaR$,  the convexity property may or may not hold,  depending on the choice of threshold. 
%We will see that diversification penalty still exists in this model.

Theorem \ref{prop:bounded} states that for a fixed weight vector of positive components and a fixed  $p\in (0,1)$, if the  thresholds $c_1,\dots,c_n$ are high enough, % implying that $c$ is large,
then non-diversification is better than diversification.
% when the risk is measured by $\VaR_p$.
A closely related observation was made by \cite{IW07}:
For iid symmetric infinite-mean stable random variables truncated by a sufficiently high threshold, diversification makes the portfolio “more spread out” and thus more risky.
% $\p\(\sum_{i=1}^n\theta_{i}Y_{i}>  t\)$ coincides with $\p\(\sum_{i=1}^n\theta_{i}X_{i}>  t\)$, and \eqref{eq:result-VaR-2}  holds for $p\in(0,1)$ not too large. 
 %The second question is whether 
%\begin{align}\label{eq:q2}
%\VaR_{p}\( \sum_{i=1}^n \theta_{i}X_{i}\)\ge \sum_{i=1}^n \theta_{i}\VaR_{p}(X_{i})
%\end{align}
%holds for $(\theta_1,\dots,\theta_n)\in\Delta_n$ and independent extremely heavy-tailed Pareto losses $X_1,\dots,X_n$ with possibly different tail parameters.
% It remains unclear whether \eqref{eq:result-VaR-2} or its non-strict version holds for iid super-Pareto losses $X_1,\dots,X_n$ with possibly different distributions.

 \subsection{No diversification for a single agent} 
%\label{app:single-agent}

Next, we formalize the decisions of a single agent in a risk sharing pool. 
 Suppose that $Y_1,\dots,Y_n$ are WNAID super-Pareto and $Y\laweq Y_1$.
From now on, we will assume that  $\mathcal X_\rho$ contains the random variables in $\mathcal L$.
% ; this puts some restrictions on $v$ for $E_v$ since  super-Pareto random variables do not have finite mean. 
The following result on the diversification penalty of super-Pareto losses for a monotone agent follows directly from Theorem \ref{thm:1}. 
% Next we formalize the decisions of a single agent in a   risk sharing pool. 
%  We consider model A, where $Y_1,\dots,Y_n$ are WNAID super-Pareto and $Y\laweq Y_1$,  and model B, where $Y=X\id_A $ and $Y_i=X_i\id_{A_i}$ for $i\in [n]$ in Theorem \ref{thm:1} (ii).
% From now on, we will assume that  $\mathcal X_\rho$ contains the random variables in models A and B; this puts some restrictions on $v$ for $E_v$ since  super-Pareto random variables do not have finite mean. 
% The following result on the diversification penalty of super-Pareto losses for a monotone agent follows directly from Theorem \ref{thm:1}. 

\begin{proposition}\label{prop:risk-meas}
 For $(\theta_1,\dots,\theta_n)\in\Delta_n$ and  a   weakly monotone risk measure $\rho:\mathcal X_\rho\rightarrow \overline \R$,  we have
\begin{align}\label{no convexity}
\rho\left( \sum_{i=1}^n \theta_{i}Y_{i}\right)\ge \rho(Y).
\end{align}
 The inequality in \eqref{no convexity} is strict if $\rho$ is mildly monotone and $\theta_i>0$ for at least two $i\in [n]$.
\end{proposition}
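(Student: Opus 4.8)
The plan is to derive Proposition \ref{prop:risk-meas} directly from Theorem \ref{thm:1} using the definitions of weak and mild monotonicity. The starting point is the observation that Theorem \ref{thm:1} gives $Y \le_{\mathrm{st}} \sum_{i=1}^n \theta_i Y_i$, since $Y_1,\dots,Y_n$ are WNAID super-Pareto and $Y \laweq Y_1$, and $(\theta_1,\dots,\theta_n) \in \Delta_n$. Then by property (b), weak monotonicity of $\rho$, we immediately obtain $\rho(Y) \le \rho\left(\sum_{i=1}^n \theta_i Y_i\right)$, which is exactly \eqref{no convexity}. (One should note that this uses $Y \in \mathcal X_\rho$ and $\sum_{i=1}^n \theta_i Y_i \in \mathcal X_\rho$, which is guaranteed by the standing assumption that $\mathcal X_\rho$ contains the random variables in $\mathcal L$, together with $Y_1 \in \mathcal L$ and $\sum \theta_i Y_i \in \mathcal L$.)

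For the strict inequality, I would proceed as follows. Assume $\rho$ is mildly monotone and $\theta_i > 0$ for at least two indices $i \in [n]$. By the ``Moreover'' clause of Theorem \ref{thm:1}, we then have the strict stochastic dominance $Y <_{\mathrm{st}} \sum_{i=1}^n \theta_i Y_i$. The obstacle here is that mild monotonicity is stated in terms of the almost-sure strict order $\p(X < Y) = 1$, not in terms of strict stochastic dominance, so I cannot apply property (c) directly. The bridge is the same coupling construction already used in the excerpt just before Proposition \ref{prop:risk-meas}: let $g$ and $h$ be the left quantile functions of $\sum_{i=1}^n \theta_i Y_i$ and $Y$ respectively, let $U$ be uniform on $[0,1]$, and set $Z = h(U)$ and $W = g(U)$. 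Then $Z \laweq Y$ and $W \laweq \sum_{i=1}^n \theta_i Y_i$, and $Z \le W$ almost surely since $h \le g$ pointwise (as strict stochastic dominance implies $\p(Y>t) \le \p(\sum \theta_i Y_i > t)$ for all $t$, hence $h(u) \le g(u)$ for all $u \in (0,1)$).

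The key step is then to upgrade $Z \le W$ a.s.\ to $\p(Z < W) = 1$. Strict stochastic dominance $Y <_{\mathrm{st}} \sum_{i=1}^n \theta_i Y_i$ says $\p(Y > t) < \p(\sum \theta_i Y_i > t)$ for every $t$ with $\p(Y > t) \in (0,1)$; this translates to $h(u) < g(u)$ for Lebesgue-almost every $u$ in the open interval where $h$ is not at its extreme values, and in fact one checks that $h(U) < g(U)$ holds with probability one. Granting this, $\p(Z < W) = 1$, so by mild monotonicity (property (c)) $\rho(Z) < \rho(W)$, and by law invariance (property (a)) $\rho(Y) = \rho(Z) < \rho(W) = \rho\left(\sum_{i=1}^n \theta_i Y_i\right)$, which is the claimed strict inequality. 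I expect the main technical point to be the careful verification that strict stochastic dominance yields $\p(h(U) < g(U)) = 1$ rather than merely $h(U) \le g(U)$ a.s.; this requires handling the endpoints of the support and the (at most countably many) atoms carefully, but it is a standard quantile-function argument and the paper may well simply cite the construction already displayed above Proposition \ref{prop:risk-meas}.
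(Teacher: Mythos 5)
Your proposal follows the same route as the paper: Proposition \ref{prop:risk-meas} has no separate proof in the appendix and is treated as a direct consequence of Theorem \ref{thm:1} via exactly the quantile coupling displayed just above it in Section \ref{sec:4}. The non-strict part is complete as you give it, since weak monotonicity is defined directly in terms of $\le_{\rm st}$.

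For the strict part you have correctly isolated the crux — upgrading $Z\le W$ a.s.\ to $\p(Z<W)=1$ — but this step is not the routine bookkeeping you suggest, and in fact it can fail. It does go through whenever $F_Y$ is continuous: then $F_Y(h(u))=u\in(0,1)$ for every $u\in(0,1)$, so the strict dominance inequality of Theorem \ref{thm:1} applied at the point $x=h(u)$ gives $F_W(h(u))<u$ and hence $g(u)>h(u)$ for \emph{every} $u\in(0,1)$, with no a.e.\ or endpoint issues to handle. However, the super-Pareto class permits an atom at the essential infimum: take $f(y)=\max(y,2)$, so that $Y\laweq f(P)$ with $P\sim\mathrm{Pareto}(1)$ has $\p(Y=2)=1/2$. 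For two iid copies with $\theta_1=\theta_2=1/2$, the sum $W=\sum_i\theta_iY_i$ then has $\p(W=2)=1/4$, so $h(u)=g(u)=2$ for all $u\in(0,1/4]$ and $\p(Z=W)\ge 1/4$; mild monotonicity cannot be invoked. This is not merely a defect of the coupling: in this example $\VaR_p(\sum_i\theta_iY_i)=\VaR_p(Y)=2$ for $p\in(0,1/4]$, and $\VaR_p$ is mildly monotone according to Proposition \ref{prop:mildly}, so the strict conclusion itself fails in that range (compare \eqref{eq:result-VaR}). The argument is therefore complete only when $F_Y$ has no atom at its essential infimum — which covers all Pareto distributions and every super-Pareto law generated by a strictly increasing $f$ — and you should either impose that (mild) restriction and verify the continuous case explicitly as above, or acknowledge that the general atom case is a genuine exception rather than "a standard quantile-function argument." The paper's own sketch glosses over the same point.
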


We distinguish strict and non-strict inequalities in \eqref{no convexity} because a strict inequality has stronger implications on the optimal decision of an agent.   As an important consequence of Proposition \ref{prop:risk-meas}, for $p\in (0,1)$ and $(\theta_1,\dots,\theta_n)\in \Delta^n$, 
\begin{equation}\label{eq:result-VaR}\VaR_{p}\left( \sum_{i=1}^n \theta_{i}Y_{i}\right)\ge \VaR_p(Y),\end{equation} 
and the inequality is strict if $\theta_i>0$ for at least two $i\in [n]$. 
% \begin{equation}\label{eq:result-VaR-2}\VaR_{p}\left( \sum_{i=1}^n \theta_{i}Y_{i}\right) >  \sum_{i=1}^n\theta_i\VaR_{p}(Y_i).\end{equation}  
Inequality \eqref{eq:result-VaR} and its strict version will be referred to as the diversification penalty for $\VaR_p$. Since all commonly used decision models are mildly monotone,
Proposition \ref{prop:risk-meas} and \eqref{eq:result-VaR} suggest that diversification of super-Pareto losses is detrimental.  
\begin{remark}
Inequality \eqref{eq:result-VaR}, known as \emph{superadditivity} of VaR, is different from the asymptotic superadditivity of VaR in the literature, that is, if $X_1,\dots,X_n$ are iid and regularly varying  with tail parameter less than 1 (see Remark \ref{rem:EVT}),
$$\lim_{p\rightarrow 1}\frac{\VaR_p(X_1+\dots+X_n)}{\VaR_p(X_1)+\dots+\VaR_p(X_n)}> 1.$$
See \cite{alink2004diversification},  \cite{ELW09}, \cite{mainik2010optimal}, and \cite{zhu2023asymptotic} for the asymptotic superadditivity of VaR in the presence of dependence of risks.  By contrast, superadditivity of VaR in \eqref{eq:result-VaR} holds for all $p\in(0,1)$ and it is not in any asymptotic sense. 
As only minimal assumptions (i.e., monotonicity) on risk measures are imposed in our analysis,  the asymptotic superadditivity of VaR is not sufficient to obtain  the result in Proposition \ref{prop:risk-meas}, and hence it cannot be used to derive the equilibria of the risk exchange market in Section \ref{app:r1}.
\end{remark} 
%\begin{remark}[Excess-of-loss coverage on  an aggregate basis]
%Let $X,X_{1},\dots,X_{n}$ be WNAID super-Pareto losses.
%If the excess-of-loss coverage is provided on  an aggregate basis,  then stochastic dominance holds as $ X \wedge c\le_{\rm st} (\sum_{i=1}^n \theta_iX_i)\wedge c$ where $c>z_{X}$ is the threshold.
%Hence, for any weakly monotone risk measure $\rho:\mathcal X\rightarrow \R$, we have $\rho( X \wedge c )\le\rho( (\sum_{i=1}^n \theta_iX_i )\wedge c)$, and  a diversification penalty exists for $\rho$.
%Unlike the situation of model A in Proposition \ref{prop:risk-meas}, strict inequality may not hold for $\rho=\VaR_p$   because $X \wedge c$ and $(\sum_{i=1}^n \theta_iX_i)\wedge c $  have the same $p$-quantile $c$   for large $p$. Nevertheless, for the expected utility preference  
%$
%E_v,
%$
%we have 
% $$ \E[v( X \wedge c )] < \E[ v  ((\theta_1X_1+\dots+\theta_nX_n)\wedge c)],$$
% for $c>z_{X}$ and $v$ strictly increasing on $[z_{X},c]$. This is because $E_v$ is   strictly monotone in the sense that for $X\le_{\rm st}Y $ taking values in $[z_{X},c]$ and $X\not \laweq Y$, we have $E_v(X)<E_v(Y)$. 
% \end{remark}

Proposition \ref{prop:risk-meas} further leads to the following optimal decision for an agent in a market where several WNAID super-Pareto losses are present. For vectors $\mathbf x=(x_1,\dots,x_n)\in\R^n$ and $\mathbf y=(y_1,\dots,y_n)\in\R^n$, their dot product is $\mathbf x\cdot \mathbf y=\sum_{i=1}^nx_iy_i$ and %the $L^1$-norm of 
%$\mathbf x$ is
we denote by $\|\mathbf x\|=\sum_{i=1}^n|x_i|$.  Suppose that the agent needs to decide on a position $\mathbf w\in \R_+^n$ across these losses to minimize the total risk.  The agent faces a total loss 
$
\mathbf w\cdot \mathbf Y -g(\|\mathbf w\|)$ where the function $g$ represents a compensation that depends   on $\mathbf w$ through  $\|\mathbf w\|$, and 
$\mathbf Y=(Y_1,\dots,Y_n)$.
The agent's optimization problem then becomes \begin{equation}\label{eq:opt1}
\mbox{to minimize~~~} \rho\(\mathbf w\cdot \mathbf Y -g(\|\mathbf w\|) \) \mbox{~~~subject to $\mathbf w\in\mathbb R_+^n$ and $\|\mathbf w\|=w$ with given $w>0$,}
\end{equation}
or
\begin{equation}\label{eq:opt2}
\mbox{to minimize~~~} \rho\(\mathbf w\cdot \mathbf Y -g(\|\mathbf w\|) \) \mbox{~~~subject to $\mathbf w\in\mathbb R_+^n$}.
\end{equation} 
For $i\in [n]$, let $\mathbf e_{i,n} $ be the $i$th column vector of the $n\times n$ identity matrix, and 
%Let $\mathbf e_{i,n}$ be a vector on $\R^n$, such that the $i$th component of $\mathbf e_{i,n}$ is one, and all the other components are zero.  
$E_w=\{w\mathbf e_{i,n}: i \in [n]\}$ for $w\ge 0$, which represents the positions of  taking only one loss with exposure $w$.
%If the agent is equipped with a mildly monotone risk measure, Proposition \ref{prop:risk-meas} implies that she should avoid diversification. 
 \begin{proposition}\label{prop:concentrate}
  Let $\rho:\mathcal X_\rho\rightarrow \overline \R$ be  weakly monotone and $g:\R\to\R$.
 \begin{enumerate}[(i)]
 \item 
 %For model A, 
 If $\rho$ is mildly monotone, then the set of minimizers 
 of \eqref{eq:opt1} is $E_w$, and that  
 of \eqref{eq:opt2} is contained in $\bigcup_{w\in\R_+} E_w$.
% we have
% $$\argmin_{\mathbf w\in\mathbb R_+^n,~\|\mathbf w\|=w}\rho\(\mathbf w\cdot \mathbf Y -g(\|\mathbf w\|) \) =   E_w, \mbox{~~~~for $w\in \R_+$};$$ 
%  $$\argmin_{\mathbf w\in\mathbb R_+^n}\rho\(\mathbf w\cdot \mathbf Y-g(\|\mathbf w\|)\)\subseteq\bigcup_{w\in\R_+} E_w.$$
  \item  %For models A and B, %where $\p(A_1)=\dots=\p(A_n)$,  
 If \eqref{eq:opt1} has an optimizer, then it has an optimizer in $E_w$;
 if  \eqref{eq:opt2} has an optimizer, then it has an optimizer in $\bigcup_{w\in\R_+}  E_w$. 
  \end{enumerate}
 % \com{$\p(A_1)=\dots=\p(A_n)$?}
\end{proposition}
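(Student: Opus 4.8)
The plan is to reduce both optimization problems to a choice of weights in $\Delta_n$ and then quote Theorem~\ref{thm:1} together with the (weak or mild) monotonicity of $\rho$, essentially as in Proposition~\ref{prop:risk-meas}. For \eqref{eq:opt1}, fix $w>0$ and let $\mathbf w\in\R_+^n$ be feasible, so $\|\mathbf w\|=w$; putting $\theta_i=w_i/w$ gives $(\theta_1,\dots,\theta_n)\in\Delta_n$ and $\mathbf w\cdot\mathbf Y=w\sum_{i=1}^n\theta_iY_i$, while the compensation $g(\|\mathbf w\|)=g(w)$ is the same for every feasible $\mathbf w$. Hence \eqref{eq:opt1} is the problem of minimizing $\rho\(w\sum_{i=1}^n\theta_iY_i-g(w)\)$ over $(\theta_1,\dots,\theta_n)\in\Delta_n$, and among the feasible $\mathbf w$ those with at most one positive coordinate are precisely the points of $E_w$ (since $w>0$).

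Next I would record that weak monotonicity of $\rho$ already forces law invariance: $X\laweq Y$ implies both $X\le_{\rm st}Y$ and $Y\le_{\rm st}X$, hence $\rho(X)=\rho(Y)$. By Theorem~\ref{thm:1}, $Y\le_{\rm st}\sum_{i=1}^n\theta_iY_i$, with strict stochastic dominance as soon as $\theta_i>0$ for at least two $i$. The affine map $x\mapsto wx-g(w)$ is strictly increasing (as $w>0$), so it preserves both $\le_{\rm st}$ and $<_{\rm st}$; therefore $wY-g(w)\le_{\rm st}w\sum_{i=1}^n\theta_iY_i-g(w)$. Applying weak monotonicity of $\rho$ and law invariance at the points of $E_w$ gives
\begin{equation*}
\rho\(wY_j-g(w)\)=\rho\(wY-g(w)\)\le\rho\(w\sum_{i=1}^n\theta_iY_i-g(w)\),\qquad j\in[n].
\end{equation*}
When $\rho$ is mildly monotone and at least two $\theta_i$ are positive, this inequality is strict: starting from $wY-g(w)<_{\rm st}w\sum_{i=1}^n\theta_iY_i-g(w)$ one passes to an almost surely strict coupling of the two sides via the left-quantile construction already used in Section~\ref{sec:4}, and then mild monotonicity applies. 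Equivalently, one may simply invoke Proposition~\ref{prop:risk-meas} for the risk measure $X\mapsto\rho(wX-g(w))$, which inherits weak, resp.\ mild, monotonicity from $\rho$.

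The conclusions then follow by elementary bookkeeping. For \eqref{eq:opt1} with $w>0$ fixed: every point of $E_w$ attains the value $\rho(wY-g(w))$, every feasible $\mathbf w$ with two or more positive coordinates attains a value that is $\ge$ this and $>$ this when $\rho$ is mildly monotone, and these two families together exhaust the feasible set. Hence the minimizer set of \eqref{eq:opt1} equals $E_w$ when $\rho$ is mildly monotone, proving (i); and in general, if \eqref{eq:opt1} has an optimizer then its optimal value is $\rho(wY-g(w))$, which is attained on $E_w$, proving (ii). For \eqref{eq:opt2}, the same estimate shows $\inf_{\mathbf w\in\R_+^n}\rho(\mathbf w\cdot\mathbf Y-g(\|\mathbf w\|))=\inf_{w\ge0}\rho(wY-g(w))$, the inner infimum being attained on $E_w$ for each $w\ge0$ (with $E_0=\{\mathbf 0\}$). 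Thus if $\mathbf w^\star$ optimizes \eqref{eq:opt2} with $\|\mathbf w^\star\|=w^\star$, then $w^\star\mathbf e_{1,n}\in E_{w^\star}$ is also an optimizer, giving (ii); and under mild monotonicity $\mathbf w^\star$ cannot have two or more positive coordinates, so $\mathbf w^\star\in E_{w^\star}\subseteq\bigcup_{w\in\R_+}E_w$, giving (i).

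The only genuinely delicate step is the passage from strict stochastic dominance to an almost surely strict inequality between a coupled pair, which is what mild monotonicity requires; but this coupling is exactly the one constructed in Section~\ref{sec:4} (via left quantile functions, cf.\ \citet[Theorem~1.A.1]{SS07}), so it can be quoted rather than reproved. The remaining points are routine: that the losses appearing lie in $\mathcal X_\rho$ (covered by the standing assumption that $\mathcal X_\rho$ contains the random variables in $\mathcal L$), the boundary case $w=0$ where $E_0=\{\mathbf 0\}$, and the bookkeeping with possible values $\pm\infty$ of $\rho$, which causes no trouble since weak monotonicity transports infinities correctly and mild monotonicity is stated as a strict inequality in $\overline\R$.
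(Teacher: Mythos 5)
Your proposal is correct and follows essentially the same route as the paper, which offers no separate written proof but states that the result "follows directly from Theorem \ref{thm:1} and Proposition \ref{prop:risk-meas}"; your reduction to simplex weights, the application of the (strict) stochastic dominance through the increasing map $x\mapsto wx-g(w)$, and the final bookkeeping are exactly the intended deduction, spelled out. The delicate points you flag (the quantile coupling for strictness, the $w=0$ case, and values in $\overline\R$) are handled appropriately.
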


Proposition \ref{prop:concentrate} follows directly from Theorem \ref{thm:1} and Proposition \ref{prop:risk-meas}. There are almost no restrictions on $\rho$ and $g$ in Proposition \ref{prop:concentrate} other than monotonicity of $\rho$, and hence this result can be applied to many economic decision models.
This is related to the question raised in the Introduction:  By Proposition \ref{prop:concentrate}, as long as the agent's risk preference is monotone,  an agent should not diversify, under the setting of this section.
 %no-diversification is always optimal. 
 
% \begin{remark}
% Since $\ES_p$ is $\infty$ for super-Pareto losses,
% Proposition \ref{prop:concentrate}  applied to ES gives the trivial statement that every position has infinite risk. 
% %It is well-known that ES is not the right risk measure for losses without  finite mean. 
% The main context of application for Proposition \ref{prop:concentrate} should be risk measures which are finite for losses with infinite mean, such as VaR, $E_v$ with some sublinear $v$, and RVaR.
% \end{remark}
\begin{remark}\label{re:extension}
Although Propositions \ref{prop:risk-meas} and \ref{prop:concentrate} are stated for WNAID super-Pareto losses for which Theorem \ref{thm:1} can be applied, it is clear that they also hold for the following models considered in Section \ref{sec:extension}. (a) $Y_1,\dots,Y_n$ are iid and they follow a convolution of super-Pareto distributions (Proposition \ref{prop:convolution});
(b) $Y_1,\dots,Y_n$ are super-Pareto and identically distributed, and their copula is a mixture of copulas for  comonotonicity,
independence, and weak negative association with the weight on comonotonicity copula being strictly less than 1 (Proposition \ref{prop:mixture});
(c) $Y_{1},\dots,Y_{n}$ are iid random variables distributed as  $X$ beyond $x\ge z_X$ with $Y_1\ge_{\rm st} X$ where $X$ is super-Pareto, and $\rho=\VaR_p$ for $p\in[F_X(x),1)$ (Proposition \ref{prop:tail}).
%\item (Theorem \ref{prop:bounded}); 
\end{remark}

\section{Equilibrium analysis in a risk exchange economy}\label{app:r1}

 \subsection{The super-Pareto risk sharing market model}

%\subsection{The Pareto risk sharing market model}
 Suppose that there are $n\ge 2$ agents in a risk exchange market. 
 %For instance, these agents may represent insurance companies. 
  Let  $\mathbf X=(X_1,\dots,X_n)$, where $X_{1},\dots,X_{n}$ are WNAID super-Pareto random variables. The $i$th agent faces a loss $a_i X_i$, where $a_i>0$ is the initial exposure. In other words, the initial exposure vector of agent $i$ is  $\mathbf a^i=a_i \mathbf e_{i,n}$, and the corresponding loss  can be written as $\mathbf a^i \cdot\mathbf X=a_iX_i$. 
  All results in this section work for WNAID super-Pareto losses (more general than iid), but conceptually, it may be simpler (and harmless) to consider iid super-Pareto losses for an interpretation of the market.

In this risk exchange market, 
each agent decides whether and how to share the losses with the other agents. For $i\in[n]$, let $p_i\ge 0$ be the premium (or compensation) for one unit of loss $X_i$; that is, if an  agent takes $b\ge 0$ units of loss $X_i$, it receives the premium $b p_i $, which is linear in $b$. Denote by $\mathbf p=(p_1,\dots,p_n)\in \mathbb R_+^n$ the (endogenously generated) premium vector.
 Let $\mathbf w^i\in\mathbb R_+^n$ be the exposure vector of agent $i$ on $\mathbf X$ after risk sharing. The loss of agent   $i\in[n]$ after risk sharing is 
 $$L_{i}(\mathbf w^i, \mathbf p)=\mathbf w^i \cdot\mathbf X - (\mathbf w^i-\mathbf a^i)\cdot\mathbf p.$$

For each $i\in [n]$, 
assume that  agent $i$ is equipped with a risk measure $\rho_i:\X\to \R$, where $\X$ contains the convex cone generated by $\{\mathbf X\}\cup\R^n$.
Moreover,   there is a cost associated with taking a total risk position $\Vert \mathbf w^i \Vert $  different from the initial total exposure $\Vert \mathbf a^i \Vert$.
The cost is modelled by $ c_i(\Vert \mathbf w^i \Vert -\Vert \mathbf a^i \Vert) $, where $c_i$ is a non-negative convex function satisfying $c_i(0)=0$. 
Some examples of $c_i$ are $c_i(x)=0$ (no cost),  $c_i(x)=\lambda_i |x|$ (linear cost), $c_i(x)=\lambda_i x^2$ (quadratic cost), and $c_i(x)= \lambda_i x_+$ (cost only for excess risk taking), where $\lambda_i> 0$.
We denote by $c_{i-}'(x)$ and $c_{i+}'(x)$    the left and right derivatives of $c_i$ at $x\in \R$, respectively.

The above setting is called a \emph{super-Pareto risk sharing market}. 
 In this market, the goal of each agent is to choose an exposure vector so that their own risk is minimized.
 % i.e., minimizing
 % $\rho_i(L_{i}(\mathbf w^i, \mathbf p)) +  c_i(\Vert \mathbf w^i \Vert -\Vert \mathbf a^i \Vert)$ over $\mathbf w^i\in \R_+^n$, $i\in[n]$. 
 An \emph{equilibrium} of the market is 
 a tuple $\(\mathbf p^*,\mathbf w^{1*},\dots,\mathbf w^{n*}\) \in (\R_+^n)^{n+1}$ if the following  two conditions are satisfied.
\begin{enumerate}[(a)]
  \item
  Individual optimality:
 \begin{equation}\label{eq:opt-internal} \mathbf w^{i*}\in {\argmin_{\mathbf w^i\in\mathbb R_+^n}}\left \{ \rho_i\(L_{i}(\mathbf w^i, \mathbf p^*)\) + c_i(\Vert \mathbf w^i \Vert -\Vert \mathbf a^i \Vert)\right\},~~~ \mbox{for each }i\in[n].\end{equation}
  \item
 Market clearance:
  \begin{equation}\label{eq:clearance-internal}  \sum_{i=1}^{n}\mathbf w^{i*}=\sum_{i=1}^{n}\mathbf a^i.\end{equation} 
  \end{enumerate}
In this case, the vector $\mathbf p^*$ is an \emph{equilibrium price},
and $(\mathbf w^{1*},\dots,\mathbf w^{n*})$ is an \emph{equilibrium allocation}. 

Some of our results rely on a popular class of risk measures, many of which can be applied to super-Pareto losses. 
 A \emph{distortion risk measure} is defined as $\rho:\X_\rho\to \R$, via
\begin{equation}
\label{eq:distor} \rho(Y)=\int_{-\infty}^0 (h(\p(Y>x))-1)\mathrm d x+\int_0^{\infty}h(\p(Y>x))\mathrm{d}x,\end{equation}
where  $h:[0,1]\rightarrow[0,1]$, called the \emph{distortion function}, is a nondecreasing function with $h(0)=0$ and $h(1)=1$. 
The distortion risk measure $\rho$, up to sign change,  
 coincides with the \emph{dual utility} of \cite{Y87} in decision theory, and it  
  includes VaR, ES, and RVaR   as special cases (see Appendix \ref{app:A} for the definition of RVaR). Almost all  distortion risk measures  are mildly monotone (see  Proposition \ref{prop:mildly} for a precise statement). 
  We assume that $\X_\rho$ contains the convex cone generated by $\{\mathbf X\}\cup\R^n$; this always holds in case $\rho$ is VaR or RVaR, but it does not hold for $\rho$ being ES as super-Pareto losses do not have finite mean.

% Some of our results rely on a popular class of risk measures, many of which can be applied to super-Pareto losses. 
%  A \emph{distortion risk measure} is defined as $\rho:\X_\rho\to \R$, via
% \begin{equation}
% \label{eq:distor} \rho(Y)=\int_{-\infty}^0 (h(\p(Y>x))-1)\mathrm d x+\int_0^{\infty}h(\p(Y>x))\mathrm{d}x,\end{equation}
% where  $h:[0,1]\rightarrow[0,1]$, called the \emph{distortion function}, is a nondecreasing function with $h(0)=0$ and $h(1)=1$. 
% The distortion risk measure $\rho$, up to sign change,  
%  coincides with the \emph{dual utility} of \cite{Y87} in decision theory.
%  As a class of  risk measures, it  
%   includes VaR, ES, and RVaR   as special cases, and  almost all  distortion risk measures  are mildly monotone (see  Proposition \ref{prop:mildly}). 
%   We assume that $\X_\rho$ contains the convex cone generated by $\{\mathbf X\}\cup\R^n$; this always holds in case $\rho$ is VaR or RVaR, but it does not hold for $\rho$ being ES as super-Pareto losses do not have finite mean.

 \subsection{No risk exchange for super-Pareto losses}\label{sec:5}\label{sec:52}

As anticipated from Proposition \ref{prop:concentrate},   each agent's optimal strategy is to not share super-Pareto losses with others if their risk measure is mildly monotone. This observation is made rigorous in 
the following result, where we obtain a necessary condition for all possible equilibria in the market, as well as two different conditions in the case of distortion risk measures.   
%Next, we show the existence of equilibria in the market under some conditions on $\rho_1,\dots,\rho_n$.  
  As before, let $X\laweq X_1$. 
% The following proposition provides a tuple $\(\mathbf p^*,\mathbf w^{1*},\dots,\mathbf w^{n*}\)$ which forms an equilibrium for agents equipped with the same distortion risk measure.

%\iffalse
%\begin{proposition}
%In the Pareto risk sharing market, suppose that  $\alpha\in(0,1]$ and $\rho_1,\dots,\rho_n$ are mildly monotone distortion risk measures on $\X$.
% Let $p=\rho_1(X_1)=\dots=\rho_n(X_n)$, $ \mathbf p^*=(p,\dots,p)$, and $(\mathbf w^{1*},\dots,\mathbf w^{n*})$ be an $n$-permutation of $(\mathbf a^{1},\dots,\mathbf a^{n} )$.  The tuple $\(\mathbf p^*,\mathbf w^{1*},\dots,\mathbf w^{n*}\)$ is an equilibrium. 
%\end{proposition}
%
%\begin{proof}
% The clearance condition is clearly satisfied.   As distortion risk measures are translation invariant and positive homogeneous (see Section \ref{app:A} for properties of risk measures), by Proposition \ref{prop:concentrate},
%  for $i\in [n]$,
%\begin{align*}
%\min_{\mathbf w^i\in\mathbb R_+^n}\rho_i\(L_{i}(\mathbf w^i, \mathbf p^*)\)
%&=\min_{\mathbf w^i\in\mathbb R_+^n}\rho_i\(\mathbf w^i \cdot\mathbf X - (\mathbf w^i-\mathbf a^i)\cdot\mathbf p^*\)\\
%&= \min_{\|\mathbf w^i\|\in\mathbb R_+}(\rho_i\(\|\mathbf w^i\|  X_1\)-(\|\mathbf w^i\|-a_i)\rho_i(X_i))\\
%&=\rho_i(a_iX_1)=\rho_i(\mathbf a^{i}\cdot\mathbf X).
%\end{align*} 
%Note that $\rho_i\(\mathbf a^j \cdot\mathbf X - (\mathbf a^j-\mathbf a^i)\cdot\mathbf p^*\)=\rho_i(\mathbf a^{i}\cdot\mathbf X)$ for $j\in [n]$. Thus, $\mathbf a^{j}$, $j\in[n]$, minimizes the risk of the $i$th agent for $i\in[n]$. 
%\end{proof}
%\fi
 
%\begin{proposition}\label{prop:equil}
% 
%\end{proposition}
%
%

\begin{theorem}\label{th:equil}
In a super-Pareto risk sharing market, suppose that  $\rho_1,\dots,\rho_n$ are mildly monotone.
\begin{enumerate}[(i)]
\item  All equilibria $\(\mathbf p^*,\mathbf w^{1*},\dots,\mathbf w^{n*}\)$ (if they exist) satisfy $\mathbf p^*=\( p,\dots,p\)$ for some $p\in\R_+$ and $\(\mathbf w^{1*},\dots,\mathbf w^{n*}\)$ is an $n$-permutation of $(\mathbf a^1,\dots,\mathbf a^n)$.
\item Suppose that   $\rho_1,\dots,\rho_n$ are  distortion risk measures on $\X$.
The tuple $\((p,\dots,p) , \mathbf a^{1},\dots,\mathbf a^{n} \)$ is an equilibrium if   $ p$ satisfies 
\begin{equation}
\label{eq:convex}
c_{i+}'(0) \ge  p-\rho_i(X)  \ge  c_{i-}'(0)~~~~\mbox{for $i\in [n]$}.
\end{equation}
\item Suppose that   $\rho_1,\dots,\rho_n$ are  distortion risk measures on $\X$. If $(p,\dots,p)$  is an equilibrium price, then 
\begin{equation}
\label{eq:convex2}
\max_{j\in [n]}c_{i+}'(a_j-a_i) \ge  p-\rho_i(X)  \ge \min_{j\in [n]}  c_{i-}'(a_j-a_i)~~~~\mbox{for $i\in [n]$}.
\end{equation}
\end{enumerate}

\end{theorem}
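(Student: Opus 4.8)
\textbf{Proof plan for Theorem \ref{th:equil}.}

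The plan is to handle the three parts in sequence, with part (i) being the conceptual heart of the argument and parts (ii)--(iii) reducing to a one-dimensional convex optimization once part (i) is in place.

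For part (i), the strategy is to use the individual optimality condition together with Proposition \ref{prop:risk-meas} (or directly Theorem \ref{thm:1}) applied agent by agent. Fix an equilibrium $(\mathbf p^*,\mathbf w^{1*},\dots,\mathbf w^{n*})$. First I would argue that $\mathbf p^*$ must have all equal components: if $p_j^* > p_k^*$ for some $j,k$, then some agent could strictly improve by shifting a small amount of exposure from $X_j$ to $X_k$ (buying the cheaper loss and selling the more expensive one keeps $\|\mathbf w^i\|$ and hence the cost term fixed, changes the premium transfer favorably, and by mild monotonicity combined with the exchangeability/WNAID structure does not increase $\rho_i$ of the loss part) --- this contradicts individual optimality, or more carefully contradicts market clearance since every agent would want to be a net seller of $X_j$. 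Hence $\mathbf p^* = (p,\dots,p)$. With a flat price, the premium transfer for agent $i$ becomes $p(\|\mathbf w^i\| - \|\mathbf a^i\|)$, so the objective depends on $\mathbf w^i$ only through its $\ell^1$ norm and through $\rho_i(\mathbf w^i\cdot\mathbf X)$. Now fix $\|\mathbf w^i\| = w$; among all nonnegative vectors with this norm, Proposition \ref{prop:concentrate}(i) (mild monotonicity) forces the optimal $\mathbf w^i$ to lie in $E_w$, i.e.\ to be concentrated on a single coordinate. So in equilibrium each $\mathbf w^{i*} = w_i \mathbf e_{k_i,n}$ for some index $k_i$ and some $w_i\ge 0$. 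Market clearance $\sum_i \mathbf w^{i*} = \sum_i \mathbf a^i = \sum_i a_i\mathbf e_{i,n}$ then forces, coordinate by coordinate, that the multiset $\{w_i \mathbf e_{k_i,n}\}$ sums to $\{a_i\mathbf e_{i,n}\}$; since each vector on the left is supported on one coordinate and the right-hand side has exactly $a_j$ on coordinate $j$, this is only possible if the $k_i$ are a permutation of $[n]$ and $w_i = a_{k_i}$ --- that is, $(\mathbf w^{1*},\dots,\mathbf w^{n*})$ is an $n$-permutation of $(\mathbf a^1,\dots,\mathbf a^n)$. The main obstacle here is making the price-equalization argument fully rigorous: one must be careful that shifting exposure between two coordinates genuinely does not increase $\rho_i$, which uses that $X_j \laweq X_k$ together with law invariance and a suitable exchangeability property of WNAID vectors (e.g.\ that swapping two coordinates of $\mathbf X$ preserves the joint law, or at least preserves the relevant stochastic dominance), so this step should be stated and justified carefully rather than waved through.

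For part (ii), assume $\rho_1,\dots,\rho_n$ are distortion risk measures and that $p$ satisfies \eqref{eq:convex}; I want to show $((p,\dots,p),\mathbf a^1,\dots,\mathbf a^n)$ is an equilibrium. Market clearance \eqref{eq:clearance-internal} is immediate since $\sum_i \mathbf a^i = \sum_i \mathbf a^i$. For individual optimality, fix agent $i$ and consider the objective $\Phi_i(\mathbf w) = \rho_i(\mathbf w\cdot\mathbf X) - p(\|\mathbf w\| - a_i) + c_i(\|\mathbf w\| - a_i)$ over $\mathbf w\in\R_+^n$. By part (i)'s reasoning (Proposition \ref{prop:concentrate}(i)), for each fixed norm $w$ the minimum over $\{\|\mathbf w\|=w\}$ is attained on $E_w$, where $\rho_i(\mathbf w\cdot\mathbf X) = \rho_i(w X) = w\,\rho_i(X)$ by positive homogeneity of distortion risk measures. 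So it suffices to minimize the one-variable function $\phi_i(w) = w\,\rho_i(X) - p(w - a_i) + c_i(w - a_i)$ over $w\ge 0$ and check the minimizer is $w = a_i$. Writing $t = w - a_i$, this is $\psi_i(t) = t(\rho_i(X) - p) + c_i(t)$ (up to the constant $a_i\rho_i(X)$), which is convex in $t$ since $c_i$ is convex; its minimum is at $t = 0$ precisely when $0$ is in the subdifferential, i.e.\ $c_{i-}'(0) \le p - \rho_i(X) \le c_{i+}'(0)$, which is exactly \eqref{eq:convex}. (One should also note the constraint $w\ge 0$, i.e.\ $t\ge -a_i$, is not binding since $t=0$ is interior and is already optimal.) Hence $\mathbf w^{i*} = \mathbf a^i$ is optimal for each $i$, giving an equilibrium.

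For part (iii), suppose $(p,\dots,p)$ is an equilibrium price, with some equilibrium allocation $(\mathbf w^{1*},\dots,\mathbf w^{n*})$; by part (i) this allocation is a permutation, so $\mathbf w^{i*} = a_{\sigma(i)}\mathbf e_{\sigma(i),n}$ for a permutation $\sigma$, and agent $i$'s realized norm is $\|\mathbf w^{i*}\| = a_{\sigma(i)}$, with $\sigma(i)$ ranging over some index $j\in[n]$ (so $a_{\sigma(i)} = a_j$ for that $j$). Individual optimality of $\mathbf w^{i*}$ means $\mathbf w^{i*}$ minimizes $\Phi_i$; running the same reduction as in part (ii), $w = a_{\sigma(i)}$ must minimize $\phi_i$ over $w\ge 0$, equivalently $t^* = a_{\sigma(i)} - a_i$ minimizes the convex function $\psi_i(t) = t(\rho_i(X)-p) + c_i(t)$ subject to $t\ge -a_i$. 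Since $t^* = a_{\sigma(i)} - a_i \ge -a_i$ with equality only if $a_{\sigma(i)} = 0$ (excluded, as exposures are positive), $t^*$ is interior, so the first-order subdifferential condition gives $c_{i-}'(t^*) \le p - \rho_i(X) \le c_{i+}'(t^*)$ with $t^* = a_j - a_i$ for $j = \sigma(i)$. Finally, since we only know $\sigma(i)$ equals \emph{some} $j\in[n]$ and cannot pin down which, the best coordinate-free conclusion is to relax to the over/under-estimates $\min_{j\in[n]} c_{i-}'(a_j - a_i) \le c_{i-}'(a_{\sigma(i)} - a_i) \le p - \rho_i(X) \le c_{i+}'(a_{\sigma(i)} - a_i) \le \max_{j\in[n]} c_{i+}'(a_j - a_i)$, which is exactly \eqref{eq:convex2}. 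I expect parts (ii) and (iii) to be routine once the positive homogeneity of distortion risk measures and the concentration result are invoked; the only care needed is the boundary check $t^* > -a_i$ and the monotonicity of one-sided derivatives of convex functions used in the last chain of inequalities.
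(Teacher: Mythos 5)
Your parts (ii) and (iii) match the paper's proof essentially verbatim: reduce to the one--dimensional convex problem $w\mapsto w(\rho_i(X)-p)+a_ip+c_i(w-a_i)$ via concentration and positive homogeneity/translation invariance of distortion risk measures, then read off the subdifferential condition at $t=0$ (resp.\ at $t=a_j-a_i$) and relax over $j$ to get \eqref{eq:convex2}. Those steps are correct, including the boundary check $t\ge -a_i$.

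Part (i), however, has a genuine ordering problem. You establish price flatness \emph{first} and only then invoke concentration, but neither of your two arguments for flatness goes through as stated. The perturbation argument is backwards in sign: since an agent \emph{receives} the premium $b p_j$ for taking $b$ units of $X_j$, shifting exposure from the high-priced $X_j$ to the low-priced $X_k$ makes the premium transfer strictly \emph{worse}, not better; and the claim that such a swap ``does not increase $\rho_i$ of the loss part'' is not supported by the available tools --- Theorem \ref{thm:1} compares a concentrated position with a diversified one, not two diversified positions with the same total weight (indeed, moving from $(w_j,w_k)=(w,0)$ to $(w-\epsilon,\epsilon)$ strictly \emph{increases} the risk). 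Your fallback via market clearance is the right mechanism, but to run it you must already know that each agent's optimum concentrates on a single coordinate lying in $\argmax_j p_j^*$, which is a statement about optima under \emph{non-flat} prices --- exactly the step you have deferred. The paper resolves this by doing concentration first for arbitrary $\mathbf p^*$: with $p=\max_j p_j^*$ and $w=\|\mathbf w^i\|$, it bounds $\mathbf w^i\cdot\mathbf p^*\le wp$ (equality only if all mass sits on $\argmax_j p_j^*$) and then applies Theorem \ref{thm:1} to get $\rho_i(\mathbf w^i\cdot\mathbf X-wp+\mathbf a^i\cdot\mathbf p^*)\ge\rho_i(wX_1-wp+\mathbf a^i\cdot\mathbf p^*)$, strictly if two components are positive; hence every optimizer is supported on a single coordinate in $\argmax_j p_j^*$, and clearance (all components of $\sum_i\mathbf a^i$ being positive) then forces $\argmax_j p_j^*=[n]$, i.e.\ flat prices, and the permutation structure. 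Reordering your argument this way, and dropping the local-exchange heuristic, closes the gap; everything else in your plan is sound.
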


 Theorem \ref{th:equil} (i) states that, even if there is some risk exchange in an equilibrium, the agents merely exchange positions entirely instead of sharing a pool.
This observation is consistent with Theorem \ref{thm:1}, which implies that diversification among multiple super-Pareto losses increases risk in a uniform sense.
As there is no diversification in the optimal allocation for each agent,   taking any of these WNAID losses is equivalent for the agent, and the equilibrium price should be identical across losses.
    Part (ii) suggests that 
  if $c_i$ has a kink at $0$, i.e., $c_i'(0+)>0>c_i'(0-)$, then $p$ can be an equilibrium price if it is very close to $\rho_i(X)$ in the sense of \eqref{eq:convex}. 
Conversely, in part (iii),  if $p$ is an equilibrium price, then it needs to be close to $\rho_i(X)$ for $i\in [n]$ in the sense of \eqref{eq:convex2}. 
 This observation is quite intuitive because  by (i), the agents will not share losses but rather keep one of them in an equilibrium. 
  If the price of taking one unit of the loss is too far away from an agent's assessment of the loss, it may have an incentive to move away, and the equilibrium is broken.

% Further results on the super-Pareto risk sharing market \textcolor{blue}{are in Section} \ref{app:r1}. 
%  Section \ref{app:discuss-1} contains some further discussions on Theorem \ref{th:equil}.

% \subsection{Discussions on Theorem \ref{th:equil}}
% \label{app:discuss-1}
%    We discuss the results in Theorem \ref{th:equil} in this section.
   
The equilibrium price $p$ should be very close to the individual risk assessments, and hence the risk sharing mechanism does not benefit the agents. Indeed,  in (ii), the equilibrium allocation is equal to the original exposure, and there is no welfare gain.  
This is drastically different from a market considered in Section \ref{sec:53} below; all agents will benefit from transferring some losses to an external market (see Theorem \ref{th:external}).

In general, \eqref{eq:convex} and \eqref{eq:convex2} are not equivalent, but in the two cases below, they are: (a)  $a_1=\dots=a_n$;
  (b) $c_1=\dots=c_n=0$. 
 In either case, both \eqref{eq:convex} and \eqref{eq:convex2} are    a necessary and sufficient  condition for $(p,\dots,p)$ to be an equilibrium price. 
Hence, the  tuple $\(\mathbf p^*,\mathbf w^{1*},\dots,\mathbf w^{n*}\)$ is an equilibrium if and only if \eqref{eq:convex} holds and $(\mathbf w^{1*},\dots,\mathbf w^{n*})$ is an $n$-permutation of $(\mathbf a^{1},\dots,\mathbf a^{n} )$, which can be checked by Theorem \ref{th:equil} (i).
In case (a), $p$ cannot be too far away from $\rho_i(X)$ for each $i\in [n]$. In case (b),   $p=\rho_1(X)=\dots=\rho_n(X)$, and
an equilibrium can only be achieved when all agents agree on the risk of one unit of the loss and use this assessment for pricing.  

\begin{example}[Equilibrium for Pareto losses and VaR agents with no costs]
Suppose that $c_i=0$ for $i\in [n]$ and $X_1,\dots,X_n\sim \mathrm{Pareto}(\alpha)$, $\alpha\in(0,1]$. 
Let $\rho_i=\VaR_{q}$, $q\in(0,1)$, $i\in[n]$. The tuple $\(\mathbf p^*,\mathbf w^{1*},\dots,\mathbf w^{n*}\)$ is an equilibrium where  
$ \mathbf p^*=((1-q)^{-1/\alpha},\dots,(1-q)^{-1/\alpha})$,
and $(\mathbf w^{1*},\dots,\mathbf w^{n*})$ is an $n$-permutation of $(\mathbf a^{1},\dots,\mathbf a^{n} )$. For $i \in [n]$, $\rho_i\(L_{i}(\mathbf w^{i*}, \mathbf p^*)\)=\VaR_q(a_iX)=a_i (1-q)^{-1/\alpha}$.
\end{example}

 % We present in Section \ref{sec:53} a market with external agents.
 %  It turns out that in the presence of the external market, the picture is drastically different from the one in this section: All agents will benefit from transferring some losses to an external market; see Theorem \ref{th:external}. 
  
% Although the agents will not benefit from  losses with infinite mean like super-Pareto losses,  the situation changes if 
% the losses have finite mean, which will be discussed in Section \ref{sec:54}. 

We  offer a few  further technical remarks on Theorem \ref{th:equil}. First, 
Theorem \ref{th:equil} (ii)  and (iii) remain valid for all mildly monotone, translation invariant, and positively homogeneous risk measures. 
Second, if the range of $\mathbf w^i =(w_1^i,\dots,w_n^i)$ in \eqref{eq:opt-internal} 
 is constrained to $0\le w^i_j \le a_j $ for $j\in [n]$, then   $\((p,\dots,p) , \mathbf a^{1},\dots,\mathbf a^{n} \)$  in Theorem \ref{th:equil} (ii) is still an equilibrium under the condition \eqref{eq:convex}. However, the characterization statement in (i)  is no longer guaranteed, which can be seen from the proof of Theorem \ref{th:equil} in Section \ref{app:proof}. As a result, (iii) cannot be obtained either. 
 Third, the super-Pareto risk sharing market is closely related to some models considered in Section \ref{sec:extension} (see Remark \ref{re:extension}). Since these models have similar results to Theorem \ref{thm:1}, which is used to establish Theorem \ref{th:equil}, we can check that  the equilibrium in Theorem \ref{th:equil} (ii) still holds under these models.
% \item 
% The super-Pareto risk sharing market is closely related to model A in Section \ref{sec:4}.
% Since model B has similar properties to model A in Proposition \ref{prop:concentrate}, we can check that  the equilibrium in Theorem \ref{th:equil} (ii) still holds if we replace model A by model B, where the triggering events have the same probability of occurrence (i.e., $\p(A_1)=\dots=\p(A_n)$). However, we cannot guarantee that all equilibria for model B have the form in (i) since holding one of the super-Pareto risks may not be the only optimal strategy for agents in model B; see Proposition \ref{prop:concentrate}. 

\subsection{A market with external risk transfer}\label{sec:53}

In Section \ref{sec:52}, we have considered risk exchange among agents with initial super-Pareto losses.
 Next, we consider an extended market with external agents to which risk can be transferred with compensation from the internal agents. %The $n$ agents in the previous setting will be called internal agents. 

By Theorem \ref{th:equil}, agents cannot reduce their risks by sharing super-Pareto losses within the group. As such, they may seek to transfer their risks to external parties. In this context, the internal agents are risk bearers, and the external agents are  institutional investors without initial position of  super-Pareto losses.
% and the external agents are insurance companies or institutional investors (e.g., investors in CAT bonds). 
% For example, catastrophe bond (CAT bond) is one of the most common alternative risk transfer mechanisms to raise money to cover catastrophic losses of insurers. 
% In this context, the external agents are investors in CAT bonds. 

%The raised money is invested to generate proceeds. If a triggering event happens (e.g., the loss is above some threshold), a portion or all of the fund will be paid to the insurers.
%If the bond is not triggered, funds will be fully returned to the investors with a positive yield. For simplicity, we assume that CAT bonds are issued by insurers, and each CAT bond can only transfer the losses of the issuer.

 Consider   a super-Pareto risk sharing market with $n$ internal agents and   $m\ge 1$ external agents
 equipped with %
 the same  
 risk measure $\rho_{E}:\mathcal X\rightarrow \R$.
  Let $\mathbf u^j\in\mathbb R_+^n$ be the exposure vector of external agent $j\in[m]$ after sharing the risks of the internal agents. For external agent $j$, the
 loss for taking position $\mathbf u^j$ is 
  $$L_{E}(\mathbf u^j, \mathbf p)=\mathbf u^j \cdot\mathbf X -  \mathbf u^j \cdot\mathbf p,$$ where $\mathbf p=(p_1,\dots,p_n)$ is the premium vector.  
Like the internal agents, the goal of the external agents is to minimize their risk plus cost. That is, for $j\in [m]$, external agent $j$ minimizes $%r_{E,j}\(\mathbf u^j\)=
\rho_{E}\(L_{E}(\mathbf u^j, \mathbf p) \)+c_{E}( \|\mathbf u^j\| )$, 
where   $c_{E}$ is a non-negative cost function satisfying $c_{E}(0)=0$.

For tractability, 
we will also make  some simplifying assumptions on the internal agents. 
We assume that the internal agents  have the same  risk measure $\rho_I$ and the same cost function $c_I$. 
%Denote by $\rho_I$, $\rho_E$, $c_I$ and $c_E$ the risk measures and cost functions of the internal (labelled by $I$) and external (labelled by $E$) agents, respectively.  That is,  $\rho_I=\rho_1=\dots=\rho_n$, $\rho_E=\rho_{E,1}=\dots=\rho_{E,m}$, $c_I=c_1=\dots=c_n$ and $c_E=c_{E,1}=\dots=c_{E,m}$.
Assume that $c_I$ and $c_E$ are strictly convex and continuously differentiable except at $0$,
and   $\rho_I$ and $\rho_{E}$ are mildly monotone distortion risk measures defined on $\X$.
In addition, all internal agents have the same amount  $a>0$ of initial loss exposures, i.e., $a=a_1=\dots=a_n$.
Finally,  we consider the situation where the number of external agents is larger than the number of internal agents  by assuming that $m=kn$, where $k$ is a positive integer, possibly large. 

An \emph{equilibrium} of this market  is 
 a tuple $(\mathbf p^*, \mathbf w^{1*},\dots,\mathbf w^{n*}, \mathbf u^{1*},\dots,\mathbf u^{m*} ) \in (\R_+^n)^{n+m+1}$ if the following  two conditions are satisfied.
\begin{enumerate}[(a)]
  \item
  Individual optimality:
  \begin{align}\label{eq:opt-external1}
&  \mathbf w^{i*}\in {\argmin_{\mathbf w^i\in\mathbb R_+^n}}\left\{ \rho_I\(L_{i}(\mathbf w^i, \mathbf p^*)\)+c_{I}( \|\mathbf w^i\| -\|
  \mathbf a^i\|)\right\},~~~ \mbox{for each }i\in[n];
\\
 \label{eq:opt-external2}
&  \mathbf u^{j*}\in {\argmin_{\mathbf u^j\in\mathbb R_+^n}}\left\{  \rho_{E}\(L_{E}(\mathbf u^j, \mathbf p^*)    \)+ c_{E}( \|\mathbf u^j\| )\right\},~~~ \mbox{for each }j\in[m].
  \end{align}

  \item
 Market clearance:
 \begin{equation}\label{eq:clearance-external}
  \sum_{i=1}^{n}\mathbf w^{i*}+\sum_{j=1}^{m}\mathbf u^{j*}=\sum_{i=1}^{n}\mathbf a^i.
  \end{equation}
  \end{enumerate} 
  The vector $\mathbf p^*$ is an \emph{equilibrium price},
and $(\mathbf w^{1*},\dots,\mathbf w^{n*})$ and $(\mathbf u^{1*},\dots,\mathbf u^{m*})$ are \emph{equilibrium allocations} for the internal and external agents, respectively.
  Before identifying  the equilibria  in this market, 
we first make some simple observations.
Let  $$L_{E}(b)=c_{E}'(b)  + \rho_E(X) \mbox{~~~and~~~}L_{I}(b)=c_{I}'(b)  + \rho_I(X),~~~~~~~b\in \R.$$
We will write $L_{I}^-(0)=c_{I-}'(0)  + \rho_I(X)$
and   $L_{I}^+(0)=c_{I+}'(0)  + \rho_I(X)$
to emphasize that the left and right derivative of $c_I$ may not coincide at $0$; this is particularly relevant in~Theorem \ref{th:equil} (ii). On the other hand, $L_E(0)$ only has one relevant version since the allowed position is non-negative. Note that both $L_E$ and $L_I$ are  continuous except at $0$ and strictly increasing.

If an external agent takes only one source  of loss (intuitively optimal from Proposition \ref{prop:concentrate}) among $X_1,\dots,X_n$ (we use the generic variable $X$ for this loss), then $L_E(b)$ is the marginal cost 
of further increasing their position at $b X$. As a compensation, this agent will also receive $p$. Therefore, the external agent has incentives to participate in the risk sharing market if
 $p > L_{E}(0)$. If  $p \le L_{E}(0)$, due to the strict convexity of $c_E$, this agent will not take any risks.
 On the other hand, if $p \ge  L^-_{I}(0)$, which means that it is expensive to transfer the loss externally, then the internal agent has no incentive to transfer.
For a small risk exchange to benefit both parties, 
  we need $L_{E }(0)< p< L_{I}^-(0)$.
  This implies, in particular,  $$\rho_E(X)\le L_{E }(0) < p< L_{I}^-(0) \le  \rho_I(X),$$ which means that the risk is more acceptable to the 
  external agents than to the internal agents, and the price is somewhere  between the two risk assessments.
  The above intuition is helpful to understand the conditions in the following theorem. Denote by $\mathbf 0_n=(0,\dots,0)\in\R^n$.
%  This is consistent 

 %We consider the situation $L_{E+}\le L_{I-}$ and $L_{E+}>L_{I-}$ separately. 

\begin{theorem}\label{th:external}
Consider the super-Pareto risk sharing market of $n$ internal  and $m=kn$ external agents. 
Let $\mathcal E=(\mathbf p , \mathbf w^{1*},\dots,\mathbf w^{n*}, \mathbf u^{1*},\dots,\mathbf u^{m*} ) $.
\begin{enumerate}[(i)]
%\item If 
%  $ 
%L_E(a/k)  < L_I(-a),  
%  $ 
%  then there is no equilibrium.
\item 
Suppose that  
  $ 
L_E(a/k)  < L_I(-a).  $ The tuple $\mathcal E$ is an equilibrium if and only if   
$\mathbf p=(p,\dots,p)$,  $p=L_E(a/k)$,
$(\mathbf u^{1*},\dots,\mathbf u^{m*})$ is a permutation of $u^*( \mathbf e_{\lceil 1/k\rceil ,n},\dots, \mathbf e_{\lceil m/k\rceil ,n})$, $u^*=a/k$,    
and $(\mathbf w^{1*},\dots,\mathbf w^{n*})=(\mathbf 0_n,\dots,\mathbf 0_n)$.  
%  \item  Suppose that $L_E(a/k)  \ge  L_I(-a)$ and   $L_{E}(0) < L_{I}^-(0)$. Let  $u^*$ be the unique solution to 
%  \begin{equation}\label{eq:external-price}
% L_E(u) = L_I(-ku), ~~~~~u\in (0,a/k].
%  \end{equation}
% The tuple  $\mathcal E$  is an equilibrium if and only if    
%$\mathbf p=(p,\dots,p)$,    $p= L_E(u^*) $, 
%$(\mathbf u^{1*},\dots,\mathbf u^{m*})$ is a permutation of $ u^* ( \mathbf e_{\lceil 1/k\rceil ,n},\dots, \mathbf e_{\lceil m/k\rceil ,n})$,    
%and $(\mathbf w^{1*},\dots,\mathbf w^{n*})$ is a permutation of $ (a-ku^*)( \mathbf e_{1,n},\dots, \mathbf e_{n,n})$.
\item  Suppose that $L_E(a/k)  \ge  L_I(-a)$ and   $L_{E}(0) < L_{I}^-(0)$. Let  $u^*$ be the unique solution to 
  \begin{equation}\label{eq:external-price}
 L_E(u) = L_I(-ku), ~~~~~u\in (0,a/k].
  \end{equation}
 The tuple  $\mathcal E$  is an equilibrium if and only if    
$\mathbf p=(p,\dots,p)$,    $p= L_E(u^*) $, $(\mathbf u^{1*},\dots,\mathbf u^{m*})= u^* ( \mathbf e_{k_1 ,n},\dots, \mathbf e_{k_m ,n})$, and $(\mathbf w^{1*},\dots,\mathbf w^{n*})= (a-ku^*)( \mathbf e_{\ell_1,n},\dots, \mathbf e_{\ell_n,n})$, where $k_1,\dots,k_m\in[n]$ and $\ell_1,\dots,\ell_n\in[n]$ such that
$u^*\sum_{j=1}^m\id_{\{k_j=s\}}+(a-ku^*)\sum_{i=1}^n\id_{\{\ell_i=s\}}=a$ for each $s\in[n]$.

Moreover, if $u^*< a/(2k)$, then the tuple  $\mathcal E$  is an equilibrium if and only if    
$\mathbf p=(p,\dots,p)$,    $p= L_E(u^*) $, 
$(\mathbf u^{1*},\dots,\mathbf u^{m*})$ is a permutation of $ u^* ( \mathbf e_{\lceil 1/k\rceil ,n},\dots, \mathbf e_{\lceil m/k\rceil ,n})$,    
and $(\mathbf w^{1*},\dots,\mathbf w^{n*})$ is a permutation of $ (a-ku^*)( \mathbf e_{1,n},\dots, \mathbf e_{n,n})$.
\item Suppose that   $L_{E}(0)\ge L_{I}^-(0)$.
 The tuple $\mathcal E$ is an equilibrium if and only if   
$\mathbf p=(p,\dots,p)$, 
$p\in [L_{I}^-(0),  L_{E}(0)\wedge L_{I}^+(0)]$, 
$(\mathbf u^{1*},\dots,\mathbf u^{m*})=(\mathbf 0_n,\dots,\mathbf 0_n)$,    
and $(\mathbf w^{1*},\dots,\mathbf w^{n*})$ is a permutation of $ a( \mathbf e_{1,n},\dots, \mathbf e_{n,n})$.  

\end{enumerate}

\end{theorem}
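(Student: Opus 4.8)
The plan is to collapse the $n$-dimensional problem with $n+m$ agents to a one-dimensional one. Two facts drive this: every distortion risk measure is translation invariant and positively homogeneous, so $\rho(\mathbf v\cdot\mathbf X-p\|\mathbf v\|)=\rho(\mathbf v\cdot\mathbf X)-p\|\mathbf v\|$ for $\mathbf v\in\R_+^n$ and $p\in\R$; and, by Theorem~\ref{thm:1} / Proposition~\ref{prop:risk-meas}, a mildly monotone $\rho$ satisfies $\rho(\mathbf v\cdot\mathbf X)\ge\|\mathbf v\|\,\rho(X)$ for $\mathbf v\in\R_+^n$, with equality precisely when $\mathbf v$ has at most one positive component. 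First I would show that every equilibrium price must be constant: rewriting the objective of internal agent $i$ as $\rho_I(\mathbf w^i\cdot\mathbf X)-\sum_j w^i_j p_j+c_I(\|\mathbf w^i\|-a)$ up to a constant, and that of external agent $j$ as $\rho_E(\mathbf u^j\cdot\mathbf X)-\sum_s u^j_s p_s+c_E(\|\mathbf u^j\|)$, the second fact makes it optimal, for a given total exposure, to put the entire exposure on a single coordinate (strictly better than spreading it, by mild monotonicity), and the premium term forces that coordinate to lie in $\argmax_j p_j$. Hence every agent concentrates on a single coordinate of $\argmax_j p_j$; if this set is a proper subset of $[n]$, some coordinate receives zero aggregate exposure, contradicting market clearance~\eqref{eq:clearance-external}. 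So $\mathbf p=(p,\dots,p)$, as in Theorem~\ref{th:equil}(i).

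With a constant price $p$, each internal agent solves $\min_{b\ge0}\{b\rho_I(X)-bp+c_I(b-a)\}$ and each external agent solves $\min_{b\ge0}\{b\rho_E(X)-bp+c_E(b)\}$; strict convexity of $c_I$ and $c_E$ gives a unique optimal total exposure in each case, $b_I=b_I(p)$ and $b_E=b_E(p)$, both nondecreasing in $p$, with $b_I(p)=0\iff p\le L_I(-a)$, $b_I(p)=a\iff L_I^-(0)\le p\le L_I^+(0)$, and $L_I(b_I(p)-a)=p$ otherwise, while $b_E(p)=0\iff p\le L_E(0)$ and $L_E(b_E(p))=p$ otherwise. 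Since all internal agents face the same scalar problem, $\|\mathbf w^{i*}\|=b_I$ for all $i$, and likewise $\|\mathbf u^{j*}\|=b_E$ for all $j$; summing~\eqref{eq:clearance-external} over all coordinates and using $m=kn$ yields the scalar clearing equation $b_I+kb_E=a$.

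Next I would solve $b_I(p)+kb_E(p)=a$ in the three regimes. If $L_E(0)\ge L_I^-(0)$ (case (iii)), the only solution has $b_E=0$, $b_I=a$, with $p$ free over the nonempty interval $[L_I^-(0),\,L_E(0)\wedge L_I^+(0)]$: any $b_E>0$ forces $p=L_E(b_E)>L_E(0)\ge L_I^-(0)$, hence $b_I<a$ and $L_I(b_I-a)=p<L_I^-(0)$, a contradiction. If $L_E(0)<L_I^-(0)$ and $L_E(a/k)<L_I(-a)$ (case (i)), the only solution has $b_I=0$, $b_E=a/k$, $p=L_E(a/k)$, which is consistent since then $p<L_I(-a)$ keeps $b_I=0$ while $p>L_E(0)$ gives $b_E>0$. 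If $L_E(0)<L_I^-(0)$ and $L_E(a/k)\ge L_I(-a)$ (case (ii)), the function $g(u):=L_E(u)-L_I(-ku)$ is continuous and strictly increasing on $(0,a/k]$ with $g(0^+)=L_E(0)-L_I^-(0)<0\le g(a/k)=L_E(a/k)-L_I(-a)$, so it has a unique root $u^*\in(0,a/k]$, and the solution is $b_E=u^*$, $b_I=a-ku^*$, $p=L_E(u^*)=L_I(-ku^*)$. The coordinate assignments then follow from the per-coordinate clearing equations $b_E n_s+b_I m_s=a$, $s\in[n]$, where $n_s$ and $m_s$ count the external and internal agents concentrated on coordinate $s$, with $\sum_s n_s=m$ and $\sum_s m_s=n$: in case~(iii), $m_s=1$ for every $s$, so the internal allocation is a permutation of $a(\mathbf e_{1,n},\dots,\mathbf e_{n,n})$ and the external allocation is $(\mathbf 0_n,\dots,\mathbf 0_n)$; in case~(i), $n_s=k$ for every $s$, giving the stated permutation of $u^*(\mathbf e_{\lceil 1/k\rceil,n},\dots,\mathbf e_{\lceil m/k\rceil,n})$ with $u^*=a/k$; in case~(ii) the equations describe exactly the stated family, and if $u^*<a/(2k)$ then $m_s\ge2$ would give $b_E n_s=a-b_I m_s\le a-2(a-ku^*)=2ku^*-a<0$, impossible, forcing $m_s=1$ and then $n_s=k$ for every $s$.

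The converse — that each displayed tuple is indeed an equilibrium — is then checked directly from the scalar first-order conditions above together with per-coordinate market clearing. I expect the main obstacle to be the bookkeeping around the kink of $c_I$ at $0$ and the boundary $b=0$: the scalar optimizer must be characterized exactly, and for the ``only if'' directions one must rule out every alternative configuration of $(b_I,b_E,p)$ in each regime (using strict monotonicity of $b_E(\cdot)$ near $u^*$ for the uniqueness of the price in cases (i) and (ii), and the flatness of $b_I(\cdot)$ on $[L_I^-(0),L_I^+(0)]$ for the interval in case (iii)). The combinatorial step in the ``moreover'' part of~(ii) is short once the scalar picture is in hand, but it relies on the precise threshold $a/(2k)$.
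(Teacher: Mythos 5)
Your proposal is correct and follows essentially the same route as the paper's proof: concentration on a single coordinate forces a constant price, the problem collapses to scalar first-order conditions $L_E(u)=p$ and $L_I(w-a)=p$ with the kink cases at $0$, the aggregate clearing equation $w+ku=a$ is solved in the three regimes, and the per-coordinate counting argument (including the $a/(2k)$ threshold) handles the allocations. Your packaging via the monotone maps $p\mapsto b_I(p), b_E(p)$ is a slightly tidier way to organize the paper's case enumeration, but the substance is identical.
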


% To interpret Theorem \ref{th:external} (i), note that $L_E(a/k)  < L_I(-a)$ implies  $L_E(u)  < L_I(w-a)$ for all $u\in[0,a/k]$ and  $w\in[0,a]$. It means that if the price of transferring a unit of risk is in $[L_E(a/k),L_I(-a)]$, the optimal position for each internal agent will be 0, and the external agents will have the incentives to increase their exposures from 0 to more than $a/k$. In this case, the individual optimality conditions \eqref{eq:opt-external1} and \eqref{eq:opt-external2}  and the clearance condition \eqref{eq:clearance-external} cannot be satisfied at the same time. Therefore, there is no equilibrium. 

 Compared with Theorem \ref{th:equil}, where  no benefits exist from risk sharing among the internal agents, Theorem \ref{th:external} (ii) implies that  in the presence of external agents, every party in the market may get better from risk sharing. More specifically, if $L_{E}(0) < L_{I}^-(0)$, (i.e., the marginal cost of increasing an external agent's position from 0 is smaller than the marginal benefit of decreasing an internal agent's position from $a$), there exists an equilibrium price $p\in[L_{E}(0), L_{I}^-(0)]$ such that all parties in the market can improve their objectives. 
 % The condition $L_{E}(0) < L_{I}^-(0)$ is crucial to such a win-win situation, as a price less than $L_{I}^-(0)$ will motivate the internal agents to transfer risk, and a price greater than $L_{E}(0)$ will motivate the external agents  to receive risks.
 Moreover, if $u^*< a/2k$, i.e, the optimal position of each external agent is very small compared with the total position of each loss in the market, the loss $X_i$ for each $i\in[n]$, has to be shared by one internal agent and $k$ external agents to achieve an equilibrium.
  Theorem \ref{th:external} (i) shows that if $L_E(a/k)  < L_I(-a)$, all the losses will be transferred to the external agents.
  Theorem \ref{th:external} (iii) shows that if $L_{E}(0) \ge L_{I}^-(0)$, no agent will share risks.

We make further observations on Theorem \ref{th:external} (ii). From \eqref{eq:external-price}, it is straightforward to see that if $k$ gets larger,
%(more external agents are in the market), 
the equilibrium price $p$ gets smaller. Intuitively, if more external agents are willing to take risks, they have to compromise on the received compensation to get the amount of risks they want. The lower price further motivates the internal agents to transfer more risks to the external agents. Indeed, by \eqref{eq:external-price}, $ku^*$ gets larger as $k$ increases. On the other hand, $u^*$ gets smaller as $k$ increases. In the equilibrium model, each external agent will take less risk if more external agents are in the market. These observations can be seen more clearly in the example below.

%The equilibrium in Theorem \ref{th:external} is clearly invariant under permutations since all internal agents have the same objectives and so do all external agents. 
%Indeed, 
%any equilibrium in this market
%has the form in Theorem \ref{th:external}  up to a permutation. 

\begin{example}[Quadratic cost]
Suppose that the conditions in Theorem \ref{th:external} (ii) are satisfied (this implies $\rho_E(X)<\rho_I(X)$ in particular), $c_{I}(x)=\lambda_I  x^2$, and $c_{E}(x)=\lambda_E  x^2$, $x\in\R$, where $ \lambda_I,\lambda_E > 0$. We can compute the equilibrium price
%$u^*(p^*)=\(p^*-\rho_{E}(X)\)/2\lambda_E$ and $w^*(p^*)=\(p^*-\rho_{I}(X)\)/2\lambda_I+a$, where
$$p=\frac{k\lambda_I}{k\lambda_I+\lambda_E}\rho_E(X)+\frac{\lambda_E}{k\lambda_I+\lambda_E}\rho_I(X).$$
Therefore, the equilibrium price is a weighted average of $\rho_E(X)$ and $\rho_I(X)$, where the weights depend on $k$, $\lambda_I$, and $\lambda_E$. 
We also have the equlibrium allocations $\mathbf u^*=(u,\dots,u)$ and $\mathbf w^*=(w,\dots,w)$ where
$$u=\frac{\rho_I(X)-\rho_E(X)}{2(k\lambda_I+\lambda_E)}~~~~\mbox{and}~~~~w=\frac{k(\rho_E(X)-\rho_I(X))}{2(k\lambda_I+\lambda_E)}+a.$$
% It is clear that $p$ moves in the opposite direction of $k$. Moreover, if more external agents are in the market, each external agent will take fewer losses, while each internal agent will transfer more losses to the external agents.
It is clear that the above observations on Theorem \ref{th:external} (ii) hold in this example. Moreover,
if  $\lambda_I$ increases, the internal agents will be less motivated to transfer their losses. To compensate for the increased penalty, the price paid by the internal agents will decrease so that they are still willing to share risks to some extent. The interpretation is similar if $\lambda_E$ changes. Although the increase of different penalties ($\lambda_E$ or $\lambda_I$) have different impacts on the price, the increase of either $\lambda_E$ or $\lambda_I$ leads to less incentives for the internal and external agents to participate in the risk sharing market.  
\end{example}

\subsection{Risk exchange for  losses with finite mean: A contrast}
\label{sec:54}
In contrast to the settings in Sections \ref{sec:52}  and \ref{sec:53}, 
we study   losses with finite mean below, for the purpose of providing a constrast. Consider a market which is the same as the super-Pareto risk sharing market except that the losses are iid with finite mean. This market  is called a \emph{risk sharing market with finite mean}.
The following proposition shows that agents prefer to share finite-mean losses among themselves if they are equipped with ES. 
\begin{proposition}\label{prop:equil-ES}
In a  risk sharing market with finite mean, suppose that  $\rho_1=\dots=\rho_n=\ES_q$ for some $q\in (0,1)$.  Let 
$$\mathbf w^{i*}=\frac{a_i}{\sum_{j=1}^n a_j }\sum_{j=1}^{n}\mathbf a^j \mbox{ for $i\in[n]$~~~ and ~~~}
\mathbf p^*=\(\E\[X_1|A\],\dots,\E\[X_n| A\]\),$$ where $A=\{\sum_{i=1}^n a_iX_i\ge \VaR_q\(\sum_{i=1}^n a_iX_i\)\}$.
Then the tuple $\(\mathbf p^*,\mathbf w^{1*},\dots,\mathbf w^{n*}\)$ is an equilibrium.
%If $c_1,\dots,c_n$ are positive  except at $0$, then the equilibrium is unique.
\end{proposition}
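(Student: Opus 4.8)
The plan is to verify directly that the proposed tuple $(\mathbf p^*,\mathbf w^{1*},\dots,\mathbf w^{n*})$ satisfies the two defining conditions of an equilibrium: market clearance \eqref{eq:clearance-internal} and individual optimality \eqref{eq:opt-internal}. Market clearance is immediate: $\sum_{i=1}^n\mathbf w^{i*}=\big(\sum_{i=1}^n a_i\big/\sum_{j=1}^n a_j\big)\sum_{j=1}^n\mathbf a^j=\sum_{j=1}^n\mathbf a^j$. All the substance is in individual optimality, and the key structural observation is that $\mathbf w^{i*}\cdot\mathbf X=\big(a_i/\sum_j a_j\big)\,S$ is a nonnegative multiple of the aggregate loss $S:=\sum_{j=1}^n a_j X_j$, so all the equilibrium positions $\mathbf w^{1*}\cdot\mathbf X,\dots,\mathbf w^{n*}\cdot\mathbf X$ are comonotonic with $S$ and share the same tail event $A$.

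Fix an agent $i$. Since $\ES_q$ is translation invariant and $(\mathbf w^i-\mathbf a^i)\cdot\mathbf p^*$ is deterministic, the objective in \eqref{eq:opt-internal} equals $\ES_q(\mathbf w^i\cdot\mathbf X)-\mathbf w^i\cdot\mathbf p^*+\mathbf a^i\cdot\mathbf p^*+c_i(\|\mathbf w^i\|-\|\mathbf a^i\|)$, where $\mathbf a^i\cdot\mathbf p^*=a_i p_i^*$ is a constant; note also $\mathbf w^i\cdot\mathbf X\in L^1$ because the $X_j$ have finite mean and $\mathbf w^i\ge\mathbf 0_n$, so $\ES_q$ is finite throughout. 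I would then invoke the dual representation of $\ES_q$ on $L^1$, $\ES_q(Y)=\sup_{Z\in\mathcal Z_q}\E[YZ]$ with $\mathcal Z_q=\{Z\ge 0:\E[Z]=1,\ Z\le (1-q)^{-1}\}$ (standard; cf.\ Appendix \ref{app:A}), and take $Z^*=\mathbf 1_A/(1-q)$, which attains the supremum for $S$. Under the convention that $S$ is continuous at $\VaR_q(S)$ so that $\P(A)=1-q$ (the general case only requires replacing $Z^*$ by the usual atom-adjusted density supported on $\{S\ge\VaR_q(S)\}$ and equal to $(1-q)^{-1}$ on $\{S>\VaR_q(S)\}$, and interpreting $\mathbf p^*$ accordingly), we have $Z^*\in\mathcal Z_q$, $\ES_q(S)=\E[SZ^*]$, and crucially $p_j^*=\E[X_j\mid A]=\E[X_j Z^*]$ for every $j$, i.e.\ $\mathbf p^*=\E[\mathbf X Z^*]$.

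With this, for any $\mathbf w^i\in\R_+^n$ one gets $\ES_q(\mathbf w^i\cdot\mathbf X)\ge\E[(\mathbf w^i\cdot\mathbf X)Z^*]=\mathbf w^i\cdot\mathbf p^*$, so the agent's objective is bounded below by $\mathbf a^i\cdot\mathbf p^*+c_i(\|\mathbf w^i\|-\|\mathbf a^i\|)\ge\mathbf a^i\cdot\mathbf p^*$ since $c_i\ge 0$. It then remains to check that $\mathbf w^{i*}$ attains this bound. By positive homogeneity of $\ES_q$ and the structural observation above, $\ES_q(\mathbf w^{i*}\cdot\mathbf X)=\big(a_i/\sum_j a_j\big)\ES_q(S)=\big(a_i/\sum_j a_j\big)\E[SZ^*]=\E[(\mathbf w^{i*}\cdot\mathbf X)Z^*]=\mathbf w^{i*}\cdot\mathbf p^*$; moreover $\|\mathbf w^{i*}\|=\big(a_i/\sum_j a_j\big)\sum_j a_j=a_i=\|\mathbf a^i\|$, so $c_i(\|\mathbf w^{i*}\|-\|\mathbf a^i\|)=c_i(0)=0$. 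Substituting, the objective at $\mathbf w^{i*}$ equals $\mathbf a^i\cdot\mathbf p^*$, matching the lower bound, so $\mathbf w^{i*}$ lies in the argmin in \eqref{eq:opt-internal}. Together with market clearance this shows the tuple is an equilibrium.

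The only genuinely delicate step is the one I flagged: when $S$ has an atom at its $q$-quantile, $\mathbf 1_A/(1-q)$ is not a probability density and $\E[S\mid A]\neq\ES_q(S)$, so one must use the atom-adjusted extremal density $Z^*$ (with total mass $1$, equal to $(1-q)^{-1}$ on $\{S>\VaR_q(S)\}$, and constant on $\{S=\VaR_q(S)\}$) and read $\mathbf p^*$ as $\E[\mathbf X Z^*]$; this is precisely the convention under which $\E[S\mid A]=\ES_q(S)$. Everything else — translation invariance, positive homogeneity, the dual representation of $\ES_q$, and nonnegativity of the cost — is routine. Conceptually, the proof makes transparent the contrast with the super-Pareto case: here the single tail event $A$ of the aggregate loss is simultaneously worst-case for every scaled position $\mathbf w^{i*}\cdot\mathbf X$, which is exactly why pro-rata full risk sharing is individually optimal.
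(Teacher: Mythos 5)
Your proof is correct, and it takes a genuinely different route from the paper's. The paper establishes the key inequality $\ES_q(\mathbf w\cdot\mathbf X)\ge \mathbf w\cdot\mathbf p^*$ (with equality on the ray through $(a_1,\dots,a_n)$) by differentiating $r:\mathbf w\mapsto \ES_q(\mathbf w\cdot\mathbf X)$: it invokes Tasche's quantile-derivative formula $\partial r/\partial w_j=\E[X_j\mid A_{\mathbf w}]$, the first-order (subgradient) inequality for the convex function $r$, and Euler's rule for positively homogeneous functions to get equality at proportional allocations. You instead exhibit an explicit dual certificate: the maximizer $Z^*$ in the robust representation $\ES_q(Y)=\sup_{Z\in\mathcal Z_q}\E[YZ]$ evaluated at the aggregate loss $S$, for which $\mathbf p^*=\E[\mathbf X Z^*]$, so that the lower bound is immediate from the sup-representation and equality at $\mathbf w^{i*}$ follows from positive homogeneity and the fact that $\mathbf w^{i*}\cdot\mathbf X$ is a positive multiple of $S$. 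The two arguments are dual views of the same first-order condition, but yours is more self-contained (no appeal to the differentiability of ES, which requires regularity of the joint law) and it makes explicit the atom issue at $\VaR_q(S)$ — a caveat that the paper's proof also implicitly relies on, since $\E[X_j\mid A]$ coincides with the ES subgradient component only when $\P(A)=1-q$; your atom-adjusted $Z^*$ handles the general case cleanly. What the paper's route buys in exchange is brevity and a direct economic reading of $\mathbf p^*$ as the Euler/marginal-risk-contribution price.
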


A sharp contrast is visible between the equilibrium in Theorem \ref{th:equil} and that in Proposition \ref{prop:equil-ES}. For WNAID super-Pareto losses, which do not have finite mean, the equilibrium price is the same across individual losses, and  agents do not share losses at all.
For iid finite-mean losses and ES agents, each individual loss has a different equilibrium price, and  agents share all losses proportionally. 

We choose the risk measure ES here because it leads to an explicit expression of the equilibrium. Although ES is not finite for super-Pareto losses (thus, it does not fit Theorem \ref{th:equil}),
it can be approximated arbitrarily closely by RVaR (e.g., \cite{ELW18}) which fits the condition of Theorem \ref{th:equil}. By this approximation, the observation that  agents prefer diversification in Proposition \ref{prop:equil-ES} may hold if ES is replaced by RVaR, although we do not have an explicit result. Below, we provide an example of two agents with normal random variables as their risks.
%The economic interpretations remain valid if we use RVaR in Proposition \ref{prop:equil-ES}, but we do not have an explicit form.
 \begin{example}
In a risk sharing market with finite mean, suppose that there are two agents with $X_1,X_2\sim \mathrm{N}(0,1)$ being independent and $a_1=a_2=1$. Let $\rho_1=\rho_2=\RVaR_{p,q}$ where $0\leq p<q<1$. For $X\sim \mathrm{N}(\mu,\sigma^2)$, by using results for ES in Example 2.14 of \cite{MFE15}, we have
 $$\RVaR_{p,q}(X)=\mu+\sigma C_{p,q},
 \mbox{~where~} C_{p,q}=\frac{\phi(\Phi^{-1}(p))-\phi(\Phi^{-1}(q))}{q-p}.$$
 Let $\mathbf w^i=(w^i_1,w^i_2)$ for $i=1,2$ and $\mathbf p^*=(p^*,p^*)=(C_{p,q}/\sqrt{2},C_{p,q}/\sqrt{2})$. Agent $i\in \{1,2\}$  aims to minimize
 \begin{align*}
\rho_i\(L_{i}(\mathbf w^i, \mathbf p^*)\) + c_i(\Vert \mathbf w^i \Vert -\Vert \mathbf a^i \Vert)&=\RVaR_{p,q}(\mathbf w^i \cdot\mathbf X)- (\mathbf w^i-\mathbf a^i)\cdot\mathbf p^*+ c_i(\Vert \mathbf w^i \Vert -\Vert \mathbf a^i \Vert)\\
&=C_{p,q}\sqrt{(w_1^i)^2+(w_2^i)^2}- w_1^ip^*-w_2^ip^*+p^*+ c_i(w_1^i+w_2^i-1).
 \end{align*}
  Let 
  $r(x,y)=C_{p,q}\sqrt{x^2+y^2}- xp^*-yp^*=  p^* \sqrt{2x^2+2y^2} - p^*(x+y)\ge 0$ for  $(x,y)\in\R_+^2.$ It is easy to verify that $r$ is minimized when $x=y$, with $r(x,x)=0$. Moreover, $c_i(w_1^i+w_2^i-1)$ is minimized when $w_1^i+w_2^i=1$.  Therefore, $(\mathbf p^*, (0.5,0.5), (0.5,0.5))$ is an equilibrium of this market.
\end{example}

% \begin{remark}
% Proposition \ref{prop:equil-ES}, assuming iid losses with finite mean,  works for all convex risk measures. The intuition is that the value of convex risk measures can be reduced by diversification, i.e., $\rho(\lambda X+(1-\lambda)Y)\le\lambda\rho(X)+(1-\lambda)\rho(Y)$ where $\rho$ is a convex risk measure, $X$ and $Y$ are two random variables with finite mean, and $\lambda\in(0,1)$.  
% % Convex risk measures are not suitable for the case of super-Pareto risks as they will always be infinite for risks without finite mean (see e.g., \cite{filipovic2012canonical}).
% \end{remark}

 %, but a general conclusion  is not available. 
 \section{Some simple  examples based on real data}\label{sec:6}

 \subsection{Extremely heavy-tailed Pareto losses}\label{sec:62}

 In addition to the examples mentioned in the Introduction,  
we provide two further data examples: the first one on marine losses, and the second one on suppression costs of wildfires. The marine losses dataset, from the insurance data repository \texttt{CASdatasets},\footnote{Available at http://cas.uqam.ca/.} was originally collected by a French private insurer and comprises 1,274 marine losses (paid)
between January 2003 and June 2006. The wildfire dataset\footnote{See https://wildfire.alberta.ca/resources/historical-data/historical-wildfire-database.aspx.} contains 10,915 suppression costs in Alberta, Canada from 1983 to 1995. For the purpose of this section, we only provide the Hill estimates of these two datasets although a more detailed EVT analysis is available (see \cite{MFE15}). The Hill estimates of the tail indices $\alpha$ are presented in Figure \ref{fig:real-data-2}, where the black curves represent the point estimates and the red curves represent the $95\%$ confidence intervals with varying thresholds; see \cite{MFE15} for more details on the Hill estimator. As suggested by \cite{MFE15}, one may roughly chose a threshold around  the top $5\%$ order statistics of the data. Following this suggestion, the tail indices $\alpha$ for the marine losses and wildfire suppression costs are estimated as $0.916$ and $0.847$ with $95\%$ confidence intervals being $(0.674, 1.158)$ and $(0.776, 0.918)$, respectively;  thus, these losses/costs have infinite mean if they follow Pareto distributions in their tails regions.
\begin{figure}[htb]
    \begin{minipage}[t]{.49\textwidth}
        \centering
        \includegraphics[width=\textwidth]{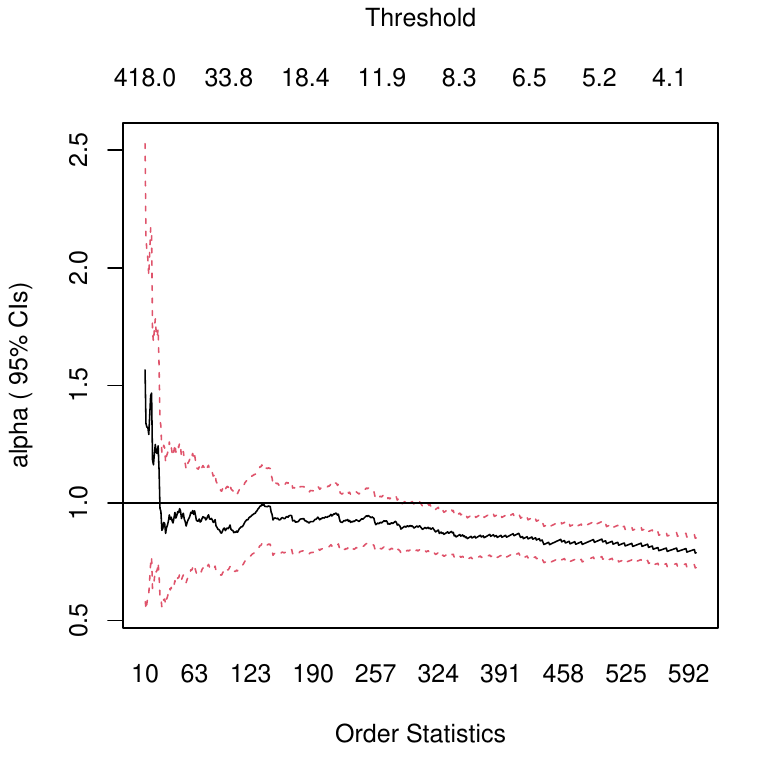}
        \subcaption{Marine losses}\label{fig:1}
    \end{minipage}
    \hfill
    \begin{minipage}[t]{.49\textwidth}
        \centering
        \includegraphics[width=\textwidth]{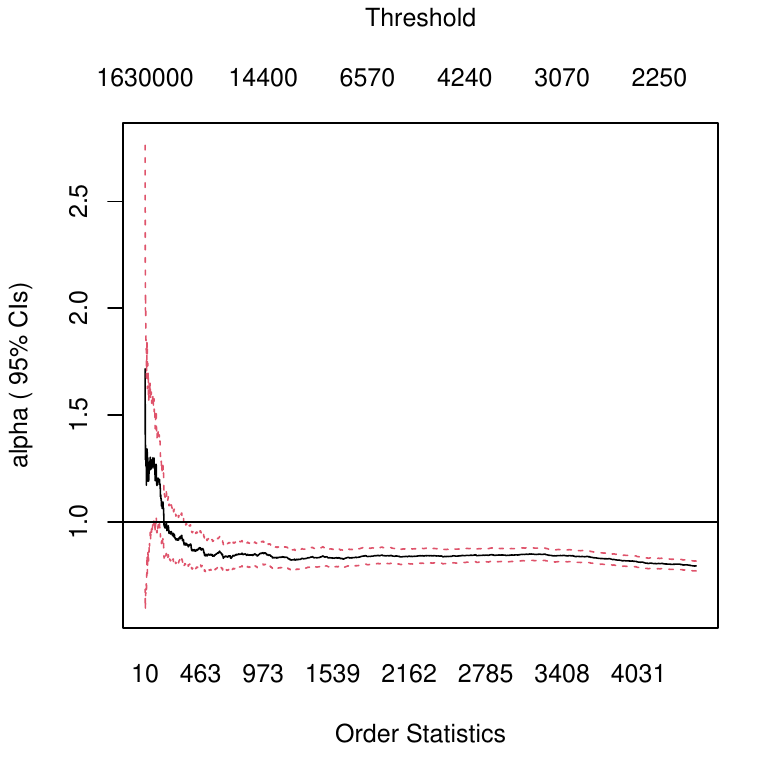}
        \subcaption{Wildfire suppression costs}\label{fig:2}
    \end{minipage}
    \caption{Hill plots  for the marine losses and wildfire suppression costs: For each risk, the Hill estimates are plotted as black curve with the $95\%$ confidence intervals being red curves.}\label{fig:real-data-2}
\end{figure}

These observations suggest that the two loss datasets may have similar tail parameters. 
%As discussed in Remark \ref{remark:GPD},  Lemma \ref{thm:1} can be applied to generalized Pareto distributions.
As one example of super-Pareto distributions,  the generalized Pareto distribution when $\xi\ge1$, is given  by 
  \begin{equation*}%\label{eq:GPD}
  G_{\xi,\beta}(x)=1-\(1+\xi\frac{x}{\beta}\)^{-1/\xi},~~~ x\ge 0,
  \end{equation*}
where   $\beta>0$. 
By Theorem \ref{thm:1}, if two loss random variables $X_1$ and $X_2$ are independent and follow generalized Pareto distributions with the same tail parameter $\alpha=1/\xi<1$, 
then, for all $p\in (0,1)$,
 \begin{align}
 \VaR_p(X_1+X_2)>\VaR_p(X_1)+\VaR_p(X_2).\label{eq:real-data}
 \end{align}  
 Even if $X_1$ and $X_2$ are not Pareto distributed, as long as their tails are Pareto, \eqref{eq:real-data} may hold for $p$ relatively large, as suggested by Proposition \ref{prop:tail}.

  We will verify \eqref{eq:real-data} on our datasets to show how the implication of Theorem \ref{thm:1} holds for real data.  
  Since the marine losses data were scaled to mask the actual losses,
 we renormalize it by multiplying the data by 500 to make it roughly on the same scale as that of the wildfire suppression costs;\footnote{The average marine losses (renormalized) and the average wildfire suppression costs are $12400$ and $12899$.} this normalization is made only for better visualization. Let  $\widehat F_1$ be
the empirical distribution  of the marine losses (renormalized) and $\widehat F_2$ be the empirical distribution  of the wildfire suppression costs. Take independent random variables $\widehat Y_1\sim \widehat F_1$ and $\widehat Y_2\sim \widehat F_2$. Let $\widehat F_1 \oplus \widehat F_2$ be the distribution with quantile function $p\mapsto \VaR_p(\widehat Y_1)+\VaR_p(\widehat Y_2)$, i.e., the comonotonic sum, and $\widehat F_1 * \widehat F_2$ be the distribution of $\widehat Y_1+\widehat Y_2$, i.e., the independent sum.

 The differences between the  distributions  $\widehat F_1 \oplus \widehat F_2$ and $\widehat F_1 * \widehat F_2$ can be seen in Figure \ref{fig:ecdfdiff}. We observe that $\widehat F_1 * \widehat F_2$ is less than $\widehat F_1 \oplus \widehat F_2$ over a wide range of loss values. In particular, the relation holds for all losses less than 267,659.5 (marked by the vertical line in Figure \ref{fig:ecdfdiff}). Equivalently, we can see from Figure \ref{fig:qsamples} that 
 \begin{align} \VaR_p(\widehat Y_1+\widehat Y_2)> \VaR_p(\widehat Y_1)+\VaR_p(\widehat Y_2)
\label{eq:real-data-2}
 \end{align}   holds  
unless $p$ is greater than $0.9847$ (marked by the vertical line in Figure \ref{fig:qsamples}). 
Recall that $\widehat F_1 * \widehat F_2\le \widehat F_1 \oplus \widehat F_2 $ is equivalent to  \eqref{eq:real-data-2} holding for all $p\in(0,1)$. Since the quantiles are directly computed from   data, thus from  distributions with bounded supports, for $p$ close enough to $1$ it must hold  that 
$ \VaR_p(\widehat Y_1+\widehat Y_2)\le \VaR_p(\widehat Y_1)+ \VaR_p(\widehat Y_2)$. 
Nevertheless,  we observe  \eqref{eq:real-data-2} for most values of $p\in(0,1)$. 
Note that the observation of \eqref{eq:real-data-2} is entirely empirical and it does not use any fitted models. 

Let $F_1$ and $F_2$ be the true distributions (unknown) of the marine losses (renormalized) and wildfire suppression costs, respectively. We are interested in whether the first-order stochastic dominance relation $F_1 *  F_2\le F_1 \oplus  F_2 $ holds. Since we do not have access to the true distributions, we generate two independent random samples of size $10^4$ (roughly equal to the sum of the sizes of the datasets, thus with a similar magnitude of randomness) from the distributions $\widehat F_1 \oplus \widehat F_2$ and $\widehat F_1 * \widehat F_2$. We treat these samples as independent random samples from $ F_1 \oplus  F_2$ and $ F_1 *  F_2$ and test the hypothesis using Proposition 1 of \cite{barrett2003consistent}. The p-value of the test is greater than $0.5$ and we are not able to reject the hypothesis $F_1 *  F_2\le F_1 \oplus  F_2 $.

\begin{figure}[t]
    \begin{minipage}[t]{.49\textwidth}
        \centering
        \includegraphics[width=\textwidth]{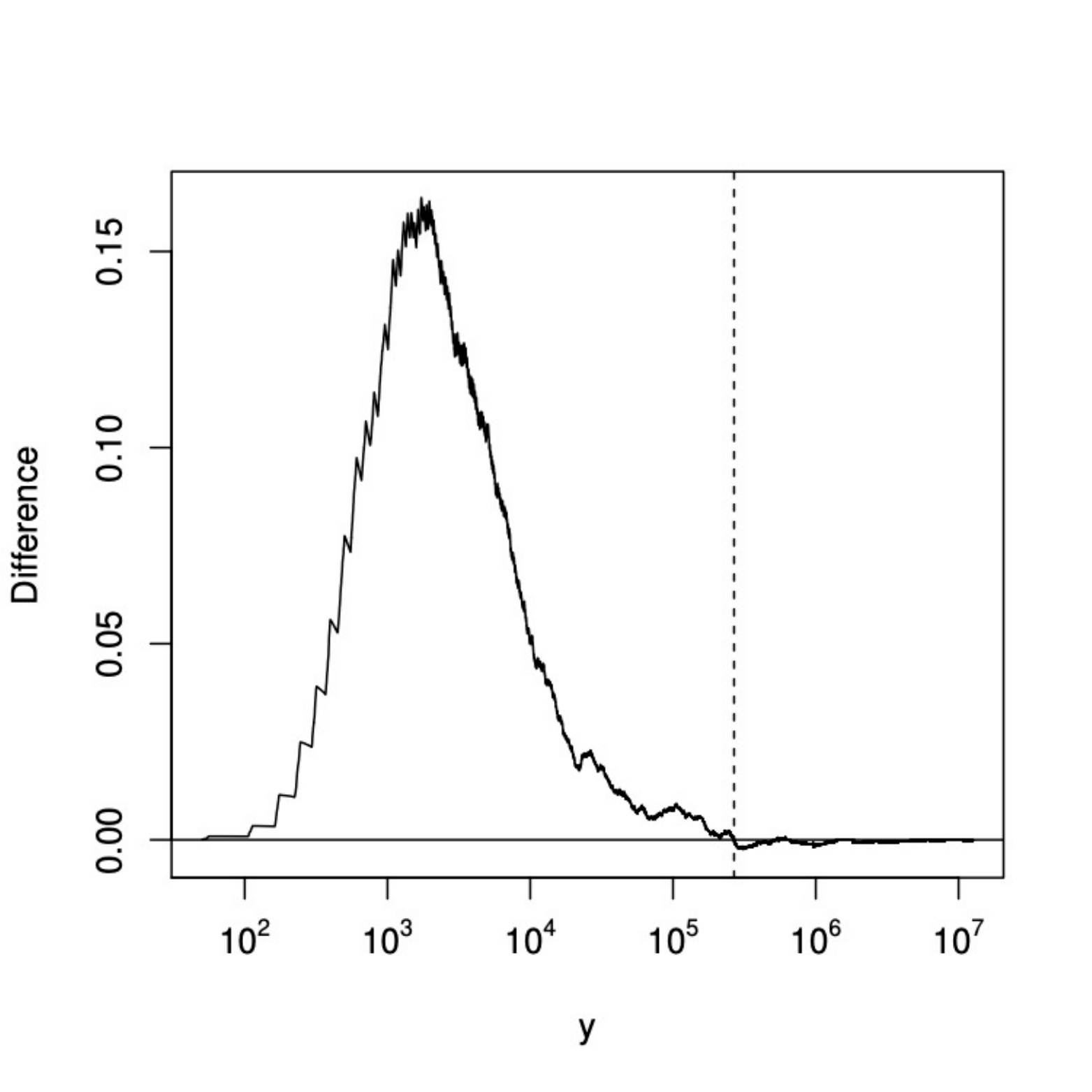}
        \subcaption{Differences of the distributions: $\widehat F_1 \oplus \widehat F_2-\widehat F_1 * \widehat F_2$}\label{fig:ecdfdiff}
    \end{minipage}
    \hfill
    \begin{minipage}[t]{.49\textwidth}
        \centering
        \includegraphics[width=\textwidth]{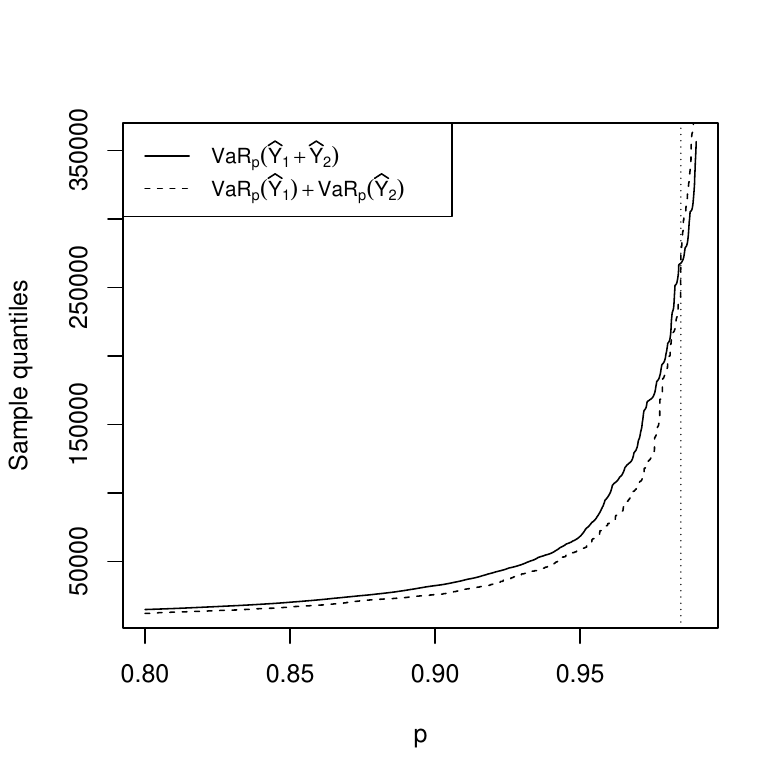}
        \subcaption{Sample quantiles for $p\in(0.8,0.99)$}\label{fig:qsamples}
    \end{minipage}
    \caption{Plots for $\widehat F_1 \oplus \widehat F_2-\widehat F_1 * \widehat F_2$ and sample quantiles}\label{fig:real-data-1}
\end{figure}

\subsection{Aggregation of Pareto risks with different parameters} 
\label{sec:63}
%   \begin{table}[htbp]
% \centering
% %\small
% \begin{tabular}{c|cccccc}
%       $i$  &1& 2 & 3& 4&5 &6 \\
%   \hline
%   Type&Cyclone& Flood & Flood, Storm& Storm&Bushfire &Hailstorm \\
%     \hline
%   Frequency &33& 25 & 27& 54&26 &33  \\
%    \hline
% \end{tabular}
% \caption{Types of catastrophic losses in Australia.}
% \label{t1}
% \end{table}
 As mentioned above, for independent losses $Y_1,\dots,Y_n$ following  generalized Pareto distributions with the same tail parameter $\alpha=1/\xi<1$,  it holds that  
\begin{align}\sum_{i=1}^n\VaR_{p}(Y_{i})\le\VaR_{p}\(\sum_{i=1}^n Y_{i}\), \mbox{~usually with strict inequality}.\label{eq:management}
\end{align} Inspired by the results in Section \ref{sec:62},
we are interested in whether  \eqref{eq:management}  holds for losses following generalized Pareto distributions with different parameters.
To make a first attempt on this problem, we look at the 6 operational losses of different business lines with infinite mean in Table 5 of \cite{moscadelli2004modelling}, where the operational losses are assumed to follow generalized Pareto distributions.
 Denote by $Y_1,\dots,Y_6$ the operational losses corresponding to these 6 generalized Pareto distributions. The estimated parameters in \cite{moscadelli2004modelling} for these losses are presented in Table \ref{t2}; they all have infinite mean.
  \begin{table}[htbp]
\centering
%\small
\begin{tabular}{c|cccccc}
      $i$  &1& 2 & 3& 4&5 &6 \\
  \hline
  $\xi_i$&1.19& 1.17 & 1.01& 1.39&1.23 &1.22 \\
    \hline
  $\beta_i$ &774& 254 & 233& 412&107 &243  \\
   \hline
\end{tabular}
\caption{The estimated parameters $\xi_i$ and $\beta_i$, $i\in[6]$.}
\label{t2}
\end{table}

For the purpose of this numerical example, we assume that $Y_1,\dots,Y_6$ are independent and plot $\sum_{i=1}^6\VaR_{p}(Y_{i})$ and $\VaR_{p}(\sum_{i=1}^6 Y_{i})$ for $p \in (0.95,0.99)$  in Figure \ref{f2}. We can see that  $\VaR_{p}(\sum_{i=1}^6Y_{i})$ is larger than $\sum_{i=1}^6\VaR_{p}(Y_{i})$,  and the gap between the two values gets larger as  the level $p$ approaches 1. This observation further suggests that even if the extremely heavy-tailed Pareto losses have different tail parameters, a diversification penalty may still exist. We conjecture that this is true for any generalized Pareto losses $Y_1,\dots,Y_n$ with shape parameters $\xi_1,\dots,\xi_n\in[1,\infty)$, although we do not have a proof. Similarly, we may expect that $\sum_{i=1}^n\theta_i\VaR_{p}(X_{i})\le\VaR_{p}(\sum_{i=1}^n \theta_i X_{i})$ holds for any Pareto losses $X_1,\dots,X_n$ with tail parameters $\alpha_1,\dots,\alpha_n\in(0,1]$,
 \begin{figure}[h]
\centering
\includegraphics[height=7cm, trim={0 0 0 20},clip]{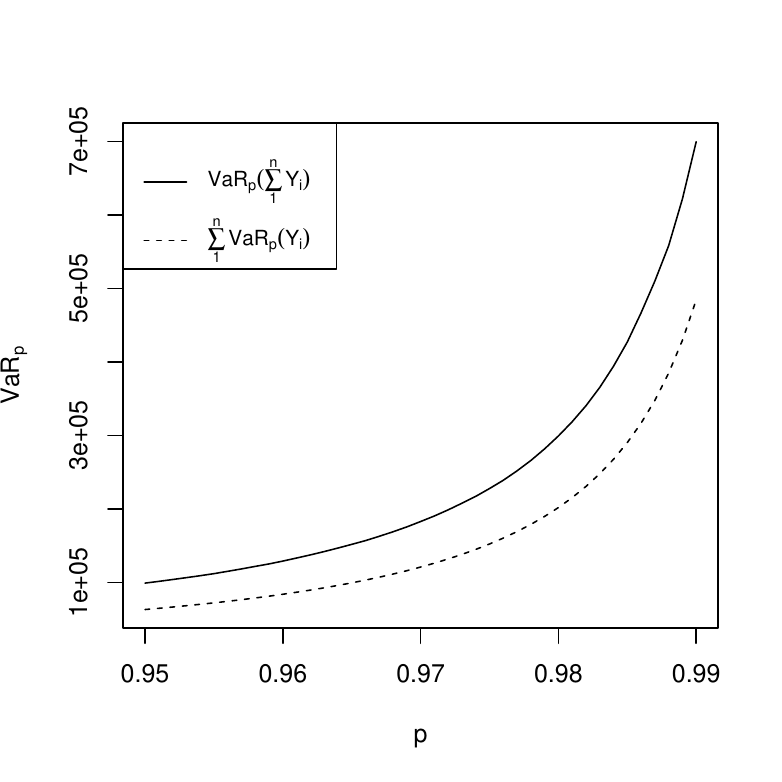}
\caption{Curves of $\VaR_{p}(\sum_{i=1}^nY_{i})$ and $\sum_{i=1}^n\VaR_{p}(Y_{i})$ for $n=6$ generalized Pareto losses with parameters in Table \ref{t2} and $p\in(0.95,0.99)$.}
\label{f2}
\end{figure}

From a risk management point of view, the message from Sections \ref{sec:62} and \ref{sec:63} is clear. If a careful statistical analysis leads to statistical models in the realm of infinite means, then the risk manager at the helm should take a step back and question to what extent classical diversification arguments can be applied. Though we mathematically analyzed the case of identically distributed losses, we conjecture that these results hold more widely in the heterogeneous case. As a consequence, it is advised to hold on to only one such super-Pareto risk. Of course, the discussion concerning the practical relevance of infinite mean models remains. When such underlying models are methodologically possible, then one should think carefully about the applicability of standard risk management arguments; this brings us back to Weitzman's Dismal Theorem as discussed towards the end of Section \ref{sec:1}. From a methodological point of view, we expect that the results from Sections \ref{sec:4} and \ref{sec:5} carry over to the above heterogeneous setting.

% The managerial insight from Sections \ref{sec:62} and \ref{sec:63} is very simple yet powerful: if data are approximately Pareto with infinite mean, then we expect the inequality \eqref{eq:management}  
% even if their parameters are different.  Therefore, it is better to hold on to one of the loss instead of diversifying across independent sources of such losses, even if such a diversification is possible. We expect all arguments in Sections \ref{sec:4} and \ref{sec:5} to carry through in this heterogeneous setting, and this seems to be true at least empirically. 

\section{Concluding remarks}\label{sec:7}

%Our main result (Lemma \ref{thm:1}) establishes   that a weighted average of  WNAID super-Pareto random variables, possibly triggered by  different events, is larger than one such loss  in the  sense of  first-order stochastic dominance. 
%The class of super-Pareto distributions includes many distributions that are more heavy-tailed than Pareto$(1)$ distribution, such as extremely heavy-tailed Pareto distributions.
%Our results provide an important implication in risk management, answering the question in the Introduction. 

We provide several generalizations of the inequality that the diversification of WNAID super-Pareto losses is greater than an individual super-Pareto loss in the sense of first-order stochastic dominance. The generalizations concern marginal distributions (Proposition \ref{prop:convolution}), dependence structures (Proposition \ref{prop:mixture}), a tail risk model (Proposition \ref{prop:tail}), a classic insurance model (Proposition \ref{cor:random}), and bounded super-Pareto losses (Theorem \ref{prop:bounded}). These results strengthen the main point made by \cite{CEW24}: As diversification increases the risk assessment of extremely heavy-tailed losses for all commonly used decision models, non-diversification is preferred.

% Next, we answer a question raised in the Introduction: If an agent aims to minimize their risk by allocating their exposures over extremely heavy-tailed losses, should they diversify or not? For a portfolio of super-Pareto losses, as diversification increases its risk assessment uniformly for all commonly used risk preferences, non-diversification is preferred.

The equilibrium of a risk exchange model is analyzed, where agents can take extra super-Pareto losses with compensations. In particular, if every agent is associated with an initial position of a super-Pareto loss, the agents can merely exchange their entire position with each other (Theorem \ref{th:equil}).  On the other hand, if some external agents are not associated with any initial losses, it is possible that all agents can reduce their risks by transferring the losses from the agents with initial losses to those without initial losses (Theorem \ref{th:external}).

  Inspired by numerical results,  an open question arises, 
%   The first question is whether 
% \begin{equation}\label{eq:q1}
% \frac{1}{k}\sum_{i=1}^kX_i\le_{\rm st}\frac{1}{\ell}\sum_{i=1}^\ell X_i,
% \end{equation}holds for all $k,\ell \in \mathbb N$ such that $k\le \ell$, where $X_{1},\dots,X_{l}$ are iid super-Pareto losses. The statement is true if $\ell$ is a multiple of $k$, as shown in Proposition \ref{prop:1}. 
that is whether 
\begin{align}\label{eq:q2}
\VaR_{p}\( \sum_{i=1}^n \theta_{i}X_{i}\)\ge \sum_{i=1}^n \theta_{i}\VaR_{p}(X_{i})
\end{align}
holds for $(\theta_1,\dots,\theta_n)\in\Delta_n$ and independent extremely heavy-tailed Pareto losses $X_1,\dots,X_n$ with possibly different tail parameters.
From the numerical results in  Section \ref{sec:6}, 
 \eqref{eq:q2} is anticipated  to hold; a proof seems to be beyond the current techniques.

\subsection*{Acknowledgements}  
The authors thank  Hansj\"org Albrecher, Jan Dhaene, John Ery, Taizhong Hu, Massimo Marinacci,  Alexander Schied, and Qihe Tang for helpful comments on a previous manuscript, which evolved into two papers (\cite{CEW24} and this paper).
RW is supported by the Natural Sciences and Engineering Research Council of Canada (RGPIN-2018-03823 and CRC-2022-00141).

\appendix

\setcounter{table}{0}
\setcounter{figure}{0}
\setcounter{equation}{0}
\renewcommand{\thetable}{A.\arabic{table}}
\renewcommand{\thefigure}{A.\arabic{figure}}
\renewcommand{\theequation}{A.\arabic{equation}}

\setcounter{theorem}{0}
\setcounter{proposition}{0}
\renewcommand{\thetheorem}{A.\arabic{theorem}}
\renewcommand{\theproposition}{A.\arabic{proposition}}
\setcounter{lemma}{0}
\renewcommand{\thelemma}{A.\arabic{lemma}}

\setcounter{example}{0}
\renewcommand{\theexample}{A.\arabic{example}}

\setcounter{corollary}{0}
\renewcommand{\thecorollary}{A.\arabic{corollary}}

\setcounter{remark}{0}
\renewcommand{\theremark}{A.\arabic{remark}}
\setcounter{definition}{0}
\renewcommand{\thedefinition}{A.\arabic{definition}}

% \begin{center}
% \Large   Appendices
% \end{center}

\section{Background on risk measures}\label{app:A} 
Recall that $\mathcal X_\rho$ is a convex cone of random variables representing losses
faced by financial institutions. We first present commonly used properties of a risk measure $\rho:\mathcal X_\rho \rightarrow \R$: 
 \begin{enumerate}[(a)]
 %\item Weak monotonicity: $\rho(X) \le \rho(Y)$ for $X,Y\in\mathcal X_\rho$ if $X\le_{\rm st} Y$. 
% \item Mild monotonicity: $\rho$ is weakly monotone and 
%$\rho(X)< \rho(Y)$ if $F^{-1}_X< F^{-1}_Y$ on $(0,1)$.
 \item[(d)] Translation invariance: $\rho(X+c)=\rho(X)+c$ for $c\in \R$.
 \item[(e)] Positive homogeneity: $\rho(aX)=a\rho(X)$ for $a\ge 0$.
 \item[(f)] Convexity: $\rho(\lambda X+(1-\lambda)Y)\le \lambda \rho(X)+(1-\lambda)\rho(Y)$ for $X,Y\in\mathcal X_\rho$ and $\lambda\in[0,1]$.
 %\item Law-invariance: $\rho(X)=\rho(Y)$ if $F_X=F_Y$.
\end{enumerate}
%Risk measures satisfying (a) and (b) are called monetary risk measures. In this paper, we also consider a slightly different version of monotonicity. For a risk measure $\rho:\mathcal X_\rho \rightarrow \R$ and $X,Y\in\mathcal X_\rho$,
% \begin{enumerate}[(a*)]
% \item Mild monotonicity: $\rho(X) \le \rho(Y)$ if $X\le_{\rm st} Y$ and 
%$\rho(X)< \rho(Y)$ if $F^{-1}_X(p) < F^{-1}_Y(p)$ for all $p\in(0,1)$.
%\end{enumerate}
 
 A risk measure that satisfies (b) weak monotonicity, (d) translation invariance, and (f) convexity is a \emph{convex risk measure} \citep{follmer2002convex}. ES is a convex risk measure.
 % For two random variables $X$ and $Y$ with finite mean and $\lambda\in(0,1)$, a convex risk measure $\rho$ satisfies the convexity property, i.e.,
%\begin{equation}\label{convexity}
%\rho(\lambda X+(1-\lambda)Y)\le \lambda \rho(X)+(1-\lambda)\rho(Y).
%\end{equation}
The convexity property means that diversification will not increase the risk of the  loss portfolio, i.e., the risk of $\lambda X+(1-\lambda)Y$ is less than or equal to that of  the weighted average of individual losses. However, the canonical space for law-invariant convex risk measures is $L^1$ (see \cite{filipovic2012canonical}) and hence convex risk measures are not useful for losses without finite mean.

For losses without finite mean, it is natural to consider VaR or Range Value-at-Risk (RVaR), which includes VaR as a limiting case.
 For $X\in \mathcal{X}$ and $0\leq p<q<1$,  RVaR is defined as
 $$\RVaR_{p,q}(X)=\frac{1}{q-p}\int_{p}^{q}\VaR_{u}(X)\d u.$$
 For $p\in(0,1)$, $\lim_{q\downarrow p^{+}}\RVaR_{p,q}(X)=\VaR_{p}(X)$. The class of RVaR is proposed by \cite{cont2010robustness} as robust risk measures; see \cite{ELW18} for its properties and risk sharing results.
% For generality, we consider the class of \emph{distortion risk measures}, defined as  
%\begin{align}\label{eq:distor}
%\rho_h(X)=\int_{-\infty}^0 (h(1-F_X(x))-1)\mathrm d x+\int_0^{+\infty}h(1-F_X(x))\mathrm{d}x,
%\end{align}
%where $X\in \mathcal X_\rho$ and $h:[0,1]\rightarrow[0,1]$ is a nondecreasing function with $h(0)=0$ and $h(1)=1$; $h$ is called the \emph{distortion function}.
 VaR, ES, RVaR, essential infimum ($\essinf$), and essential supremum ($\esssup$), belong to the family of distortion risk measures defined by \eqref{eq:distor}. For $X\in \mathcal X$, $\essinf$ and $\esssup$ are defined as
$$\essinf(X)=\sup\{x: F_X(x)=0 \}~~~~\mbox{and}~~~~ \esssup(X)=\inf\{x: F_X(x)=1 \}.$$ The distortion functions of $\essinf$ and $\esssup$ are  $h(t)=\id_{\{t=1\}}$ and $h(t)=\id_{\{0<t\le1\}}$, $t\in[0,1]$, respectively; see Table 1 of \cite{wang2020distortion}.  
 Distortion risk measures satisfy (b), (d) and (e). Almost all useful distortion risk measures are mildly monotone, as shown by the following result.
 
 \begin{proposition}\label{prop:mildly}
 Any distortion risk measure is mildly monotone unless it is  a mixture of $\esssup$ and $\essinf$.
 \end{proposition}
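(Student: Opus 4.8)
The plan is to characterize mild monotonicity of a distortion risk measure $\rho$ directly in terms of its distortion function $h$. Recall that mild monotonicity requires (i) weak monotonicity (monotonicity with respect to $\le_{\rm st}$), which every distortion risk measure satisfies since $h$ is nondecreasing and $\rho(Y)$ depends on $Y$ only through $x\mapsto h(\p(Y>x))$; and (ii) the strict part: $\rho(X)<\rho(Y)$ whenever $\p(X<Y)=1$. So the whole content is to determine for which $h$ the strict inequality can fail, and to show this happens exactly when $h$ is a mixture of the distortion functions of $\esssup$ and $\essinf$, namely $h(t) = \beta\,\id_{\{t>0\}} + (1-\beta)\,\id_{\{t=1\}}$ for some $\beta\in[0,1]$ (the $\esssup$ distortion is $\id_{\{0<t\le 1\}}$, the $\essinf$ distortion is $\id_{\{t=1\}}$, and a mixture $\beta\,\id_{\{0<t\le 1\}} + (1-\beta)\,\id_{\{t=1\}}$ equals $\beta$ on $(0,1)$ and $1$ at $t=1$).

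First I would prove the ``if'' direction: a mixture of $\esssup$ and $\essinf$ is not mildly monotone. This is a one-line counterexample — take $X,Y$ with $\p(X<Y)=1$ but $z_X = z_Y$ and $\esssup X = \esssup Y$ (e.g. $X$ and $Y$ both supported on $[0,1]$, say $X\sim U[0,1]$ and $Y = (1+X)/2$ fails since supports differ; instead take $X = U/2$ and $Y = 1/2 + U/2$ — no; the clean choice is $X$ taking values in $(0,1)$ a.s.\ with $\essinf X = 0$, $\esssup X = 1$, and $Y=X$ conditioned to be strictly larger on each $\omega$ but with the same essential range, e.g.\ $X\sim U(0,1)$ and $Y=\sqrt{X}$, which satisfies $X<Y$ a.s., $z_X=z_Y=0$, $\esssup=1$). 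Then $\essinf$ and $\esssup$ each assign the same value to $X$ and $Y$, hence so does any mixture, so the strict inequality fails.

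The substantive direction is ``only if'': if $h$ is \emph{not} a mixture of the two, then $\rho$ is mildly monotone. I would argue contrapositively: suppose there exist $X,Y$ with $\p(X<Y)=1$ but $\rho(X)=\rho(Y)$. Using the layer representation $\rho(Y)-\rho(X) = \int_{\R}\big(h(\p(Y>x)) - h(\p(X>x))\big)\,\d x$ (valid after splitting at $0$; finiteness is not needed since the integrand is nonnegative by weak monotonicity as $\p(Y>x)\ge\p(X>x)$ pointwise), the hypothesis forces $h(\p(Y>x)) = h(\p(X>x))$ for Lebesgue-a.e.\ $x$. Since $\p(X<Y)=1$ gives $\p(Y>x) \ge \p(X>x)$ with strict inequality for all $x$ in the open interval $(z_X,\esssup X)$ that is not an atom-gap — more carefully, for every $x$ with $0<\p(X>x)<1$ one has $\p(Y>x)>\p(X>x)$ unless both survival functions are flat there — I would extract an interval (or a dense set of points) $I\subseteq (0,1)$ on which $h$ is constant, with the constant value and the fact that $h(0)=0$, $h(1)=1$ then pinning down that $h$ must be constant on all of $(0,1)$; combined with $h(0)=0$ this yields $h(t)=\beta\,\id_{\{t>0\}}+(1-\beta)\id_{\{t=1\}}$. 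The technical care is in turning ``$h$ agrees along two survival functions that differ'' into ``$h$ is flat on an honest subinterval of $(0,1)$'': one must choose the witnesses $X,Y$ adapted to the purported flat spot of $h$, or conversely choose them as, say, uniforms on an interval so that the survival functions sweep out a full subinterval of probabilities.

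The main obstacle I anticipate is exactly this last measure-theoretic step: handling atoms and flat pieces of the survival functions so that equality of $h$ composed with two distinct survival functions genuinely forces $h$ to be locally constant on a nondegenerate set of probability values, rather than merely constant along two curves that could individually be non-injective. I would handle it by freely choosing the counterexample pair — since we only need \emph{existence} of $X,Y$ with $\p(X<Y)=1$ and $\rho(X)=\rho(Y)$ to conclude non-mild-monotonicity — picking $X,Y$ to be affine images of a uniform random variable tuned to any subinterval $(a,b)\subseteq(0,1)$ where $h$ is constant; this reduces everything to: $h$ fails mild monotonicity iff $h$ is constant on some subinterval of $(0,1)$ with the appropriate boundary behavior, and $h$ nondecreasing with $h(0)=0$, $h(1)=1$ that is constant on a subinterval of $(0,1)$ forces, after also testing with $X,Y$ straddling the remaining pieces, that $h$ is constant on \emph{all} of $(0,1)$. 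I expect the argument to be short once the right witnesses are fixed; the subtlety is purely in the bookkeeping of endpoints $t=0$ and $t=1$, which is what produces precisely the two-parameter family ``mixture of $\esssup$ and $\essinf$.''
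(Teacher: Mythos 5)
Your high-level strategy coincides with the paper's: weak monotonicity is automatic, and the layer representation together with $\rho(X)=\rho(Y)$ forces $h(\p(Y>x))=h(\p(X>x))$ for a.e.\ $x$, so everything hinges on showing that this propagates to constancy of $h$ on all of $(0,1)$. The gap sits exactly at the step you yourself flag as the main obstacle, and your proposed resolution points in the wrong logical direction. The implication to be proved is ``if mild monotonicity fails, then $h$ is constant on $(0,1)$,'' so you are \emph{handed} one pair $X,Y$ with $\p(X<Y)=1$ and $\rho(X)=\rho(Y)$ and must extract global constancy of $h$ from that single pair; you cannot ``freely choose the counterexample pair.'' Choosing $X,Y$ adapted to a purported flat spot of $h$ only serves to \emph{construct} a violation of mild monotonicity from flatness of $h$, which is the converse implication and is not what the statement requires. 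Moreover, the equivalence you reduce to --- mild monotonicity fails iff $h$ is constant on some subinterval of $(0,1)$ --- is false: a distortion function that is flat only on $[0.3,0.4]$, say, and strictly increasing elsewhere is not a mixture of $\esssup$ and $\essinf$, so by the proposition itself the corresponding risk measure \emph{is} mildly monotone. Hence extracting ``an interval $I\subseteq(0,1)$ on which $h$ is constant,'' together with $h(0)=0$ and $h(1)=1$, pins down nothing, and ``testing with $X,Y$ straddling the remaining pieces'' is again an argument for the converse direction.

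What the paper does instead, and what your plan lacks, is an argument that the single given pair already forces $h$ to be locally constant at \emph{every} level: for each $b\in(0,1)$ the interval $(F_X^{-1}(b),F_Y^{-1}(b))$ is nonempty and on it $F_X(x)\ge b>F_Y(x)$, so if $h$ were not constant on a one-sided neighbourhood of $1-b$, the nonpositive integrand $h(1-F_X(x))-h(1-F_Y(x))$ would be strictly negative on a set of positive length, contradicting $\rho(X)=\rho(Y)$. Local constancy at every point of $(0,1)$ is then glued into global constancy by covering $[\epsilon,1-\epsilon]$ with finitely many overlapping flat intervals and letting $\epsilon\downarrow0$. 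Any correct completion of your plan needs a substitute for this ``every level $b$'' step carried out on the given witnesses. (Your additional direction --- that mixtures of $\esssup$ and $\essinf$ are not mildly monotone, via $X\sim U(0,1)$ and $Y=\sqrt{X}$ --- is correct but is not required by the statement as formulated.)
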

 
\begin{proof}
 Let $\rho_h$ be a distortion risk measure with distortion function $h$. 
 %We will prove an equivalent statement that if $\rho$ is not mildly monotone, then $h(t)=\lambda \id_{\{t=1\}}+(1-\lambda)\id_{\{0<t\le1\}}$, $t\in[0,1]$, for some $\lambda\in[0,1]$. 
 Suppose that $\rho_h$ is not mildly monotone. Then there exist $X,Y\in\mathcal X$ satisfying $F^{-1}_X(p)<F^{-1}_Y(p)$ for all $p\in(0,1)$ and $\rho(X)=\rho(Y)$.
 %Since the quantile function is left-continuous, 
 %we have $F^{-1}_X(p)<F^{-1}_Y(p)$. 
  % By the definition of distortion risk measures in \eqref{eq:distor},  
% $$0=\rho(X)-\rho(Y) = \int_{-\infty}^\infty h(1-F_X(x))-h(1-F_Y(x))\d x.$$
% Since $h(1-F_X(x))-h(1-F_Y(x))\le 0$ for all $x\in \R$,
 %we have $h(1-F_X(x)) = h(1-F_Y(x)) $ for almost every $x\in \R$.
 Suppose that there exist $b\in (0,1)$ such that $h(1-a)<h(1-b)$ for all $a> b$. 
For $x \in (F^{-1}_X(b) , F^{-1}_Y(b))$, we have 
 $F_X(x) \ge  b>F_Y(x)$; see e.g., Lemma 1 of \cite{GJW22}.
 Hence, we have $h(1-F_X(x)) \le h(1-b) < h(1-F_Y(x))  $ for $x \in (F^{-1}_X(b) , F^{-1}_Y(b))$.
 Since $h(1-F_X(x))-h(1-F_Y(x))\le 0$ for all $x\in \R$,
by \eqref{eq:distor} we get 
$$
\rho(X)-\rho(Y)= \int_{-\infty}^\infty \left( h(1-F_X(x))-h(1-F_Y(x)) \right) \d x <0.
$$ 
This contradicts $\rho(X)=\rho(Y)$. Hence, there is no $b\in (0,1)$ such that $h(1-a)<h(1-b)$ for all $a > b$. 
Using a similar argument with the left quantiles replaced by right quantiles, we conclude that there is no $b\in (0,1)$ such that $h(1-a)>h(1-b)$ for all $a < b$. 
Therefore, for every $b\in (0,1)$, there exists an open interval $I_b$ such that $b\in I_b$ and $h$ is  constant on $I_b$. 
For any $\epsilon>0$, the interval $[\epsilon, 1-\epsilon]$ is compact. Hence, there exists a finite collection $\{I_b:b\in B\}$ which covers $[\epsilon, 1-\epsilon]$. Since the open intervals in $\{I_b:b\in B\}$ overlap, we know that $h$ is  constant on $[\epsilon, 1-\epsilon]$. Letting $\epsilon \downarrow 0$ yields that $h$ takes a constant value on $(0,1)$, denoted by $\lambda\in [0,1]$. 
Together with $h(0)=0$ and $h(1)=1$, we get that  $h(t)= \lambda \id_{\{0<t\le1\}} + (1-\lambda) \id_{\{t=1\}}$ for $t\in[0,1]$, which is the distortion function of $\rho_h=\lambda \essinf + (1-\lambda) \esssup$.
 \end{proof}

  As a consequence,
  for any set $\X$ containing a random variable unbounded from above and one unbounded from below, such as the $L^q$-space for $q\in [0,\infty)$,  
   a real-valued distortion risk measure on $\X$ is  always mildly monotone.

\section{Proofs of all results}\label{app:proof}

\begin{proof}[Proof of Proposition \ref{prop:convolution}]
To show that $\mathcal F_{\rm IN}$
is  closed under convolution,  note that  first-order stochastic dominance is closed under convolution; see Theorem 1.A.3 of \cite{SS07}.
Therefore,  under independence, 
\begin{equation*}X_{1j}\le_{\rm st}\sum_{i=1}^n\theta_iX_{ij} \mbox{~for  $j=1,2$} ~\Longrightarrow~\sum_{j=1}^2 X_{1j}\le_{\rm st}\sum_{j=1}^2\sum_{i=1}^n\theta_iX_{ij}=\sum_{i=1}^n \theta_i\sum_{j=1}^2 X_{ij}.\end{equation*}  
To show that   $\mathcal F_{\rm IN}$ and $\mathcal F_{\rm WNA}$ are closed under strictly increasing convex transforms $f$, we note that 
if $Y\le_{\rm st} \sum_{i=1}^n\theta_iY_i$, then 
  $f( Y)\le_{\rm st}f(\sum_{i=1}^n\theta_iY_i)\le \sum_{i=1}^n\theta_{i}f(Y_{i}),$  where the first inequality follows since $\le_{\rm st}$ is preserved under increasing transforms, and the second inequality is due to convexity of $f$. Moreover,  
    strictly increasing transforms do not affect the dependence structure of $(Y_1,\dots,Y_n)$.
  %this is clear when $f$ is strictly increasing; see Lemma \ref{lem:WNApareto} when $f$ is not strictly increasing and the dependence is weak negative association. The case that the dependence is negative association is similar.
\end{proof}

\begin{proof}[Proof of Proposition \ref{prop:mixture}]
Copulas for independence and weak negative association are in $\mathcal C_{\mathrm{DP}}$ because of Theorem \ref{thm:1}.
The copula for comonotonicity is in $\mathcal C_{\mathrm{DP}}$  because $X_1= \sum_{i=1}^n\theta_i X_i$ almost surely  in case of comonotonicity.
Denote by $C$ a copula of  $(X_1,\dots,X_n)$. Let $X\laweq X_1$. Then,  there exists a random vector $(U_1,\dots,U_n)\sim C$ such that $(F_{X}^{-1}(U_1),\dots,F_{X}^{-1}(U_n))\laweq (X_1,\dots,X_n)$. Note that for $p\in(0,1)$, $\p(\sum_{i=1}^n\theta_{i}F_{X}^{-1}(U_i)\le p) $  is linear in the distribution of $(U_1,\dots,U_n)$.  Therefore, if   $\p(\sum_{i=1}^n\theta_{i}X_{i}\le p)\le \p(X\le p) $ for all $p\in(0,1)$ holds for two different copulas, it also holds for their mixtures. 
\end{proof}

% \begin{proof}[Proof of Proposition \ref{prop:1}]
% Let $Y_j= (\sum_{i=n(j-1)+1}^{jn}X_{i} )/n$, $j=1,\dots,m.$ By Theorem \ref{thm:1}, $X_{j}'\le_{\rm st} Y_j$ for $j=1,\dots,m,$ where $X_{1}',\dots,X_{m}'$ are iid super-Pareto. Note that $Y_{1},\dots,Y_{m}$ are also independent. As   first-order stochastic dominance is closed under convolutions (e.g., Theorem 1.A.3 (a) of \cite{SS07}), we obtain
% \begin{align*}
% X_{1}+\dots+X_{m}\laweq X_{1}'+\dots+X_{m}'\le_{\rm st}Y_{1}+\dots+Y_{m}&=\frac{X_{1}+\dots+X_{mn}}{n}.
% \end{align*}
% Dividing both sides by $m$ yields the desired inequality. 
% \end{proof}

%\subsection{Proofs of Propositions \ref{prop:tail} and  \ref{cor:random}}

\begin{proof}[Proof of Proposition \ref{prop:tail}]
% {\color{red}By a similar argument used in the proof of Lemma \ref{lem:WNApareto}, we can show that there exist WNAID super-Pareto random variables $X_{1},\dots,X_{n}$ and random variables $Y'_{1},\dots,Y'_{n}$ such that $(X_{1},\dots,X_{n})$ and $(Y'_{1},\dots,Y'_{n})$ have the same copula, and $(Y'_{1},\dots,Y'_{n})\laweq (Y_{1},\dots,Y_{n})$.}
%  % Let $ X_{1},\dots,X_{n}$ be iid \trd{super-Pareto} random variables. 
% \trd{As $ Y\ge_{\rm st} X $,     by  Theorem 6.B.14 in \cite{SS07} and Lemma \ref{thm:1}, for $t\ge x$,}
%  $$\p\(\sum_{i=1}^n\theta_{i}Y_{i}>  t\)=\p\(\sum_{i=1}^n\theta_{i}Y'_{i}>  t\) \ge \p\(\sum_{i=1}^n\theta_{i}X_{i}>  t\)\ge\p\(X> t\)  =  \p\(Y> t\).$$  
% The statement on strictness also follows from Lemma \ref{thm:1}.  
 Let $ X_{1},\dots,X_{n}$ be iid super-Pareto random variables. 
Note that  for $t\ge x$,   by using Theorem \ref{thm:1} and $ Y\ge_{\rm st} X $, we have 
 $$\p\(\sum_{i=1}^n\theta_{i}Y_{i}>  t\) \ge \p\(\sum_{i=1}^n\theta_{i}X_{i}>  t\)\ge\p\(X> t\)  =  \p\(Y> t\).$$  
The statement on strictness also follows from Theorem \ref{thm:1}.  
\end{proof}

\begin{proof}[Proof of Proposition \ref{cor:random}]
By Theorem \ref{thm:1}, it is clear that $\p(\sum_{i=1}^nW_iX_i/\sum_{i=1}^nW_i\le t)<\p(X \le t)$ for $t>z_X$, $n\in\N/\{1\}$. As $N$ is independent of $\{W_iX_i\}_{i\in \N},$  for $t> z_X$,
\begin{align*}
\p\(\frac{\sum_{i=1}^NW_iX_i}{\sum_{i=1}^N W_i}\le t\)&=\p(N=0)+\sum_{n=1}^\infty\p\(\frac{\sum_{i=1}^nW_iX_i}{\sum_{i=1}^n W_i}\le t\)\p(N=n)\\
&\le\p(N=0)+\p(N\ge1)\p(X\le t)=\p\(X\id_{\{N\ge1\}}\le t\).
\end{align*}
It is obvious that the inequality is strict if $\p(N\ge2)\neq0$.
To show the second inequality in \eqref{eq:collective},  note that for each realization of $N=n$ and $(W_1,\dots,W_N)=(w_1,\dots,w_n)\in \R^n$, 
$\sum_{i=1}^n w_i X  \le_{\rm st} {\sum_{i=1}^ n w_iX_i} $ holds by Theorem \ref{thm:1}. Hence, the second inequality in \eqref{eq:collective} holds. 
\end{proof}

 \begin{proof}[Proof of Theorem \ref{prop:bounded}]
 For $t\in(z_X,c]$,  we have
\begin{align*}
\p\(\sum_{i=1}^n\theta_{i}Y_{i}\le  t\)&=\p\(\sum_{i=1}^n\theta_{i}(X_i\wedge c_i)\le  t\) 
 %\\
 %&=\p\(\left\{\sum_{i=1}^n\theta_{i}(X_i\wedge c_i)\le  t\right\}\cap \bigcap_{i=1}^n\{X_i\le c_i\}\)\\
% &=\p\(\left\{\sum_{i=1}^n\theta_{i}X_i\le  t\right\}\cap \bigcap_{i=1}^n\{X_i\le c_i\}\)
=
\p\(\sum_{i=1}^n\theta_{i}X_{i}\le  t\).
\end{align*}
We also have $\p(Y_i\le t)=\p(X_i\wedge c_i\le t)=\p(X_i\le t)$, $i\in[n]$. By the strictness statement in Theorem \ref{thm:1}, we obtain the probability inequality. 
To show the quantile inequality,  note that for $p\in (0,\p(X\le c))$, 
we have $
\VaR_p(X ) < c.
$
Using Theorem \ref{thm:1} and the definition of $Y_1,\dots,Y_n$,  we get
\begin{align*}
 \sum_{i=1}^n \theta_{i} \VaR_p(Y_i)  \le \VaR_p(X ) & <    \VaR_p\left( \sum_{i=1}^n \theta_{i}X_{i}\right) \wedge c 
\\ &\le \VaR_p\left(   \left(    \sum_{i=1}^n \theta_{i}X_{i}\right)  \wedge  c  \right)
\le \VaR_p\left(    \sum_{i=1}^n \theta_{i}Y_{i}\right) .
\end{align*}
This gives the desired inequality. 
\end{proof}

%\subsection{Proofs of  Propositions \ref{prop:mildly}, \ref{prop:risk-meas} and   \ref{prop:concentrate}}

% \begin{proof}[Proof of Proposition \ref{prop:risk-meas}]
% This proposition  follows directly from  Theorem \ref{thm:1}.
% \end{proof}

% \begin{proof}[Proof of Proposition \ref{prop:concentrate}]
% The proposition follows directly from Theorem \ref{thm:1}.  
% \end{proof}

\begin{proof}[Proof of Theorem \ref{th:equil}]
\begin{enumerate}[(i)]
\item 
Suppose that $\(\mathbf p^*,\mathbf w^{1*},\dots,\mathbf w^{n*}\)$ forms an equilibrium.   
We let $p= \max_{j\in[n]}\{p_j\}$ and $S=\argmax_{j\in[n]}\{p_j\}$. 
For agent $i$,  
by writing   $w=\Vert \mathbf w^i\Vert $,  using Theorem \ref{thm:1}  and the fact that $\rho_i$ is mildly monotone, we have that for any $\mathbf w^i\in \R_+^n$,
\begin{align*}
\rho_i(L_{i}(\mathbf w^{i}, \mathbf p^*))&=
 \rho_i(\mathbf w^{i} \cdot(\mathbf X -\mathbf p^*) + \mathbf a^i\cdot\mathbf p^*) \\ &\ge  \rho_i(\mathbf w^{i} \cdot \mathbf X -  w p + \mathbf a^i\cdot\mathbf p^*)  \ge \rho_i(w  X_1 -  w p + \mathbf a^i\cdot\mathbf p^*).    \end{align*} 
 By the last statement of Theorem \ref{thm:1}, the last inequality is strict if $\mathbf w^i$ contains at least two non-zero components. 
 Moreover, $c(\Vert \mathbf w^i\Vert-\Vert \mathbf a^i\Vert)=c(w-\Vert \mathbf a^i\Vert)$.  
 Therefore, the optimizer $\mathbf w^{i*}=(w_1^{i*},\dots,w_n^{i*})$  to \eqref{eq:opt-internal} has at most one non-zero component $w_j^{i*}$ for  $j\in S$.
 Hence, $w_k^{i*} =0$ if  $k\in [n]\setminus S$ and this holds for each $i\in [n]$.  
Using $\sum_{i=1}^n \mathbf w^{i*}=\sum_{i=1}^n \mathbf a^i$ which have all positive components, we know that $S=[n]$, which further implies that $\mathbf p^*=\( p,\dots,p\)$ for $p\in \R_+$. Next, as each $\mathbf w^{i*}$ has only one positive component, $(\mathbf w^{1*},\dots,\mathbf w^{n*})$ has to be an $n$-permutation of $(\mathbf a^1,\dots,\mathbf a^n)$ in order to satisfy the clearance condition \eqref{eq:clearance-internal}. 

\item The clearance condition \eqref{eq:clearance-internal} is clearly satisfied. As distortion risk measures are translation invariant and positive homogeneous (see Appendix \ref{app:A}), by Proposition \ref{prop:concentrate},  
  for $i\in [n]$,
\begin{align}
&\min_{\mathbf w^i\in\mathbb R_+^n} \left\{ \rho_i\(L_{i}(\mathbf w^i, \mathbf p^*)\) + c_i(\Vert \mathbf w^i \Vert -\Vert \mathbf a^i \Vert)\right\} \notag \\
&=\min_{\mathbf w^i\in\mathbb R_+^n}\left\{\rho_i\(\mathbf w^i \cdot\mathbf X - (\mathbf w^i-\mathbf a^i)\cdot\mathbf p^*\)+ c_i(\Vert \mathbf w^i \Vert -\Vert \mathbf a^i \Vert)\right\} \notag \\
&= \min_{\|\mathbf w^i\|\in\mathbb R_+}\left\{(\rho_i\(\|\mathbf w^i\|  X\)-(\|\mathbf w^i\|-a_i)p)+ c_i(\Vert \mathbf w^i \Vert -\Vert \mathbf a^i \Vert)\right\}  \notag \\
&=\min_{ w\in\mathbb R_+} \left\{ w (\rho_i(X)-p)+a_ip+ c_i(w-a_i)\right\} .\label{eq:optimized} 
\end{align} 
Note that  $w\mapsto w (\rho_i(X)-p)+ c_i(w-a_i)$ is convex and  with condition \eqref{eq:convex}, its minimum is attained at $w=a_i$. 
Therefore, $\mathbf w^{i*}=\mathbf a^{i*}$ is an optimizer to \eqref{eq:opt-internal}, which shows the desired equilibrium statement.

\item
By (i), $\(\mathbf w^{1*},\dots,\mathbf w^{n*}\)$ is an $n$-permutation of $(\mathbf a^1,\dots,\mathbf a^n)$. It means that for any $i\in[n]$, there exists $j\in[n]$ such that $a_j$ is the minimizer of \eqref{eq:optimized}. As $c_i$ is convex, we have
$$c_{i+}'(a_j-a_i) \ge  p-\rho_i(X)  \ge  c_{i-}'(a_j-a_i),~~~~\mbox{for each}~~~~i\in[n].$$
Hence, we obtain \eqref{eq:convex2}. \qedhere
\end{enumerate}
\end{proof}

\begin{proof}[Proof of Theorem \ref{th:external}]
As in Section \ref{sec:52}, an optimal position for either the internal or the external agents is to concentrate on one of the losses $X_i$, $i\in[n]$. 
By the same arguments as in Theorem \ref{th:equil} (i), the equilibrium price, if it exists, must be of the form $\mathbf p=(p,\dots,p)$.
For such a given $\mathbf p$, 
using the assumption that $\rho_{E}$ and $\rho_I$ are mildly monotone and    Proposition \ref{prop:concentrate}, we can rewrite the optimization problems in \eqref{eq:opt-external1}
 and  \eqref{eq:opt-external2} as 
\begin{align}\label{eq:opt-external1-p}
 {\min_{\mathbf u^j\in\mathbb R_+^n}}\left\{  \rho_{E}\(L_{E}(\mathbf u^j, \mathbf p)    \)+ c_{E}( \|\mathbf u^j\| )\right\}
&={\min_{ u\in\mathbb R_+}}\left\{ u\(\rho_{E}\(X\)-p\)+c_{E}( u )\right\},
\end{align}
and
\begin{align}
\label{eq:opt-external2-p}
{\min_{\mathbf w^i\in\mathbb R_+^n}}\left\{ \rho_I\(L_{i}(\mathbf w^i, \mathbf p)\)+c_{I}( \|\mathbf w^i\| -\|
  \mathbf a^i\|)\right\}={\min_{ w\in\mathbb R_+}}\left\{ w(\rho_I\(X\)-p)+ap+c_{I}( w-a)\right\},\end{align}
  for  $j\in[m]$  and $i\in[n]$.
  Note that the derivative of the function inside the minimum of the right-hand side of  \eqref{eq:opt-external1-p}  with respect to $u$ 
  is $L_E(u)-p$,
  and similarly, $L_I(w-a)-p$ is  the derivative of the function inside the minimum of the right-hand side of  \eqref{eq:opt-external2-p}.
    Using strict convexity of $c_E$ and $c_I$, we get the following facts.
\begin{enumerate}
\item  The optimizer $u$ to \eqref{eq:opt-external1-p} has two cases: \begin{enumerate}
  \item If $L_E(0)\ge p$, then  $u=0$.
\item  If   $ L_E(0)<p$, then  $u>0$ and  
 $L_E(u)= p $. 
   \end{enumerate}
  \item 
    The optimizer $w$ to \eqref{eq:opt-external2-p} has four cases: \begin{enumerate}
    \item If 
  $L_{I}^+(0) < p$, then $w>a$.    This is not possible in an equilibrium.
\item  If   $L_{I}^+(0) \ge  p \ge  L_{I}^-(0) $, then $w=a$.
\item If   $ L_{I}^-(0)>  p >  L_{I}(-a) $, then $0 < w<a$ and 
$L_I(w-a)=p$. 
\item If   $ L_{I}(-a) \ge   p  $, then $w=0$.    \end{enumerate}
  \end{enumerate}

From the above analysis, we see that the optimal positions for the external agents are either all $0$ or all positive, and they are identical due to the strict monotonicity of $L_E$. We can say the same for the internal agents.  Suppose that there is an equilibrium. Let $u$ be the external agent's common exposure, and $w$ be the internal agent's exposure.  By the clearance condition \eqref{eq:clearance-external} we have $w+ku=a$.   
If $0<ku<a$, then from (1.b) and (2.c) above, we have $L_E(u)=L_I(-ku)$.  Below we show the three statements. 
%Therefore, since each loss has the same total amount $a$,  
%in an equilibrium,
%we must have precisely $1$ internal and $k$ external agents sharing a loss $X_i$ for each $i\in [n]$.
%We can reduce our problem to the sub-problem of sharing one loss $X$.  

%Suppose that in this sub-problem, there is an equilibrium. Let $u$ be the external agents' common exposure. Since the total exposure is $a$, we know that the internal agent's exposure is $w=a-ku$.   
%If $0<ku<a$, then from (1.b) and (2.c) above, we have $L_E(u)=L_I(-ku)$.  Below we show the three statements. 

 \begin{enumerate}[(i)] 
\item 
%If $L_E(a/k)<L_I(-a)$, then by strict monotonicity of $L_E$ and $L_I$, there is no $u \in (0,a/k]$ such that $L_E(u)=L_I(-ku)$. 
%Since $u$ cannot be larger than $a/k$, 
%if an equilibrium exists, then $u=0$; but in this case, by (1.a) and (2.b), we have $L_E(0)\ge p\ge L_{I-}(0)$, which contradicts $L_E(a/k)<L_I(-a)$. 
%Hence, there is no equilibrium. 

The ``if" statement is clear by (1.b) and (2.d). We show the ``only if" statement. We first assume that $p\in(L_I(-a),L_I^-(0))$. Since we also have  $p>L_E(a/k)>L_E(0)$, from (1.b) and (2.c), $u$ should satisfy $L_E(u)=L_I(-ku)$. However, by strict monotonicity of $L_E$ and $L_I$, there is no $u \in (0,a/k]$ such that $L_E(u)=L_I(-ku)$. Moreover, if $p\le L_E(0)$ or $p\ge L_I^-(0)$, the clearance condition \eqref{eq:clearance-external} cannot be satisfied. Therefore, we must have  $L_E(0)<p\le L_I(-a)$.  In this case, $w=0$ by (2.d). Consequently, $u=a/k$ by the clearance condition \eqref{eq:clearance-external} and (1.b) gives $p=L_E(a/k)$.
\item In this case,  there exists a unique $u^*\in (0,a/k]$ such that 
$L_E(u^*)=L_I(-ku^*)$. It follows that $u=u^*$ optimizes \eqref{eq:opt-external1-p} and $w=a-ku^*$ optimizes \eqref{eq:opt-external2-p}.
It is straightforward to verify that  $\mathcal E$ is an equilibrium, and thus the ``if" statement holds.
To show the ``only if" statement, it suffices to notice that 
$L_E(u)=L_I(-ku)=p$ has to hold, where $p$ is an equilibrium price and $u$ is the optimizer to \eqref{eq:opt-external1-p}, and such $u$ and $p$ are unique. 
Next, we show the ``only if'' statement for $u^*< a/2k$. As the optimal position for each external agent is  $a-ku^*>a/2$, if more than two internal agents take the same loss, then the clearance condition \eqref{eq:clearance-external} does not hold. Hence, the internal agents have to take different losses. Moreover, as the optimal position for the internal agents are the same, the loss $X_i$ for each $i\in [n]$, must be shared by one internal and $k$ external agents. The equilibrium is preserved under permutations of allocations. Thus, we have the ``only if '' statement for $u^*< a/2k$. The ``if'' statement is obvious.
 
\item The ``if" statement can be verified  directly by  using Theorem \ref{th:equil} (ii). Next, we show the ``only if" statement. By (2.a), it is clear that the equilibrium price $p$ satisfies $p\le L_{I}^+(0)$. If $p<L_I^-(0)$, by (1.a), (2.c), and (2.d), the clearance condition \eqref{eq:clearance-external} cannot be satisfied. Thus, $p\ge L_I^-(0)$. By a similar argument, we have $p\le L_E(0)$. Hence, we get $p\in [L_{I}^-(0),  L_{E}(0)\wedge L_{I}^+(0)]$. From (1.a) and (2.b), we have $u=0$ and $w=a$ and thus the desired result.  \qedhere
  \end{enumerate}
\end{proof}

\begin{proof}[Proof of Proposition \ref{prop:equil-ES}]
The clearance condition \eqref{eq:clearance-internal} is  clearly satisfied. As ES is translation invariant, it suffices to show that $\mathbf w^{i*}$ minimizes $\ES_q(\mathbf w^{i}\cdot\mathbf X-\mathbf w^{i}\cdot\mathbf p^*) + c_i( \Vert \mathbf w^i \Vert -  \Vert \mathbf a^i\Vert)$ for $i\in[n]$. 
Write $r: \mathbf w\mapsto  \ES_q\(\mathbf w \cdot \mathbf X\)$ for  $\mathbf w=(w_1,\dots,w_n)\in [0,1]^n$.
By Corollary 4.2 of \cite{tasche2000conditional}, 
\begin{align*}
\frac{\partial 
r 
}{\partial w_i} \(\mathbf w\) &=\E\[X_i|A_{\mathbf w}\], \mbox{~~~}i\in[n],
\end{align*}
where $A_{\mathbf w} =\{\sum_{i=1}^n w_iX_i\ge \VaR_q\(\sum_{i=1}^n w_iX_i\)\}$.
Moreover, using convexity of $r$, we have (see \citet[p.~321]{MFE15})
\begin{align*}
r\( \mathbf w \)-  \mathbf w\cdot\mathbf p^* \ge\sum_{i=1}^{n} w_i\frac{\partial r  }{\partial w_i} (a_1,\dots,a_n ) -  \mathbf w\cdot\mathbf p^* = 0.
\end{align*} 
By Euler's rule (see  \citet[(8.61)]{MFE15}), the equality holds if $\mathbf w= \lambda (a_1,\dots,a_n)$  for any $\lambda>0$ .
By taking $\lambda=a_i /\sum_{j=1}^n a_j$, we get 
$\Vert \mathbf w \Vert =a_i= \Vert \mathbf a^i\Vert$, and hence $c_i ( \Vert \mathbf w \Vert -  \Vert \mathbf a^i\Vert)$ is minimized by $\mathbf w=\lambda (a_1,\dots,a_n)$. 
Therefore, $
\mathbf w^{i*}$ 
is an optimizer for each $i\in [n]$. 
%The uniqueness statement follows if $c_i$ is non-zero except at $0$ and $r$ is strictly convex but we did not show it.
\end{proof}

{

}

\end{document}